\def\ps@pprintTitle{%
 \let\@oddhead\@empty
 \let\@evenhead\@empty
 \def\@oddfoot{\centerline{\thepage}}%
 \let\@evenfoot\@oddfoot}
\date{}
\def\texpsfig#1#2#3{\vbox{\kern #3\hbox{\includegraphics{#1}\kern #2}}\typeout{(#1)}}
\newcommand{\zbox}[1]{
\noindent
\begin{center}
\resizebox{0.98\textwidth}{!}{
\framebox[14.5cm]{
\begin{minipage}{14cm}
#1
\end{minipage}
}
}
\end{center}
}
\theoremstyle{plain}
\newtheorem{thm}{Theorem}[section]
\newtheorem{rem}[thm]{Remark}
\theoremstyle{remark}
\theoremstyle{plain}
\newtheorem{lem}[thm]{Lemma}
\newtheorem{prop}[thm]{Proposition}
\theoremstyle{definition}
\newcommand{\e}{{\rm e}}        
\def\R{\mathbb{ R}}             
\def\E{\mathbb{ E}}             
\def\Q{\mathbb{ Q}}             
\def\P{\mathbb{ P}}             
\def\C{\mathbb{ C}}             
\renewcommand{\d}{{\rm d}}      
\def\dW{{\rm d}W}               
\def\x{\rm x}
\def\dt{{\rm d}t}
\def\N{{\mathcal N}}
\def\T{{\rm T}}
\def\W{\rm W}
\def\g{\Tilde{g}}
\def\xib{\bar{\xi}}
\def\Dt{\Delta t}
\def\V{\mathbf{V}}
\def\v{\mathbf{v}}
\def\M{\mathbf{M}}
\def\T{\mathbf{T}}
\def\a{\mathbf{a}}
\def\bxi{\boldsymbol{\xi}}
\def\bal{\boldsymbol{\alpha}}
\def\p{\mathbf{p}}
\def\S{\hat{S}}
\def\A{\hat{A}}
\def\H{\Tilde{H}}
\def\fx{\texttt{fx}}
\def\fl{\texttt{fl}}
\DeclareMathOperator*{\esssup}{ess\,sup}
\def\1{{\mathbbm{1}}}            
\theoremstyle{plain}
\newtheorem{assumption}{Assumption}[section]
\newtheorem{result}{Result}[section]
\numberwithin{equation}{section}	     
\title{On Pricing of Discrete Asian and Lookback Options\\ under the Heston Model}
\begin{document}

\author[1]{Leonardo Perotti\corref{cor1}}
\ead{L.Perotti@uu.nl}
\author[1,2]{Lech A.~Grzelak}
\ead{L.A.Grzelak@uu.nl}
\cortext[cor1]{Corresponding author.}
\address[1]{Mathematical Institute, Utrecht University, Utrecht, the Netherlands}
\address[2]{Financial Engineering, Rabobank, Utrecht, the Netherlands}

\begin{abstract}
    \noindent We propose a new, data-driven approach for efficient pricing of -- fixed- and floating-strike -- discrete arithmetic Asian and Lookback options when the underlying process is driven by the Heston model dynamics. The method proposed in this article constitutes an extension of \citep*{FastSamplingBridge}, where the problem of sampling from time-integrated stochastic bridges was addressed. The model relies on the Seven-League scheme \citep*{SevenLeague}, where artificial neural networks are employed to ``learn'' the distribution of the random variable of interest utilizing stochastic collocation points \cite{StochasticCollocation}. The method results in a robust procedure for Monte Carlo pricing. Furthermore, semi-analytic formulae for option pricing are provided in a simplified, yet general, framework. The model guarantees high accuracy and a reduction of the computational time up to thousands of times compared to classical Monte Carlo pricing schemes. 
\end{abstract}

\begin{keyword}
  Discrete Arithmetic Asian Option\sep Discrete Lookback Option\sep Heston Model\sep Stochastic Collocation (SC)\sep
  Artificial Neural Network (ANN)\sep Seven-League Scheme (7L).
\end{keyword}
\maketitle


\section{Introduction}  
\label{sec:intro}

A non-trivial problem in the financial field is the pricing of  path-dependent derivatives, as for instance \emph{Asian} and \emph{Lookback} options. The payoffs of such derivatives are expressed as functions of the underlying process monitored over the life of the option. The monitoring can either be \emph{continuous} or \emph{discrete}. Depending on different specifications of the underlying dynamics,  only a few case-specific theoretical formulae exist. For example, under the Black-Scholes and Merton log-normal dynamics, closed-form formulae were derived for \emph{continuously-monitored geometric} Asian options (see, for instance, \cite{DEVREESE2010780}). In the same model framework, \cite{goldman1979path,conze1991path} derived an analytic formula for the \emph{continuously-monitored} Lookback option, using probabilistic arguments such as the \emph{reflection principle}. 
However, options whose payoffs are \emph{discretely-monitored} are less tractable analytically, and so approximations are developed, as for the discrete Lookback option under the lognormal dynamics \cite{heynen1995lookback}.
Furthermore, stochastic volatility frameworks are even more challenging for the pricing task, where no applicable closed-form theoretical solutions are known.

Whenever an exact theoretical pricing formula is not available, a rich literature on numerical methods and approximations exists. The three main classes of approaches are Monte Carlo (MC) methods (e.g. \cite{KEMNA1990113}), partial differential equations (PDEs) techniques (see the extensive work in \cite{OptionPricingPDE} and \cite{vecer2002unified}), and Fourier-inversion based techniques (among many relevant works, we report \cite{benhamou2002fast,AsianOptionChfLevy}). 

Monte Carlo methods are by far the most flexible approaches, since they neither require any particular assumption on the underlying dynamics, nor on the targeted payoff. Furthermore, they benefit from a straightforward implementation based on the discretization of the time horizon as, for instance, the well-known \emph{Euler-Maruyama} scheme. The cost to be paid, however, is typically a significant computational time to get accurate results. PDE approaches are more problem-specific since they require the derivation of the partial differential equation which describes the evolution of the option value over time. Then, the PDE is usually solved using finite difference methods. Fourier-inversion-based techniques exploit the relationship between the probability density function (PDF) and the characteristic function (ChF) to recover the underlying transition density by means of Fast Fourier Transform (FFT). Thanks to the swift algorithm for FFT, such methods produce high-speed numerical evaluation, but they are often problem-specific, depending on the underlying dynamics. \textcolor{black}{A relevant example is \cite{corsaro2019general}, where a numerical method is proposed for the pricing of discrete arithmetic Asian options under stochastic volatility model dynamics. In the same group of techniques, we also refer to \cite{kirkby2020efficient} where a unified framework for the pricing of discrete Asian options is described, allowing for regime-switching jump diffusion underlying dynamics. A further example is \cite{leitao2021ctmc}, where a close approximation of the classic Heston model is proposed -- via the CTMC model -- which allows for a general SWIFT-based (see \cite{leitao2018swift}) pricing approach with application to exotic derivatives, as discrete Asian options.}

In this article, we propose a \textcolor{black}{data-driven}\footnote{\textcolor{black}{The meaning of ``data-driven'' here is the one given in \cite{SevenLeague}. The (empirical) distribution of interest is computed for a set of structural parameters and stored. Such synthetic ``data'' are used to ``drive'' the training of a suitable model.}} extension to MC schemes that allows for efficient pricing of \emph{discretely-monitored Asian and Lookback options}, without losing the flexibility typical of MC methods. We develop the methodology in the complex framework of the stochastic volatility model of Heston \cite{heston1993closed}, with an extensive application to the case of \emph{Feller condition} not satisfied. Under this dynamics, we show how to price fixed- or floating-strike discrete Asian and Lookback options. Moreover, the pricing model is applied also for the challenging task of pricing options with both a fixed- and a floating-strike component. We underline that the strengths of the method are its speed and accuracy, coupled with significant flexibility. The procedure is, indeed, independent of the underlying dynamics (it could be applied, for instance, for any stochastic volatility model), and it is not sensitive to the targeted payoff.

Inspired by the works in \cite{SevenLeague,FastSamplingBridge}, the method relies on the technique of Stochastic Collocation (SC) \cite{StochasticCollocation}, which is employed to accurately approximate the targeted distribution by means of piecewise polynomials. Artificial neural networks (ANNs) are used for fast recovery of the coefficients which uniquely determine the piecewise approximation. Given these coefficients, the pricing can be performed in a ``MC fashion'' sampling from the target (approximated) distribution and computing the numerical average of the discounted payoffs. Furthermore, in a simplified setting, we provide a semi-analytic formula that allows to directly price options without the need of sampling from the desired distribution. In both situations (MC and semi-analytic pricing) we report a significant computational speed-up, without affecting the accuracy of the result which remains comparable with the one of highly expensive MC methods.

The remainder of the paper is as follows. In \Cref{sec: GeneralFramework}, we formally define \emph{discrete arithmetic Asian and Lookback options}, as well as the model framework for the underlying process. Then, in \Cref{sec: 7L}, the pricing model is described. Two different cases are considered, in increasing order of complexity, to handle efficiently both unconditional sampling (\Cref{ssec: SpecialPricing}) and conditional sampling (\Cref{ssec: GeneralPricing}) for pricing of discrete arithmetic Asian and Lookback options. \Cref{sec: ErrorAnalysis} provides theoretical support to the given numerical scheme. The quality of the methodology is also inspected empirically with several numerical experiments, reported in \Cref{sec: NumericalExperiments}. \Cref{sec: Conclusions} concludes.

\section{Discrete arithmetic Asian and Lookback options}  
\label{sec: GeneralFramework}

In a generic setting, given the present time $t_0\geq 0$, the payoff at time $T>t_0$ of a discrete arithmetic Asian or Lookback option, with underlying process $S(t)$, can be written as:
\begin{equation}
\label{eqn: PathDependentPayoff}
H_{\omega}(T; S) = \max\Big(\omega\big( A(S)- K_1S(T)-K_2\big), 0\Big),
\end{equation}
where $A(S)\equiv A(S(t); t_0 < t \leq T)$ is a deterministic function of the underlying process $S$, the constants $K_1,K_2\geq 0$ control the floating- and fixed-strikes of the option, and $\omega=\pm 1$. Particularly, discrete arithmetic Asian and Lookback options are obtained by setting the quantity $A(S)$ respectively as follows:
\begin{align}
\label{eqn: ALambdaDefinitions}
    A(S):=\frac{1}{N}\sum_{n\in I} S(t_n), \qquad\qquad A(S):=\omega\max_{n\in I} \omega S(t_n),
\end{align}
with $I=\{1,\dots,N\}$ a set of indexes, $t_1 < t_2 < \dots < t_N = T$, a discrete set of future monitoring dates, and $\omega$ as in (\ref{eqn: PathDependentPayoff}).

Note that in both the cases of discrete arithmetic Asian and Lookback options, $A(S)$ is expressed as a deterministic transformation of the underlying process' path, which is the only requirement to apply the proposed method. Therefore, in the paper, we refer always to the class of \emph{discrete arithmetic Asian options}, often just called \emph{Asian options}, for simplicity. However, the theory holds for both classes of products. Actually, the pricing model applies to any product with a \emph{path-dependent European-type payoff}, requiring only a different definition of $A(S)$.

\subsection{Pricing of arithmetic Asian options and Heston framework}
\label{ssec: PricingAsianOptions}

This section focuses on the risk-neutral pricing of arithmetic -- fixed- and floating-strike -- Asian options, whose payoff is given in (\ref{eqn: PathDependentPayoff}) with $A(S)$ in (\ref{eqn: ALambdaDefinitions}).
By setting $K_1=0$ or $K_2=0$, we get two special cases: the fixed- or the floating-strike arithmetic Asian option.
From \Cref{eqn: PathDependentPayoff}, for $K_1=0$, the simplified payoff of a \emph{fixed-strike arithmetic Asian option} reads:
\begin{equation}
\label{eqn: fixedPayoff}
    H_{\omega}^{\fx}(T;S) = \max\Big(\omega\big( A(S)-K_2\big), 0\Big),
\end{equation}
therefore, with the risk-neutral present value:
\begin{equation}
\label{eqn: FixPresentValue}
    V_{\omega}^{\fx}(t_0)=\frac{M(t_0)}{M(T)} \E_{t_0}^{\Q}\Big[\max\big(\omega\big( A(S)-K_2\big), 0\big)\Big],
\end{equation}
where we assume the money-savings account $M(t)$ to be defined through the deterministic dynamics $\d M(t)=rM(t)\dt$ with constant interest rate $r\geq 0$.
For $K_2=0$, however, the payoff in \Cref{eqn: PathDependentPayoff}, becomes the one of a \emph{floating-strike arithmetic Asian option}:
\begin{equation}
\label{eqn: floatPayoff}
    H_{\omega}^{\fl}(T;S) = \max\Big(\omega\big( A(S)-K_1S(T)\big), 0\Big).
\end{equation}
The payoff in (\ref{eqn: floatPayoff}) is less tractable when compared to the one in (\ref{eqn: fixedPayoff}), because of the presence of two dependent stochastic unknowns, namely $S(T)$ and $A(S)$.
However, a similar representation to the one in \Cref{eqn: FixPresentValue} can be achieved, allowing for a unique pricing approach in both cases. By a change of measure from the risk-neutral measure $\Q$ to the measure $\Q^S$ associated with the num\'eraire $S(t)$, i.e. the \emph{stock measure}, we prove the following proposition.
\begin{prop}[Pricing of floating-strike arithmetic Asian option under the stock measure]
\label{prop: floatAsianStockMeasure}
Under the stock measure $\Q^S$, the value at time $t_0 \geq 0$ of an arithmetic floating-strike Asian option, with maturity $T>t_0$ and future monitoring dates $t_n$, $n\in\{1,\dots,N\}$, reads: 
\begin{equation}
\label{eqn: floatPresentValue}
    V_{\omega}^{\mathtt{fl}}(t_0)=S(t_0)\E_{t_0}^{S}\Big[ \max\big(\omega\big( A^{\mathtt{fl}}(S)-K_1\big), 0\big)\Big],
\end{equation}
with ${A}^{\mathtt{fl}}(S)$, defined as:
\begin{align*}
\label{eqn: ALambdafloatDefinitions}
    A^{\mathtt{fl}}(S):=A\bigg(\frac{S(\cdot)}{S(T)}\bigg)=\frac{A(S)}{S(T)},  
\end{align*}
where $A(S)$ is defined in (\ref{eqn: ALambdaDefinitions}).
\end{prop}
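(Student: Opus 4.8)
The plan is to derive \eqref{eqn: floatPresentValue} by a standard change of num\'eraire from the money-savings account $M(t)$ to the stock $S(t)$, followed by an elementary rescaling inside the payoff. Writing the risk-neutral present value associated with the payoff \eqref{eqn: floatPayoff} as $V_{\omega}^{\mathtt{fl}}(t_0)=\frac{M(t_0)}{M(T)}\,\E^{\Q}_{t_0}\!\big[\max(\omega(A(S)-K_1S(T)),0)\big]$, I would first invoke the fundamental pricing identity: for any $\F_T$-measurable, integrable payoff $X$ delivered at $T$,
\[
M(t_0)\,\E^{\Q}_{t_0}\!\left[\frac{X}{M(T)}\right] \;=\; S(t_0)\,\E^{S}_{t_0}\!\left[\frac{X}{S(T)}\right],
\]
where $\E^{S}$ denotes expectation under the stock measure $\Q^{S}$, characterised on $\F_T$ by the Radon--Nikodym derivative $\left.\frac{\d\Q^{S}}{\d\Q}\right|_{\F_T}=\frac{S(T)M(t_0)}{S(t_0)M(T)}$. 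Taking $X=\max(\omega(A(S)-K_1S(T)),0)$ immediately yields $V_{\omega}^{\mathtt{fl}}(t_0)=S(t_0)\,\E^{S}_{t_0}\!\big[\max(\omega(A(S)-K_1S(T)),0)/S(T)\big]$.

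The second step is purely algebraic and uses that $S(T)>0$ almost surely under the Heston dynamics. Since $c=1/S(T)>0$, the positive homogeneity $\max(cy,0)=c\,\max(y,0)$ gives
\[
\frac{\max\big(\omega(A(S)-K_1S(T)),0\big)}{S(T)} \;=\; \max\!\Big(\omega\Big(\tfrac{A(S)}{S(T)}-K_1\Big),0\Big).
\]
It then remains to identify $A(S)/S(T)$ with $A^{\mathtt{fl}}(S)=A\big(S(\cdot)/S(T)\big)$, i.e. to check that the path functional $A$ is positively homogeneous of degree one. This is immediate for the arithmetic average $A(S)=\frac1N\sum_{n\in I}S(t_n)$, and for the running extremum $A(S)=\omega\max_{n\in I}\omega S(t_n)$ it again follows from $1/S(T)>0$, since scaling a finite family of (positive) values by a positive constant scales $\omega\max_{n\in I}\omega(\cdot)$ accordingly. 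Substituting back produces exactly \eqref{eqn: floatPresentValue}.

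I expect the only genuine subtlety to be the justification of the measure change: one must ensure that the discounted stock $\{S(t)/M(t)\}_{t\ge t_0}$ is a \emph{true} $\Q$-martingale -- not merely a local martingale -- so that $\Q^{S}$ is a bona fide probability measure equivalent to $\Q$ on $\F_T$ and the identity above holds. Under the Heston model this is known to be the case, so in the write-up I would simply cite this fact; the remaining ingredients (num\'eraire-invariance of arbitrage-free prices and the homogeneity bookkeeping for both admissible choices of $A$) are routine.
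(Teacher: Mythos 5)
Your proposal is correct and follows essentially the same route as the paper's own proof: a change of num\'eraire to the stock measure via the Radon--Nikodym derivative $\frac{\d\Q^{S}}{\d\Q}=\frac{S(T)}{S(t_0)}\frac{M(t_0)}{M(T)}$, followed by pulling the positive factor $1/S(T)$ inside the $\max$ and using the degree-one positive homogeneity of $A$. Your explicit remarks on the homogeneity of both admissible choices of $A$ and on the discounted stock being a true (not merely local) $\Q$-martingale are careful additions that the paper leaves implicit, but they do not change the argument.
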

\begin{proof}
For a proof, see \ref{ap: StockMeasureForfloatstrike}.
\end{proof}

Both the representations in \Cref{eqn: FixPresentValue,eqn: floatPresentValue} can be treated in the same way. This means that the present value of both a fixed- and a floating-strike Asian option can be computed similarly, as stated in the following proposition \textcolor{black}{(see, e.g., \cite{henderson2002equivalence})}.
\begin{prop}[Symmetry of fixed- and floating-strike Asian option present value]
\label{prop: FullyFixfloatOption}
Let us consider the process $S(t)$ and the money-savings account $M(t)$, for $t\geq t_0$. Then, the same representation holds for the value at time $t_0$ of both fixed- or floating-strike Asian options, with maturity $T>t_0$, underlying process $S(t)$, and future monitoring dates  $t_n$, $n\in\{1,\dots,N\}$.
The present value is given by:
\begin{equation}
\label{eqn: FixfloatPresentValue}
    V_{\omega}^{\lambda}(t_0)=
    \begin{cases}
    \frac{M(t_0)}{M(T)}\E_{t_0}^{\Q}\big[ \max\big(\omega\big( A(S)-K_2\big), 0\big)\big],\quad\text{for }\: \lambda = \mathtt{fx}, \\[3pt]
    S(t_0)\E_{t_0}^{S}\big[ \max\big(\omega\big( A^{\mathtt{fl}}(S)-K_1\big), 0\big)\big],\quad\text{for }\: \lambda=\mathtt{fl}.
    \end{cases}
\end{equation}
\end{prop}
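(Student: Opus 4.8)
The plan is to recognise that both halves of the asserted representation have already been obtained, so the proof amounts to assembling them and making the common structure explicit. For $\lambda=\mathtt{fx}$ the formula in \eqref{eqn: FixfloatPresentValue} is nothing but the no-arbitrage risk-neutral valuation \eqref{eqn: FixPresentValue} of the fixed-strike payoff \eqref{eqn: fixedPayoff}, taking the money-savings account $M(t)$ as num\'eraire; no additional argument is required. For $\lambda=\mathtt{fl}$ the formula is precisely the content of \Cref{prop: floatAsianStockMeasure}: the change of num\'eraire from $M(t)$ to the stock $S(t)$, together with $A^{\mathtt{fl}}(S)=A(S)/S(T)$, converts the floating-strike payoff \eqref{eqn: floatPayoff} into $S(t_0)\,\E^{S}_{t_0}\!\big[\max(\omega(A^{\mathtt{fl}}(S)-K_1),0)\big]$.

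First I would record the fixed-strike line as an immediate consequence of \eqref{eqn: FixPresentValue}. Then I would invoke \Cref{prop: floatAsianStockMeasure} for the floating-strike line. Finally I would spell out the ``symmetry'': in both cases the price equals a strictly positive, $\F_{t_0}$-measurable (indeed deterministic given $S(t_0)$) prefactor -- $M(t_0)/M(T)=\e^{-r(T-t_0)}$ in one case, $S(t_0)$ in the other -- multiplied by an expectation, under the appropriate pricing measure ($\Q$ resp. $\Q^S$), of a payoff of the single generic form $\max(\omega(X-c),0)$, where $X$ is a scalar random variable ($A(S)$ resp. $A^{\mathtt{fl}}(S)$) and $c$ a constant ($K_2$ resp. $K_1$). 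This is the sense in which fixed- and floating-strike Asian options are ``symmetric'' (cf. \cite{henderson2002equivalence}), and it is what allows one and the same numerical / semi-analytic machinery to be applied downstream.

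The only points needing a line of care are routine: the prefactors are positive and deterministic, and in the floating case the transformation $S(\cdot)\mapsto S(\cdot)/S(T)$ is well defined because $S(T)>0$ almost surely under the Heston dynamics, so $A^{\mathtt{fl}}(S)$ is a genuine scalar random variable and $\E^{S}_{t_0}[\,\cdot\,]$ makes sense. I therefore expect no substantive obstacle: the real work sits in \eqref{eqn: FixPresentValue} and in \Cref{prop: floatAsianStockMeasure}, and the present statement is just the observation that the two resulting expressions are instances of one template.
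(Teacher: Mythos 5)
Your proposal is correct and coincides with the paper's own argument: the paper likewise disposes of the statement by direct comparison of \eqref{eqn: FixPresentValue} with \eqref{eqn: floatPresentValue}, i.e.\ by citing the risk-neutral valuation for the fixed-strike case and \Cref{prop: floatAsianStockMeasure} for the floating-strike case. Your additional remarks on the positivity of the prefactors and on $S(T)>0$ a.s.\ are harmless elaborations of the same route.
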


\begin{proof}
The proof follows by direct comparison between \Cref{eqn: FixPresentValue,eqn: floatPresentValue}.
\end{proof}

When $K_1\neq 0$ and $K_2\neq 0$, \Cref{eqn: PathDependentPayoff} is the payoff of a fixed- \emph{and} floating-strikes arithmetic Asian option. Its present value does not allow any simplified representation, and we write it as the expectation of the discounted payoff under the risk-neutral measure $\Q$:
\begin{equation}
\label{eqn: GeneralPresentValue}
    V_{\omega}(t_0)=\frac{M(t_0)}{M(T)}\E_{t_0}^{\Q}\Big[\max\big(\omega\big({A}(S)-K_1S(T)-{K}_2\big), 0\big)\Big].
\end{equation}

By comparing \Cref{eqn: FixfloatPresentValue,eqn: GeneralPresentValue} a difference in the two settings is unveiled. \Cref{eqn: FixfloatPresentValue} is characterized by a unique unknown stochastic quantity $A(S)$, whereas in (\ref{eqn: GeneralPresentValue}) an additional term appears, namely the stock price at final time, $S(T)$. Furthermore, the two stochastic quantities in \eqref{eqn: GeneralPresentValue} are not independent.
This might suggest that different procedures should be employed for the different payoffs. In particular, in a MC setting, to value (\ref{eqn: FixfloatPresentValue}) we only have to sample from the \emph{unconditional} distribution of $A(S)$; while in (\ref{eqn: GeneralPresentValue}) the MC scheme requires dealing with both the sampling of $S(T)$ and the \emph{conditional} sampling of $A(S)|S(T)$.

Let us define the stochastic volatility dynamics of Heston for the underlying stochastic process $S(t)$, with initial value $S(t_0) = {S}_0$, through the following system of stochastic differential equations (SDEs):
\begin{align}
\label{eqn: StockDynamics}
		\d {S}(t)&=r{S}(t)\dt+\sqrt{v(t)}{S}(t)\d{W}_x(t), & {S}(t_0)&={S}_0,\\
\label{eqn: VolatilityDynamics}
\d v(t)&=\kappa\left(\bar{v}-v(t)\right)\dt+\gamma\sqrt{v(t)}\d{W}_v(t), & v(t_0)&=v_0,
\end{align}
with $r,\kappa, \bar{v}, v_0\geq 0$, $\gamma > 0$ the constant rate, the speed of mean reversion, the long-term mean of the variance process, the initial variance, and the volatility-of-volatility, respectively. $W_x(t)$ and $W_v(t)$ are Brownian Motions (BMs) under the risk-neutral measure $\Q$ with correlation coefficient \textcolor{black}{$\rho\in[-1,1]$}, i.e., $\d W_x(t)\d W_v(t)=\rho \dt$.\footnote{\textcolor{black}{We remark that in most real applications (with a few exceptions such as some commodities and FX rates) the correlation is negative. Furthermore, the phenomenon of ``moment explosion'' for certain choices of Heston parameters involving positive correlation is discussed in \cite{andersen2007moment}.}}
The dynamics in (\ref{eqn: StockDynamics}) and (\ref{eqn: VolatilityDynamics}) are defined in the risk-neutral framework. However, \Cref{prop: floatAsianStockMeasure} entails a different measure framework, whose dynamics still fall in the class of stochastic volatility model of Heston, with adjusted parameters \textcolor{black}{(see \Cref{prop: HestonUnderStockMeasure})}.

\section{Swift numerical pricing using deep learning}  
\label{sec: 7L}

This section focuses on the efficient pricing of discrete arithmetic Asian options in a MC setting. The method uses a Stochastic Collocation \cite{StochasticCollocation} (SC) based approach to approximate the target distribution. Then, artificial neural networks (ANNs) ``learn'' the proxy of the desired distribution, allowing for fast recovery \cite{SevenLeague,FastSamplingBridge}.

\subsection{``Compressing'' distribution with Stochastic Collocation}
\label{ssec: StochasticCollocation}

In the framework of MC methods, the idea of SC -- based on the \emph{probability integral transform}\footnote{Given the two random variables $X,Y$, with CDFs $F_X,F_Y$, it holds: $F_X(X)\overset{\d}{=}F_Y(Y)$.} -- is to approximate the relationship between a ``computationally expensive'' random variable, say $A$, and a ``computationally cheap'' one, say $\xi$. The approximation is then used for sampling. A random variable is ``expensive'' if its inverse CDF is not known in analytic form, and needs to be computed numerically. With SC, the sampling of $A$ is performed at the cost of sampling $\xi$ (see \cite{StochasticCollocation}). Formally, the following mapping is used to generate samples from $A$:
\begin{equation}
\label{eqn: SCMC2}
    A \overset{\d}{=} F_A^{-1}(F_{\xi}(\xi))=: g(\xi)\approx\g(\xi),
\end{equation}
with $F_A$ and $F_{\xi}$ being respectively the CDFs of $A$ and $\xi$, and the function $\g$ a suitable, easily evaluable approximation of $g$. The reason why we prefer $\g$ to $g$ is that, by definition, every evaluation of $g$ requires the numerical inversion of $F_A$, the CDF of $A$. 

Many possible choices of $\g$ exist. In \cite{StochasticCollocation,FastSamplingBridge}, $\g$ is an $(M-1)$-degree polynomial expressed in Lagrange basis, defined on \emph{collocation points} (CPs) $\bxi:=\{\xi_k\}_{k=1}^M$ computed as Gauss-Hermite quadrature nodes, i.e.:
\begin{equation}
\label{eqn: LagrangePolynomial}
    \g(x):=\sum_{k=1}^M a_k \ell_k(x), \qquad \ell_k(x):=\prod_{\substack{1\leq j\leq M\\j \neq k}} \frac{x-\xi_j}{\xi_k-\xi_j}, \quad k=1,\dots, M,
\end{equation}
where the coefficients $\mathbf{a}:=\{a_k\}_{k=1}^M$ of the polynomial in the Lagrange basis representation, called \emph{collocation values} (CVs), are derived by imposing the system of equations:
\begin{equation}
\label{eqn: ConditionsLagrangePoly}
    g(\xi_k)=\g(\xi_k)=:a_k, \quad k=1,\dots, M,
\end{equation}
which requires only $M$ evaluations of $g$.

In this work, we define $\g$ as a piecewise polynomial. Particularly, we divide the domain $\R$ of the random variable $\xi$ (which for us is standard normally distributed\footnote{The two main reasons for $\xi$ being standard normal are the availability of such a distribution in most of the computing tools, and the ``similarity'' between a standard normally r.v. and (the logarithm of) $A(S)$ (see \cite{StochasticCollocation} for more details).}) in three regions. In each region, we define $\g$ as a polynomial. In other words, given the partition of the real line:
\begin{equation}
\label{eqn: DefinitionInterExtrapDomains}
    \Omega_{-}\cup \Omega_{M}\cup \Omega_{+}:=(-\infty, -\bar\xi) \cup [-\bar\xi, \bar\xi] \cup (\bar\xi, +\infty),
\end{equation}
for a certain $\bar\xi>0$, $\g$ is specified as:
\begin{equation}
\label{eqn: DefinitionTildeG}
    \g(\xi):=
    g_-(\xi)\cdot\1_{\Omega_{-}}(\xi)+
    g_M(\xi)\cdot\1_{\Omega_{M}}(\xi)+
    g_+(\xi)\cdot\1_{\Omega_{+}}(\xi),
\end{equation}
where $\1_{(\cdot)}$ is the indicator function, and $g_-, g_M, g_+$ are suitable polynomials.
To ensure high accuracy in the approximation, $g_M$ is defined as a Lagrange polynomial of high-degree $M-1$. The CPs $\bxi$, which identify the Lagrange basis in (\ref{eqn: LagrangePolynomial}), are chosen as Chebyshev nodes in the bounded interval $\Omega_M=[-\xib, \xib]$ \cite{ChebyshevPOP2018}. Instead, the CVs $\mathbf{a}$ are defined as in (\ref{eqn: ConditionsLagrangePoly}). 
The choice of Chebyshev nodes allows for increasing the degree of the interpolation (i.e. the number of CPs and CVs), avoiding the \emph{Runge's phenomenon} within the interval $\Omega_M$ \cite{trefethen2019approximation}. However, the behavior of $g_M$ outside $\Omega_M$ is out of control. We expect the high-degree polynomial $g_M$ to be a poor approximation of $g$ in $\Omega_-$ and $\Omega_+$. Therefore, we define $g_-$ and $g_+$ as linear (or at most quadratic) polynomials, with degree $M_--1$ and $M_+-1$, built on the \emph{extreme} CPs of $\bxi$. 

Summarizing, $g_-$, $g_M$ and $g_+$ are all defined as Lagrange polynomials:
\begin{equation}
    \begin{aligned}
        g_{(\cdot)}(x)&:=\sum_{k\in I_{(\cdot)}} a_k \ell_k(x), \qquad \ell_k(x):=\prod_{\substack{j\in I_{(\cdot)}\\j \neq k}} \frac{x-\xi_j}{\xi_k-\xi_j}, \quad k\in I_{(\cdot)},
    \end{aligned}
\end{equation}
where the sets of indexes for $g_-$, $g_M$ and $g_+$ are $I_-=\{1,2\}$, $I_M=\{1,\dots,M\}$ and $I_+=\{M-1,M\}$, respectively (if a quadratic extrapolation is preferred, we get $I_-=\{1,2,3\}$, and $I_+=\{M-2,M-1,M\}$).

\begin{rem}[``Compressed'' distributions]
\label{rem: Compression}
The SC technique is a tool to ``compress'' the information regarding $A$ in (\ref{eqn: SCMC2}), into a small number of coefficients, the CVs $\mathbf{a}$. Indeed, the relationship between $A$ and $\mathbf{a}$ is bijective, provided the distribution of the random variable $\xi$, and the corresponding CPs $\boldsymbol\xi$ (or, equivalently, the Lagrange basis in (\ref{eqn: LagrangePolynomial})), are specified a priori.
\end{rem}

\subsection{Semi-analytical pricing of fixed- or floating-strike Asian options}
\label{ssec: SpecialPricing}

Let us first consider the pricing of the fixed- or the floating-strike Asian options. Both the products allow for the same representation in which the only unknown stochastic quantity is $A^{\lambda}(S)$, $\lambda\in\{\fx,\fl\}$, as given in Proposition \ref{prop: FullyFixfloatOption}. For the sake of simplicity, in the absence of ambiguity, we call $A^\lambda(S)$ just $A$. 

For pricing purposes, we can benefit from the SC technique presented in the previous section, provided we know the map $\g$ (or, equivalently, the CVs $\mathbf{a}$), i.e.:
\begin{equation}
\label{eqn: ApproxOptionValue}
\begin{aligned}
    V_{\omega}(t_0)&= C\: \E\Big[ \max\big(\omega\big( A-K\big), 0\big)\Big]\\
    &\approx C\: \E\Big[ \max\big(\omega\big( \g(\xi)-K\big), 0\big)\Big]=:\Tilde{V}_{\omega}(t_0),
\end{aligned}
\end{equation}
where $\xi$ is a standard normally distributed random variable, and $C$ is a constant coherent with \Cref{prop: FullyFixfloatOption}.
We note, that $\Tilde{V}_{\omega}(t_0)$ is the expectation of (the positive part of) polynomials of a standard normal distribution. Hence, a semi-analytic formula exists, in a similar fashion as the one given in \citep*{TruncatedMoments}.

\begin{prop}[Semi-analytic pricing formula]
\label{prop: semianalyticPrice}
Let $\Tilde{V}_\omega(t_0)$ (and $C$) be defined as in \Cref{eqn: ApproxOptionValue}, with $\g$ defined in (\ref{eqn: DefinitionTildeG}). Assume further that $\bal_{-}$, $\bal_M$ and $\bal_{+}$ are the coefficients in the canonical basis of monomials for the three polynomials $g_{-}$, $g_M$ and $g_{+}$ respectively of degree $M_{-}-1$, $M-1$ and $M_{+}-1$. Then, using the notation $a\vee b = \max(a,b)$, the following semi-analytic pricing approximation holds:
\begin{equation*}
\begin{aligned}
    \frac{\Tilde{V}_\omega(t_0)}{\omega C} 
    =& \Bigg[\sum_{i=0}^{M_{-\omega}-1}\omega^i\alpha_{-\omega,i}m_i(\omega c_K, -\xib \vee \omega c_K)\Bigg]\cdot \left(F_\xi(-\xib \vee \omega c_K) - F_\xi(\omega c_K)\right)\\
    +& \Bigg[\sum_{i=0}^{M-1}\omega^i\alpha_{M,i}m_i(-\xib \vee \omega c_K, \xib \vee \omega c_K)\Bigg]\cdot \left(F_\xi(\xib \vee \omega c_K) - F_\xi(-\xib \vee \omega c_K)\right)\\
    +& \Bigg[\sum_{i=0}^{M_{\omega}-1}\omega^i\alpha_{\omega,i}m_i(\xib \vee \omega c_K, +\infty)\Bigg]\cdot \left(1-F_\xi(\xib \vee \omega c_K)\right) - K \cdot \left(1-F_\xi(\omega c_K)\right),
\end{aligned}
\end{equation*}
where $F_{\xi}$ is the CDF of a standard normal random variable, $\xi$, $m_i(a,b):=\E[\xi^i|a \leq \xi \leq b]$\footnote{A recursive formula for the computation of $m_i(a,b)$ is given in \ref{ap: analyticPricing}.}, $c_K$ satisfies $K=\g(c_K)$, and $\omega = \pm 1$ according to the call/put case.
\end{prop}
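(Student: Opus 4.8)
The plan is to reduce the expectation $\Tilde V_\omega(t_0)/(\omega C)=\E[(\omega(\g(\xi)-K))^+]$ to a finite sum of truncated moments of the standard normal distribution. First I would rewrite the positive part using an indicator: since $\omega(\g(\xi)-K)\geq 0$ is equivalent to $\xi\geq \omega c_K$ when $\omega=1$ (and, after checking monotonicity of $\g$ on each branch, the same inequality $\omega\xi\geq\omega\cdot(\omega c_K)$ captures the exercise region in the put case as well), we get
\begin{equation*}
    \frac{\Tilde V_\omega(t_0)}{\omega C}=\E\big[(\g(\xi)-K)\,\1_{\{\omega\xi\geq\omega c_K\}}\big]
    =\E\big[\g(\xi)\,\1_{\{\omega\xi\geq\omega c_K\}}\big]-K\,\P(\omega\xi\geq\omega c_K).
\end{equation*}
The second term is immediately $-K(1-F_\xi(\omega c_K))$, matching the last term in the statement. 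For the first term I would substitute the piecewise definition \eqref{eqn: DefinitionTildeG}, so that $\g(\xi)\,\1_{\{\omega\xi\geq\omega c_K\}}$ splits into three pieces, one for each of $\Omega_-,\Omega_M,\Omega_+$ intersected with the exercise half-line $\{\omega\xi\geq\omega c_K\}$.

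The key bookkeeping step is to describe these three intersections as intervals with explicit endpoints. This is exactly where the $a\vee b$ notation enters: the left endpoint of the surviving part of $\Omega_-=(-\infty,-\xib)$ is $\omega c_K$ if $\omega=1$ and the exercise set starts to the right, but one must verify — treating the $\omega=1$ and $\omega=-1$ cases and the location of $c_K$ relative to $\pm\xib$ — that in all cases the relevant sub-interval of $\Omega_-$ is $(\omega c_K,\,-\xib\vee\omega c_K)$, of $\Omega_M$ is $(-\xib\vee\omega c_K,\,\xib\vee\omega c_K)$, and of $\Omega_+$ is $(\xib\vee\omega c_K,\,+\infty)$, with the convention that an ``interval'' $(a,b)$ with $a\ge b$ is empty and contributes nothing. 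Once the intervals are fixed, on each of them $\g$ equals the corresponding polynomial $g_{(\cdot)}(\xi)=\sum_i \alpha_{(\cdot),i}\xi^i$ in the monomial basis, and I would use
\begin{equation*}
    \E\big[\xi^i\,\1_{\{a\le\xi\le b\}}\big]=m_i(a,b)\,\big(F_\xi(b)-F_\xi(a)\big)
\end{equation*}
to turn each piece into $\big(\sum_i\alpha_{(\cdot),i}\,\E[\xi^i\1_{\{a\le\xi\le b\}}]\big)$. Collecting the three contributions reproduces the three bracketed sums in the claimed formula.

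The remaining detail is the factor $\omega^i$ attached to $\alpha_{(\cdot),i}$ and the index on $\alpha$ being $-\omega$ in the first line and $\omega$ in the third. This arises if one prefers to carry out the computation after the affine change of variable $\xi\mapsto\omega\xi$, which turns the exercise set into a fixed right half-line $\{\eta\ge \omega c_K\}$ but multiplies the argument of each polynomial by $\omega$; since $\omega^2=1$, substituting $\xi=\omega\eta$ gives $g_{(\cdot)}(\omega\eta)=\sum_i\alpha_{(\cdot),i}\omega^i\eta^i$, and the region $\Omega_-$ for $\xi$ becomes $\Omega_+$ for $\eta$ (hence the $-\omega$ vs.\ $\omega$ label swap, which is vacuous when $\omega=1$). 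Whether one does the substitution or argues directly by cases, one obtains the stated identity; I would present it via the direct case analysis and simply remark that the $\omega^i$ and label-swap are the residue of the put case.

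\textbf{Main obstacle.} Nothing here is analytically deep — the semi-analyticity is inherited from the fact that truncated Gaussian moments $m_i(a,b)$ satisfy a closed recursion (integration by parts, deferred to \ref{ap: analyticPricing}). The one genuinely delicate point is the case analysis establishing that the three exercise sub-intervals have endpoints given uniformly by the $\vee$-expressions, including the degenerate cases where $c_K$ falls inside $[-\xib,\xib]$ or outside it, and the sign flip for $\omega=-1$; getting all four combinations to collapse into the single displayed formula is the part that requires care rather than cleverness.
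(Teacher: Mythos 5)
Your proposal is correct and follows essentially the same route as the paper: monotonicity of $\g$ to identify the exercise region, the substitution $\xi\mapsto\omega\xi$ (equivalently, the symmetry of $f_\xi$) to reduce to a fixed right half-line, the piecewise split with $\vee$-endpoints, and the truncated-moment identity $\E\big[\xi^i\1_{[a,b]}(\xi)\big]=m_i(a,b)\big(F_\xi(b)-F_\xi(a)\big)$. One caution: the $\vee$-endpoints describe the three sub-intervals correctly only \emph{after} the change of variable $\eta=\omega\xi$ (for $\omega=-1$ the raw intersection of $\{\xi\le c_K\}$ with $\Omega_-$ is $(-\infty,\min(-\xib,c_K))$, not $(\omega c_K,-\xib\vee\omega c_K)$), so the case analysis sketched in your second paragraph must be carried out in the $\eta$-variable, as your final paragraph in fact does.
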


\begin{proof}
For proof of the previous proposition, see \ref{ap: analyticPricing}.
\end{proof}

We note also that \Cref{prop: semianalyticPrice} uses as input in the pricing formula the coefficients in the canonical basis, not the ones in the Lagrange basis, $\a$, in (\ref{eqn: ConditionsLagrangePoly}).
\begin{rem}[Change of basis]
\label{rem: ChangeBasis}
Given a Lagrange basis identified by $M$ collocation points $\bxi$, any $(M-1)$-degree polynomial $g(\xi)$ is uniquely determined by the corresponding $M$ coefficients $\mathbf{a}$. A linear transformation exists that connects the $M$ coefficients in the Lagrange basis with the $M$ coefficients $\boldsymbol{\alpha}$ in the canonical basis of monomials. In particular, it holds:
\begin{equation*}
    \M\boldsymbol{\alpha}=\boldsymbol{a},
\end{equation*}
with $\M\equiv\M(\bxi)$ a $M\times M$ Vandermonde matrix with element $\M_{k,i}:=\xi_k^{i-1}$ in position $(k,i)$. The matrix $\M$ admits an inverse; thus, the coefficients $\bal$ in the canonical basis are the result of matrix-vector multiplication, provided the coefficients $\a$ in the Lagrange basis are known. Moreover, since the matrix $\M$ only depends on $\bxi$, its inverse can be computed a priori once the  CPs $\bxi$ are fixed.
\end{rem}

Proposition \ref{prop: semianalyticPrice} provides a semi-analytic formula for the pricing of fixed- or floating-strike Asian options. Indeed, it requires the inversion of the map $\g$ which typically is not available in analytic form. On the other hand, since both the CPs $\bxi$ and the CVs $\a$ are known, a proxy of $\g^{-1}$ is easily achievable by interpolation on the pairs of values $(a_k,\xi_k)$, $k=1,\dots,M$.

The last problem is to recover the CVs $\a$ (which identify $\g$) in an accurate and fast way.
We recall that, for $k=1,\dots,M$, each CV $a_k$ is defined in terms of the exact map $g$ and the CP $\xi_k$ by the relationship:
\begin{equation*}
    a_k:=g(\xi_k)=F_A^{-1}\big(F_{\xi}(\xi_k)\big).
\end{equation*}
The presence of $F_A^{-1}$ makes it impossible to directly compute $\a$ efficiently. On the other hand, by definition, the CVs $\a$ are quantiles of the random variable $A\equiv A(S)$, which depends on the parameters $\p$ of the underlying process $S$. As a consequence, there must exist some unknown mapping $H$ which links $\p$ to the corresponding $\a$. We approximate such a mapping from synthetic data setting a regression problem, which is solved with an ANN $\Tilde{H}$ (in the same fashion as in \citep*{SevenLeague,FastSamplingBridge}). We have the following mapping:
\begin{equation*}
    \begin{aligned}
        \p \mapsto \a:=H(\p)\approx\Tilde{H}(\p), \qquad \p\in\Omega_{\p},\:\: \a\in\Omega_{\a},
    \end{aligned}
\end{equation*}
with $\Omega_{\p}$ and $\Omega_{\a}$ being the spaces of the underlying model parameters and of the CVs, respectively, while the ANN $\H$ is the result of an optimization process on a synthetic training set\footnote{The synthetic data are generated via MC simulation, as explained in \Cref{sec: NumericalExperiments}.}:
\begin{equation}
\label{eqn: TrainingSet}
    \mathbf{T}=\Big\{(\p_i,\a_i): i\in\{1,\dots,N_{\texttt{pairs}}\}\Big\}.
\end{equation}
The pricing procedure is summarized in the following algorithm.
\zbox{
{\bf Algorithm: Semi-analytic pricing}
\label{alg: semiAnaliticPricing}
\begin{enumerate}
\itemsep0.0em
    \item Fix the $M$ collocation points $\bxi$.
    \item Given the parameters $\p$, approximate the $M$ collocation values, i.e. $\a\approx\Tilde{H}(\p)$.
    \item Given $\a$, compute the coefficients $\bal_-$, $\bal_M$ and $\bal_+$ for $g_-$, $g_M$ and $g_+$ (see \Cref{rem: ChangeBasis}).
    \item Given $K$ and $\a$, compute $c_K$ of \Cref{prop: semianalyticPrice} interpolating $\g^{-1}$ on $(\a,\bxi)$.
    \item Given the coefficients $\bal_-$, $\bal_M$, $\bal_+$, and $c_K$, use \Cref{prop: semianalyticPrice} to compute $\Tilde{V}_\omega(t_0)$.
\end{enumerate}
}

\subsection{Swift Monte Carlo pricing of fixed- and floating-strikes Asian options}
\label{ssec: GeneralPricing}

Let us consider the case of an option whose payoff has both a fixed- and a floating-strike. The present value of such a derivative is given by:
\begin{equation*}
    V_{\omega}(t_0)=\frac{M(t_0)}{M(T)}\E_{t_0}^{\Q}\Big[\max\big(\omega\big({A}(S)-K_1S(T)-{K}_2\big), 0\big)\Big],
\end{equation*}
hence the price of the option is a function of the two dependent quantities $A(S)$ and $S(T)$. This means that, even in a MC setting, the dependency between $A(S)$ and $S(T)$ has to be fulfilled.
Therefore, a different methodology with respect to the one proposed in the previous section needs to be developed.

Due to the availability of efficient and accurate sampling techniques for the underlying process $S$ at a given future time $T$ (we use the COS method \cite{TheBook} enhanced with SC \cite{StochasticCollocation}, and we call it COS-SC), the main issue is the sampling of the conditional random variable $A(S)|S(T)$.
This task is addressed in the same fashion as it is done in \citep*{FastSamplingBridge}, where ANNs and stochastic collocation are applied for the efficient sampling from {time-integral} of {stochastic bridges}\footnote{By \emph{stochastic bridge}, we mean any stochastic process conditional to both its initial and final values.}, namely $\int_{t_0}^T S(t) \dt$ given the value of $S(T)$.
The underlying idea here is the same since the random variable $A(S)$ is conditional to $S(T)$. 
Especially, in the previous sections we pointed out that the distribution of $A(S)$ has an unknown parametric form which depends on the set of Heston parameters $\p$. Similarly, we expect the distribution of $A|S(T)=\S$ to be parametric into the ``augmented'' set of parameters $\p_{\S}:=\p\cup \{\S\}$. Hence, there exists a mapping $H$ which links $\p_{\S}$ with the CVs, $\a_{\S}$, corresponding to the \emph{conditional} distribution $A(S)|S(T)=\S$. We approximate $H$ by means of a suitable ANN $\Tilde{H}$, getting the the following mapping scheme:
\begin{equation*}
    \begin{aligned}
        \p_{\S} \mapsto \a_{\S}:=H(\p_{\S})\approx\Tilde{H}(\p_{\S}), \qquad \p_{\S}\in\Omega_{\p_{S}},\:\: \a_{\S}\in\Omega_{\a_{S}},
    \end{aligned}
\end{equation*}
where $\Omega_{\p_{S}}$ and $\Omega_{\a_{S}}$ are respectively the spaces of the underlying model parameters (augmented with $\S$) and of the CVs (corresponding to the conditional distribution $A(S)|S(T)$), and $\H$ is the result of a regression problem on a suitable training set $\T$ (see \Cref{eqn: TrainingSet}).
We propose a first \emph{brute force} sampling scheme. 
\zbox{
{\bf Algorithm: Brute force conditional sampling and pricing}
\label{alg: NaiveConditionalSampling}
\begin{enumerate}
\itemsep0.0em
    \item Fix the $M$  collocation points $\bxi$.
    \item Given the parameters $\p$, for $j=1,\dots,N_{\texttt{paths}}$, repeat:
\begin{enumerate}
\itemsep0.0em
    \item generate the sample $\S_j$ from $S(T)$ (e.g. with COS-SC method \cite{TheBook} and \cite{StochasticCollocation});
    \item given $\p_{\S_j}$, approximate the $M$ conditional CVs, i.e. $\a_{\S_j}\approx\Tilde{H}(\p_{\S_j})$;
    \item given $\a_{\S_j}$, use SC to generate the conditional sample $\A_j$.
\end{enumerate}
\item Given the pairs $(\S_j,\A_j)$, for $j=1,\dots,N_{\texttt{paths}}$, and any desired $(K_1,K_2)$, evaluate:
\begin{equation*}
    V_{\omega}(t_0)\approx\frac{1}{N_{\texttt{paths}}}\frac{M(t_0)}{M(T)}\sum_{j=1}^{N_{\texttt{paths}}}\max\big(\omega\big({\A_j}-K_1\S_j-{K}_2\big), 0\big).
\end{equation*}
\end{enumerate}
}

Nonetheless, the brute force sampling proposed above requires $N_{\texttt{paths}}$ evaluations of $\H$ (see 2(b) in the previous algorithm).
This is a massive computational cost, even if a single evaluation of an ANN is high-speed. We can, however, benefit from a further approximation. We compute the CVs using $\H$ only at specific \emph{reference} values for $S(T)$. Then, the intermediate cases are derived utilizing (linear) interpolation. We choose a set of $Q$ equally-spaced values $\{S^1,S^2,\dots,S^Q\}$ for $S(T)$, defined as:
\begin{equation}
\label{eqn: ReferenceQuantiles}
    S^q:=S_{\min}+\frac{q-1}{Q-1}(S_{\max}-S_{\min}), \qquad q=1,\dots,Q,
\end{equation}
where the boundaries are quantiles corresponding to the probabilities $p_{\min},p_{\max}\in(0,1)$, i.e.  $S_{\min}:=F_{S(T)}^{-1}(p_{\min})$ and $S_{\max}:=F_{S(T)}^{-1}(p_{\max})$.

Calling $\p^q = \p_{S^q}$, and $\a^q = \a_{S^q}$, $q=1,\dots,Q$, we compute the grid $\mathbf{G}$ of \emph{reference} CPs, with only $Q$ ANN evaluations, namely:
\begin{equation}
\label{eqn: ReferenceGridCPs}
    \begin{bmatrix}
    \a^1\\
    \cdots \\
    \a^q\\
    \cdots \\
    \a^Q\\
    \end{bmatrix}=
    \begin{bmatrix}
    H(\p^1)\\
    \cdots \\
    H(\p^q)\\
    \cdots \\
    H(\p^Q)\\
    \end{bmatrix}\approx
    \begin{bmatrix}
    \H(\p^1)\\
    \cdots \\
    \H(\p^q)\\
    \cdots \\
    \H(\p^Q)\\
    \end{bmatrix}=:\mathbf{G},
\end{equation}
where $\a^q$, $H(\p^q)$ and $\H(\p^q)$, $q=1,\dots,Q$, are row vectors.
The interpolation on $\mathbf{G}$ is much faster than the evaluation of $\H$. Therefore, the grid-based conditional sampling results more efficient than the brute force one, particularly when sampling a huge number of MC samples.

The algorithm for the grid-based sampling procedure, to be used instead of point 2. in the previous algorithm, is reported here.

\zbox{
{\bf Algorithm: Grid-based conditional sampling}
\label{alg: NaiveConditionalSampling}
\begin{enumerate}
\itemsep0.0em
    \item[2.1.] Fix the \emph{boundary} probabilities $p_{\min},p_{\max}\in(0,1)$ and compute the \emph{boundary} quantiles $S_{\min}:=F_{S(T)}^{-1}(p_{\min})$ and $S_{\max}:=F_{S(T)}^{-1}(p_{\max})$ (e.g. with the COS method \cite{TheBook}).
    \item[2.2.] Compute the \emph{reference} values $S^q:=S_{\min}+\frac{q-1}{Q-1}(S_{\max}-S_{\min})$, $q=1,\dots,Q$.
    \item[2.3.] Given the ``augmented'' parameters $\p^q$, evaluate $Q$ times $\H$ to compute $\mathbf{G}$ (see (\ref{eqn: ReferenceGridCPs})).
    \item[2.4.] Given the parameters $\p$ and the grid $\mathbf{G}$, for $j=1,\dots,N_{\texttt{paths}}$, repeat:
\begin{enumerate}
\itemsep0.0em
    \item generate the sample $\S_j$ from $S(T)$ (e.g. with COS-SC method \cite{TheBook} and \cite{StochasticCollocation});
    \item given $\S_j$, approximate the $M$ conditional CVs, i.e. $\a_{\S_j}$, by interpolation in $\mathbf{G}$;
    \item given $\a_{\S_j}$, use SC to generate the conditional sample $\A_j$.
\end{enumerate}
\end{enumerate}
}

\section{Error analysis}
\label{sec: ErrorAnalysis}
This section is dedicated to the assessment and discussion of the error introduced by the main approximations used in the proposed pricing method.
Two primary sources of error are identifiable. The first error is due to the SC technique: in \Cref{ssec: StochasticCollocation} the exact map $g$ is approximated by means of the piecewise polynomial $\g$. 
The second one is a regression error, which is present in both \Cref{ssec: SpecialPricing,ssec: GeneralPricing}. ANNs $\H$ are used instead of the exact mappings $H$. For the error introduced by the SC technique, we bound the ``$L^2$-distance'', $\epsilon_{SC}$, between the exact distribution and its SC proxy showing that $\g\1_{\Omega_M}=g_M$ in $\Omega_M$ is an analytic function. $\epsilon_{SC}$ is used to provide a direct bound on the option price error, $\epsilon_P$. On the other hand, regarding the approximation of $H$ via $\H$ we provide a general convergence result for \texttt{ReLU}-architecture ANN, i.e. ANN with Rectified Linear Units as activation functions.

\subsection{Stochastic collocation error using Chebyshev polynomials}
Let us consider the error introduced in the methodology using the SC technique (\Cref{ssec: StochasticCollocation}), and investigate how this affects the option price. We restrict the analysis to the case of fixed- or floating-strike discrete arithmetic Asian and Lookback options (\Cref{ssec: SpecialPricing}). We define the error $\epsilon_P$ as the ``$L^1$-distance'' between the real price $V_{\omega}(t_0)$ and its approximation $\Tilde{V}_{\omega}(t_0)$, i.e.:
\begin{equation}
\label{eqn: PricingError}
    \epsilon_P:=\big|\Tilde{V}_\omega(t_0)-V_\omega(t_0)\big|.
\end{equation}

Given the standard normal kernel $\xi\sim\N(0,1)$, we define the SC error as the (squared) $L^2$-norm of $g-\g$, i.e.:
\begin{equation}
\label{eqn: L2error}
    \epsilon_{SC}:=\E\big[(g - \g)^2(\xi)\big].
\end{equation}
We decompose $\epsilon_{SC}$ accordingly to the piecewise definition of $\g$, namely:
\begin{equation*}
\begin{alignedat}{6}
    \epsilon_{SC} &= \E\big[(g\1_{\Omega_{-}}-g_-)^2(\xi)\big]
    &+\E\big[(g\1_{\Omega_{M}}-g_M)^2(\xi)\big]
    &+\E\big[(g\1_{\Omega_{+}}-g_+)^2(\xi)\big]\\
    &=:\epsilon_- +\epsilon_M +\epsilon_+.
\end{alignedat}
\end{equation*}
with the domains $\Omega_{(\cdot)}$ defined in \Cref{eqn: DefinitionInterExtrapDomains}, i.e., for $\xib>0$:
\begin{equation*}
    \Omega_{-}= (-\infty,-\xib),\qquad\Omega_{M}= [-\xib, \xib],\qquad\Omega_{+}= (\xib,+\infty). 
\end{equation*}

To deal with the ``extrapolation'' errors $\epsilon_-$ and $\epsilon_+$, we formulate the following assumption.
\begin{assumption}
\label{ass: ExtrapDecay}
The functions $(g\1_{\Omega_-} - g_{-})^2$ and $(g\1_{\Omega_+} - g_{+})^2$ are $O(\exp{x^2/2})$. Equivalently, $g^2\1_{\Omega_-}$ and $g^2\1_{\Omega_+}$ are $O(\exp{x^2/2})$ (since $g_{_-}$ and $g_{_+}$ are polynomials). \end{assumption}
Given \Cref{ass: ExtrapDecay} and the fact that $\xi\sim\N(0,1)$\footnote{Th PDF $f_{\xi}$ works as a dumping factor in $\epsilon_-=\E[(g\1_{\Omega_{-}}-g_-)^2(\xi)]$ and $\epsilon_+=\E[(g\1_{\Omega_{+}}-g_+)^2(\xi)]$.}, then the ``extrapolation'' errors $\epsilon_-$ and $\epsilon_+$ vanish, with exponential rate, as $\xib$ tends to infinity, i.e. $\epsilon_-=\epsilon_-(\xib)$, $\epsilon_+=\epsilon_+(\xib)$, and:
\begin{equation}
\label{eqn: ExtrapErrors}
\begin{cases}
\epsilon_-(\xib)\rightarrow 0,\\
\epsilon_+(\xib)\rightarrow 0,
\end{cases}
    \quad \text{for}\quad \xib\rightarrow +\infty.
\end{equation}
An illustration of the speed of convergence is reported in \Cref{fig: ExtrapError}. \Cref{fig: ExtrapError}a shows that the growth of $g\1_{\Omega_+}$ is (much) less than exponential (consistently with \Cref{ass: ExtrapDecay}), whereas \Cref{fig: ExtrapError}b illustrates the exponential decay of $\epsilon_-$ and $\epsilon_+$ when $\xib$ increases.

Therefore, if $\xib$ is taken sufficiently big, the error $\xi_{SC}$ in (\ref{eqn: L2error}) is mainly driven by the ``interpolation'' error $\epsilon_M$, whose estimate is connected to error bounds for Chebyshev polynomial interpolation, and it is the focus of the next part.

\begin{figure}[t]%
    \centering
    \subfloat[\centering]{{\includegraphics[width=6.8cm]{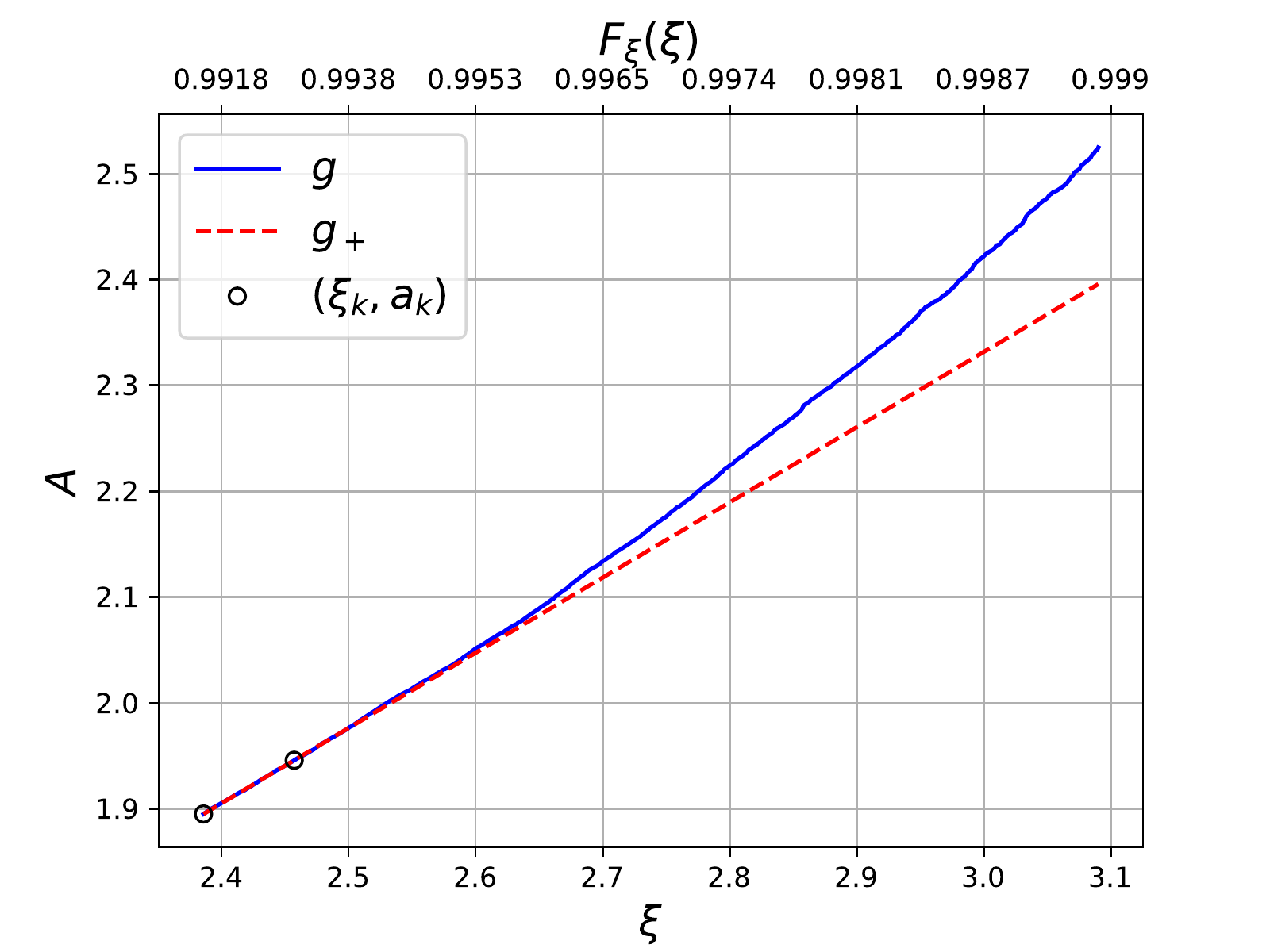} }}%
    ~
    \subfloat[\centering]{{\includegraphics[width=6.8cm]{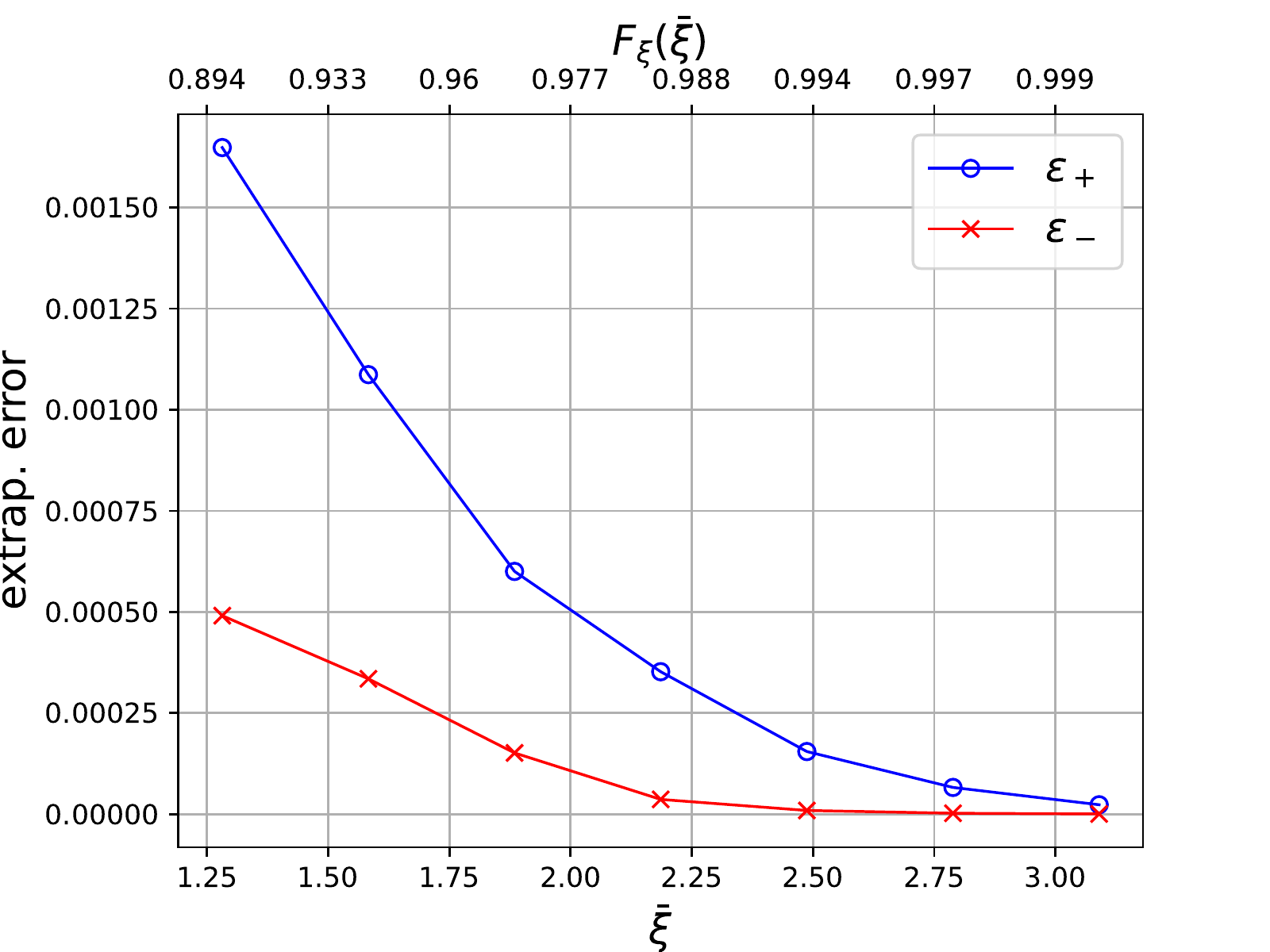} }}%
    \caption{\footnotesize Left: (slow) growth of $g\1_{\Omega_+}$ compared to the linear extrapolation $g_+$. Right: exponential decays of $\epsilon_-(\xib)$ and $\epsilon_+(\xib)$ in (\ref{eqn: ExtrapErrors}) when the $\xib$ increases. The upper x-axis represents the probability $F_\xi(\xib)$.}%
    \label{fig: ExtrapError}%
\end{figure}

\begin{thm}[Error bound for analytic function \citep*{ChebyshevPOP2018,ErrorBoundsCheb}]
\label{thm: BoundsAnalyticFunctions}
Let $f$ be a real function on $[-1, 1]$ and $f_M$ be its $(M-1)$-degree polynomial interpolation built on Chebyshev nodes $\xi_k:=\cos{\frac{k - 1}{M - 1}\pi}$, $k=1,\dots,M$. If $f$ has an analytic extension in a Bernstein ellipse $\mathcal{B}$ with foci $\pm 1$ and major and minor semiaxis lengths summing up to $\varrho > 1$ such that $\sup_{\mathcal{B}} |f| \leq \frac{\varrho - 1}{4}\overline{C}$ for some constant $\overline{C}>0$, then, for each $M \geq 1$, the following bound holds: 
\begin{equation*}
    ||f - f_M||_{L^{\infty}([-1, 1])}\leq \overline{C}\varrho^{1-M}.
\end{equation*}
\end{thm}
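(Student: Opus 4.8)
The plan is to follow the classical route for bounding how well analytic functions on $[-1,1]$ are approximated by polynomial interpolants, which rests on the Chebyshev series of $f$ together with the Joukowski map $x=\tfrac12(z+z^{-1})$. This map sends the circle $\{|z|=r\}$ onto the ellipse with foci $\pm1$ whose semiaxis lengths sum to $r$, so ``$f$ analytic inside the Bernstein ellipse $\mathcal B$ of parameter $\varrho$'' becomes ``$f\big(\tfrac{z+z^{-1}}{2}\big)$ analytic on the annulus $\varrho^{-1}<|z|<\varrho$''. First I would expand $f(x)=\sum_{k\ge0}a_kT_k(x)$ with $T_k$ the Chebyshev polynomials and $a_k=\tfrac{2}{\pi}\int_0^{\pi}f(\cos\theta)\cos(k\theta)\,\d\theta$ (the term $k=0$ carrying the usual extra factor $\tfrac12$). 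Substituting $z=e^{\i\theta}$ and exploiting the $z\mapsto z^{-1}$ symmetry of the integrand writes $a_k$ as a contour integral over the unit circle of $f\big(\tfrac{z+z^{-1}}{2}\big)$ against $z^{k-1}$; since this integrand extends analytically to the annulus $\varrho^{-1}<|z|<\varrho$, the contour may be shrunk toward $\{|z|=\varrho^{-1}\}$, and the elementary ``length $\times$ sup'' estimate gives
\begin{equation*}
|a_k|\le 2\,\varrho^{-k}\,\sup_{\mathcal B}|f|,\qquad k\ge 1 .
\end{equation*}

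Next I would control the interpolation error by the tail of this series through the aliasing identity. The nodes $\xi_k=\cos\frac{(k-1)\pi}{M-1}$, $k=1,\dots,M$, are the Chebyshev--Lobatto points, on which every $T_j$ with $j\ge M$ agrees with $\pm T_{j'}$ for a unique $j'<M$; hence the degree-$(M-1)$ interpolant $f_M$ is obtained from the partial sum $\sum_{j<M}a_jT_j$ by redistributing the tail coefficients $a_k$, $k\ge M$, onto lower-order Chebyshev polynomials, so that $f-f_M=\sum_{k\ge M}a_k(T_k-\widetilde T_k)$ with $\widetilde T_k$ an aliased $\pm T_{j'}$. Using $\|T_j\|_{L^{\infty}([-1,1])}=1$ this yields $\|f-f_M\|_{L^{\infty}([-1,1])}\le 2\sum_{k\ge M}|a_k|$. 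Combining with the coefficient bound and summing the geometric series,
\begin{equation*}
\|f-f_M\|_{L^{\infty}([-1,1])}\le 4\,\sup_{\mathcal B}|f|\sum_{k\ge M}\varrho^{-k}=\frac{4\,\sup_{\mathcal B}|f|\;\varrho^{1-M}}{\varrho-1},
\end{equation*}
and substituting the hypothesis $\sup_{\mathcal B}|f|\le\tfrac{\varrho-1}{4}\overline C$ produces exactly $\|f-f_M\|_{L^{\infty}([-1,1])}\le\overline C\,\varrho^{1-M}$.

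The geometric summation and the final substitution are routine; the two places that demand care are (i) the contour deformation, which requires $f$ to be analytic on the open Bernstein ellipse and bounded by its supremum over the closure $\mathcal B$ so that the integration radius may be taken arbitrarily close to $\varrho^{-1}$, and (ii) the aliasing bookkeeping on the Chebyshev--Lobatto grid — tracking the folding $k\mapsto j'$ together with the signs and the endpoint/$k=0$ normalizations, so that the constant $2$ in the tail estimate (hence the $4$ that matches the assumed bound $\tfrac{\varrho-1}{4}\overline C$) comes out correctly. Both ingredients are standard and, if one prefers, can be imported verbatim from \cite{ChebyshevPOP2018,ErrorBoundsCheb} rather than reproved here.
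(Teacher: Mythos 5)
The paper offers no proof of this theorem --- it simply cites \cite{ChebyshevPOP2018,ErrorBoundsCheb} --- and your argument is precisely the standard one from those references: the Joukowski/contour-integral bound $|a_k|\leq 2\varrho^{-k}\sup_{\mathcal B}|f|$ on the Chebyshev coefficients, followed by the aliasing identity on the Chebyshev--Lobatto grid and the geometric tail sum, with the constants landing exactly on $\overline C\varrho^{1-M}$. The proof is correct (the only cosmetic remark is that on the Lobatto grid the aliasing $T_{2(M-1)\nu\pm k}\mapsto T_k$ carries no sign change, but your bound $\|T_k-\widetilde T_k\|_\infty\leq 2$ holds either way), so it matches the approach the paper delegates to the literature.
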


Since $g\1_{\Omega_M}$ is approximated by means of the $(M-1)$-degree polynomials $g_M$, built on Chebyshev nodes, to apply \Cref{thm: BoundsAnalyticFunctions}, we verify the required assumptions, namely the boundedness of $g\1_{\Omega_M}$ in $\Omega_M$ and its analyticity. 

We recall that:
\begin{equation}
\label{eqn: gDefinition}
    g=F_A^{-1}\circ F_{\xi},
\end{equation}
with $F_A$ and $F_{\xi}$ the CDFs of $A(S)$ and $\xi$, respectively.
Hence, the boundedness on the compact domain $\Omega_M$ is satisfied because the map $g$ is monotone increasing (as a composition of monotone increasing functions), and defined everywhere in $\Omega_M$. 

Furthermore, since the CDF of a standard normal, $F_{\xi}$, is analytic, from (\ref{eqn: gDefinition}) it follows that $g$ is analytic if $F^{-1}_A$ is analytic. The analyticity of $F^{-1}_A$ is fulfilled if $F_A$ is analytic and $F_A'=f_A$ does not vanish in the domain $\Omega_{M}$. Observe that, by restricting the domain to $\Omega_M$, the latter condition is trivially satisfied because we are ``far'' from the tails of $A(S)$ (corresponding to the extrapolation domains $\Omega_-$ and $\Omega_+$), and $F_A'$ do not vanish in other regions than the tails.

On the contrary, proving that $F_A$ is analytic is not trivial because of the lack of an explicit formula for $F_A$. However, it is beneficial to represent $F_A$ through the \emph{characteristc function} (ChF) of $A(S)$, $\phi_A$. For that purpose, we use a well-known inversion result.
\begin{thm}[ChF Inversion Theorem]
\label{res: ChFInversionTheorem}
Let us denote by $F$ and $\phi$ the CDF and the ChF of a given real-valued random variable defined on $\R$. Then, it is possible to retrieve $F$ from $\phi$ according to the inversion formula:
\begin{equation*}
\label{eqn: ChFInversionTheorem}
F(x) - F(0) = \frac{1}{2\pi} \int_{-\infty}^{+\infty} \phi(u)\frac{1-\e^{-i ux}}{i u} \d u,
\end{equation*}
with the integral being understood as a principal value.
\end{thm}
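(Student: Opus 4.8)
The plan is to recognise the stated identity as the special case $a=0,\,b=x$ of the classical \emph{L\'evy inversion theorem}, and to prove it directly. I would start from the truncated principal-value integral
\[
I_T := \frac{1}{2\pi}\int_{-T}^{T}\phi(u)\,\frac{1-\e^{-iux}}{iu}\,\du ,
\]
insert the defining representation $\phi(u)=\int_{\R}\e^{iuy}\,\d F(y)$, and interchange the order of integration. The inner $u$-integral then collapses to a Dirichlet-type sine integral whose limit is an explicit step function of $y$; integrating that step function against $\d F$ reproduces $F(x)-F(0)$.

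First I would justify the interchange (Fubini) on the finite product domain $[-T,T]\times\R$. The key estimate is the elementary bound $\big|(1-\e^{-iux})/(iu)\big|\le |x|$, which follows from $|1-\e^{-iux}|\le |ux|$; hence the integrand is bounded in modulus by $|x|$, and since $F$ is a probability measure (finite) and $[-T,T]$ has finite length, Fubini applies unconditionally. After the interchange,
\[
I_T=\frac{1}{2\pi}\int_{\R}\left(\int_{-T}^{T}\frac{\e^{iuy}-\e^{iu(y-x)}}{iu}\,\du\right)\d F(y).
\]

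Next I would evaluate the inner integral. Splitting $\e^{iua}=\cos(ua)+i\sin(ua)$, the cosine part integrates to zero by oddness, leaving $\int_{-T}^{T}\e^{iua}/(iu)\,\du = 2\int_0^T \sin(ua)/u\,\du$. Using the classical Dirichlet limit $\int_0^T \sin(ua)/u\,\du \to \tfrac{\pi}{2}\operatorname{sgn}(a)$ as $T\to\infty$, the inner integral converges pointwise in $y$ to $\pi\big[\operatorname{sgn}(y)-\operatorname{sgn}(y-x)\big]$, which equals $2\pi$ on the open interval between $0$ and $x$ and vanishes outside its closure. To pass the limit inside the outer $\d F$-integral I would invoke dominated convergence; the admissible dominating constant is supplied by the uniform bound $\sup_{T>0,\,a\in\R}\big|\int_0^T \sin(ua)/u\,\du\big|<\infty$, which is integrable against the finite measure $\d F$. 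Taking $T\to\infty$ then yields $I_T\to \tfrac12\int_{\R}[\operatorname{sgn}(y)-\operatorname{sgn}(y-x)]\,\d F(y)=F(x)-F(0)$.

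The main obstacle is precisely the exchange of the limit $T\to\infty$ with the $\d F$-integration: unlike Fubini at fixed $T$, here one cannot dominate the integrand by anything decaying in $u$, so the argument rests entirely on the \emph{uniform} (in both $T$ and $a$) boundedness of the truncated sine integral, a non-trivial but standard property of the partial integrals of the Dirichlet kernel. A secondary subtlety is the behaviour at the endpoints $y=0$ and $y=x$: there $\operatorname{sgn}$ contributes half-weight, so the identity in fact produces the symmetrised value $\tfrac12[F+F^-]$ at $0$ and $x$, and it collapses to the stated $F(x)-F(0)$ exactly when $0$ and $x$ are continuity points of $F$ --- which is automatic for the atomless distribution of $A(S)$ relevant here.
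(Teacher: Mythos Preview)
Your outline is the standard L\'evy inversion argument and is correct in all its essential steps: the Fubini justification via the bound $|(1-\e^{-iux})/(iu)|\le |x|$, the reduction of the inner integral to a Dirichlet sine integral, the dominated-convergence passage using the uniform boundedness of the partial sine integrals, and the careful handling of the endpoint atoms. The paper, however, does not supply its own proof of this theorem at all; it simply defers to the reference \citep*{KendallAdvancedStatistics}. Your sketch is precisely the classical argument one finds in such textbooks, so there is nothing to compare: you have filled in what the paper deliberately omitted.
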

\begin{proof}
For detailed proof, we refer to \citep*{KendallAdvancedStatistics}.
\end{proof}
Thanks to \Cref{res: ChFInversionTheorem}, we have that if $\phi_A$ is analytic, so it is $F_A$ (as long as the integral in (\ref{eqn: ChFInversionTheorem}) is well-defined).
Thus, the problem becomes to determine if $\phi_A$ is analytic. We rely on a characterization of \emph{entire}\footnote{\emph{Entire} functions are \emph{complex analytic} functions in the whole complex plane $\C$.} ChFs, which can be used in this framework to show that in the cases of -- fixed- or floating-strike -- discrete arithmetic Asian and Lookback options, the (complex extension of the) function $\phi_A$ is analytic in a certain domain.

\begin{thm}[Characterization of entire ChFs \citep*{EntireChF}]
\label{thm: CharacterizationEntireChF}
Let $Y$ be a real random variable. Then, the complex function $\phi(z):=\E[e^{izY}]$, $z\in \C$, is entire if and only if the absolute moments of $Y$ exist for any order, i.e. $\E[|Y|^k]< +\infty$ for any $k\in \mathbb{N}$, and the following limit holds:
\begin{equation}
\label{eqn: LimitMomentsChF}
    \lim_{k\rightarrow +\infty} \left(\frac{|\E[Y^k]|}{k!}\right)^{\frac{1}{k}}=0.
\end{equation}
\end{thm}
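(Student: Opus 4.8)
The plan is to prove \Cref{thm: CharacterizationEntireChF} via the standard theory of growth of entire functions, matching the Taylor coefficients of $\phi$ at the origin against its behaviour on all of $\C$. First I would establish the \emph{forward direction}: suppose $\phi(z)=\E[e^{izY}]$ is entire. Formally differentiating under the expectation $k$ times at $z=0$ gives $\phi^{(k)}(0)=i^k\E[Y^k]$, so the Taylor coefficients of $\phi$ about $0$ are $c_k=\phi^{(k)}(0)/k!=i^k\E[Y^k]/k!$. Because $\phi$ is entire, its Taylor series converges on all of $\C$, hence $\limsup_{k\to\infty}|c_k|^{1/k}=0$, which is exactly \Cref{eqn: LimitMomentsChF}. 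The point needing care is the justification of differentiating under the integral sign when we do not yet know the moments are finite; this is handled by using that $\phi$, being entire, is in particular $C^\infty$ on $\R$, and $\phi^{(k)}(t)=\int (iy)^k e^{ity}\,\d\mu_Y(y)$ can be justified via Cauchy's integral formula applied to $\phi$ on a circle around $t\in\R$, which simultaneously yields $\E[|Y|^k]<+\infty$ for every $k$ (by a dominated-convergence / Fubini argument on the contour integral representation, using that $\phi$ is bounded on compact sets). So analyticity \emph{forces} all absolute moments to be finite.

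Next I would prove the \emph{converse}: assume $\E[|Y|^k]<+\infty$ for all $k\in\mathbb N$ and that the limit in \Cref{eqn: LimitMomentsChF} holds. Define the candidate power series $\psi(z):=\sum_{k=0}^{\infty}\frac{i^k\E[Y^k]}{k!}z^k$. The hypothesis \Cref{eqn: LimitMomentsChF} combined with the Cauchy--Hadamard formula gives that this series has infinite radius of convergence, so $\psi$ is an entire function. It remains to identify $\psi$ with $\phi(z)=\E[e^{izY}]$. For real $z=t$ one writes $e^{ity}=\sum_{k}\frac{(ity)^k}{k!}$ and wants to interchange sum and expectation; this is legitimate provided $\sum_k \frac{|t|^k}{k!}\E[|Y|^k]<+\infty$, which again follows from \Cref{eqn: LimitMomentsChF} (the terms $\frac{|t|^k\E[|Y|^k]}{k!}$ have $k$-th root tending to $0$, hence are summable for every $t$). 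Thus $\phi(t)=\psi(t)$ for all real $t$, and since the left side extends to an entire function equal to $\psi$ on $\R$, the function $\phi$ itself is entire.

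The \textbf{main obstacle} I anticipate is the forward direction's rigor: making precise that an entire $\phi$ has \emph{all} absolute moments finite, not merely all the derivatives $\phi^{(k)}(0)$ formally defined. The clean way is the contour-integral route — fix $R>0$, use $\phi^{(k)}(0)=\frac{k!}{2\pi i}\oint_{|z|=R}\frac{\phi(z)}{z^{k+1}}\,\d z$ together with the pointwise bound $|\phi(z)|=|\E[e^{izY}]|\le \E[e^{|\Im z|\,|Y|}]$, and then run a monotone-convergence argument letting a truncation $Y_n:=Y\1_{\{|Y|\le n\}}$ increase to $Y$, transferring the finiteness of $\E[|Y_n|^k]$ (trivially finite) to $\E[|Y|^k]$ via Fatou. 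One must verify $\E[e^{|\Im z||Y|}]<+\infty$ on a strip, which holds because an entire ChF is in particular analytic in a neighbourhood of $\R$, and a ChF analytic in a horizontal strip is classically known to have exponential moments there — this is a small lemma (via the same Cauchy estimates) that I would either cite from \citep*{EntireChF} or include as a one-line remark. Everything else (Cauchy--Hadamard, Fubini/Tonelli for the series interchange, the identity-theorem step) is routine.
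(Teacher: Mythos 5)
The paper gives no argument for \Cref{thm: CharacterizationEntireChF} --- it is stated with a pointer to \citep*{EntireChF} --- so there is no internal proof to match yours against; what can be judged is whether your sketch would close. Your converse direction (finite absolute moments plus \Cref{eqn: LimitMomentsChF} imply entirety) is sound and is the classical argument: Cauchy--Hadamard gives infinite radius of convergence for $\psi(z)=\sum_k i^k\E[Y^k]z^k/k!$, and the series/expectation interchange is justified by absolute summability. The one detail you skip is that \Cref{eqn: LimitMomentsChF} is stated for the \emph{signed} moments $|\E[Y^k]|$ while the domination argument needs $\bigl(\E[|Y|^k]/k!\bigr)^{1/k}\to 0$; this follows from the even moments (where $\E[|Y|^{2k}]=\E[Y^{2k}]$) together with the Cauchy--Schwarz interpolation $\E[|Y|^{2k+1}]\le\bigl(\E[|Y|^{2k}]\,\E[|Y|^{2k+2}]\bigr)^{1/2}$, and should be said. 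Note that this is the only direction the paper actually uses (in \Cref{prop: EntireConditionalChF}), and for that purpose your proposal suffices.

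The forward direction is where the genuine gap sits, and your proposed repair is circular. To write $\phi^{(k)}(0)=\frac{k!}{2\pi i}\oint_{|z|=R}\phi(z)z^{-k-1}\,\d z$ with the integrand represented as $\E[e^{izY}]$ on a circle of positive radius, you must already know that $\E[e^{|\Im z|\,|Y|}]<+\infty$ there --- i.e.\ exactly the exponential-moment conclusion you are trying to reach. The truncation/Fatou step has the same problem: monotone convergence gives $\E[|Y|^k]=\lim_n\E[|Y_n|^k]$, but you have no uniform bound on $\E[|Y_n|^k]$ unless you can relate the ChFs of the truncations back to $\phi$, which is not done. The classical bridge you are missing is the real-variable lemma that if $\phi$ is $2k$ times differentiable at $0$ then $\E[Y^{2k}]\le|\phi^{(2k)}(0)|<\infty$, proved by applying Fatou to the $2k$-th symmetric difference $h^{-2k}\sum_j\binom{2k}{j}(-1)^j\phi((k-j)h)=\E\bigl[\bigl(\tfrac{2\sin(hY/2)}{h}\bigr)^{2k}(-1)^k e^{\mathrm{i}\theta}\bigr]$-type identities as $h\to 0$; only after that do the Cauchy estimates on the Taylor coefficients of the entire extension give $\E[|Y|^{2k}]\le C\,(2k)!\,r^{-2k}$ and hence exponential moments in a strip. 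Calling this ``a one-line remark via the same Cauchy estimates'' understates it: it is the substantive content of the theorem. (If instead one reads the hypothesis as ``$\E[e^{izY}]$ is finite for every $z\in\C$ and entire,'' the forward direction becomes trivial --- $|Y|^k\le k!\,e^{|Y|}$ and $\E[e^{\pm Y}]<\infty$ already give all moments --- and your contour machinery is unnecessary; under the harder, classical reading, your sketch does not close it.)
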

\begin{proof}
A reference for proof is given in \citep*{EntireChF}.
\end{proof}

When dealing with the Heston model, there is no closed-form expression for the moments of the underlying process $S(t)$, as well as for the moments of its transform $A(S)$. Nonetheless, a \emph{conditional} case can be studied and employed to provide a starting point for a convergence result. 

\begin{prop}[Conditional ChF $\phi_{A|\V}$ is entire]
\label{prop: EntireConditionalChF}
Let us define the $N$-dimensional random vector $\V$, with values in $\Omega_{\V}:=\R_+^N$, as:
\begin{equation*}
    \V:= [I_{v}(t_1),I_{v}(t_2),\dots,I_{v}(t_N)]^T,\qquad I_v(t_n):=\int_{t_0}^{t_n}v(\tau)\d\tau, \quad n=1,\dots,N.
\end{equation*}
Let the complex conditional characteristic function $\phi_{A|\V}(z):=\E[e^{izA}|\V]$, $z\in\C$, be the extended ChF of the conditional random variable $A|\V$, with $A\equiv A(S)$ as given in \Cref{eqn: ALambdaDefinitions}.

Then, $\phi_{A|\V}(z)$ is entire.
\end{prop}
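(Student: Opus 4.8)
The plan is to establish entireness by applying \Cref{thm: CharacterizationEntireChF} to the conditional law of $A$ given $\V$. Since $A>0$ in every case (the Asian average, or the Lookback $\max/\min$), it suffices to show that $\E[A^k\mid\V]<\infty$ for all $k\in\mathbb N$ and that $\lim_{k\to\infty}(\E[A^k\mid\V]/k!)^{1/k}=0$. The first move is a structural reduction of the conditional law. Decomposing $W_x=\rho W_v+\sqrt{1-\rho^2}\,W_\perp$ with $W_\perp\independent W_v$, integrating \eqref{eqn: VolatilityDynamics} to obtain $\int_{t_0}^{t_n}\sqrt{v}\,\d W_v=\gamma^{-1}\bigl(v(t_n)-v_0-\kappa\bar v\,(t_n-t_0)+\kappa I_v(t_n)\bigr)$, and inserting this into $\log S(t_n)$, I get
\begin{equation*}
\log S(t_n)=b_n(\V)+\tfrac{\rho}{\gamma}\,v(t_n)+\sqrt{1-\rho^2}\,G_n,\qquad G_n:=\int_{t_0}^{t_n}\sqrt{v}\,\d W_\perp,
\end{equation*}
where $b_n(\V)$ is deterministic given $\V$. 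Because $W_\perp\independent v$, conditionally on $\V$ the vector $(G_1,\dots,G_N)$ is centred Gaussian with covariance $I_v(t_{m\wedge n})$ (fixed by $\V$) and independent of $v(t_n)$; hence, given $\V$, each $S(t_n)$ is a deterministic constant times $e^{(\rho/\gamma)v(t_n)}$ times an independent lognormal factor $e^{\sqrt{1-\rho^2}\,G_n}$, and $A$ is a sum (Asian) or an extremum (Lookback) of such variables.

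With this representation I would first handle the finite-moment requirement. For the Asian average Jensen gives $\E[A^k\mid\V]\le\frac1N\sum_n\E[S(t_n)^k\mid\V]$, while for the Lookback payoffs $\E[A^k\mid\V]\le\sum_n\E[S(t_n)^k\mid\V]$, so it is enough to bound a single $\E[S(t_n)^k\mid\V]$. Using the conditional independence and the Gaussian moment-generating function,
\begin{equation*}
\E[S(t_n)^k\mid\V]=e^{k\,b_n(\V)}\,e^{\frac12 k^2(1-\rho^2)I_v(t_n)}\,\E\!\left[e^{(k\rho/\gamma)v(t_n)}\mid\V\right].
\end{equation*}
For $\rho\le 0$ the last factor is bounded by $1$ since $v\ge 0$, so all conditional moments are finite; for $\rho>0$ it is a conditional exponential moment of the CIR endpoint $v(t_n)$ given the integrated variances, and ruling out its explosion is the first genuinely nontrivial step, which I would attack through the affine (Laplace-transform) structure of $v$ conditioned on $\V$.

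The decisive obstacle, however, is the growth condition. The display above shows that the conditional moments inherit the lognormal growth $\E[S(t_n)^k\mid\V]\asymp e^{c_n k^2}$ with $c_n=\tfrac12(1-\rho^2)I_v(t_n)>0$ whenever $\rho^2<1$ and $I_v(t_n)>0$; writing $c:=\max_n c_n$ and noting that $k!\sim(k/e)^k$ grows only like $e^{k\log k}$, one gets $(\E[A^k\mid\V]/k!)^{1/k}\asymp e^{ck}\cdot e/k\to+\infty$ instead of $0$, so the limit \eqref{eqn: LimitMomentsChF} does not vanish; equivalently the conditional moment generating function $\E[e^{sA}\mid\V]$ diverges for every $s>0$ because $A\mid\V$ carries a lognormal right tail. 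This is exactly where I expect the argument to strain: in the genuine stochastic-volatility regime $\rho^2<1$ the coefficient $c$ cannot be made to vanish, and entireness appears to require either $|\rho|=1$ (so that $A\mid\V$ degenerates and is bounded, giving a trivially entire ChF) or an extra boundedness hypothesis on $A$. I would therefore invest all effort in the growth condition, and I anticipate that resolving it honestly forces the conclusion to be read as analyticity on a horizontal strip $|\operatorname{Im}z|<\delta$ rather than on all of $\C$.
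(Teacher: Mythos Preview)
Your route coincides with the paper's---both invoke \Cref{thm: CharacterizationEntireChF} and reduce to controlling $\E[A^k\mid\V]$---but you carry it through honestly and find that the limit criterion \eqref{eqn: LimitMomentsChF} \emph{fails}, whereas the paper's argument slips at precisely the point you flag. The paper's auxiliary lemma declares $S(t)\mid I_v(t)$ lognormal with variance $I_v(t)$ (glossing over the $\rho\!\int\!\sqrt{v}\,\d W_v$ term that your Cholesky decomposition isolates), derives the bound
\[
\E\big[|A|^k\,\big|\,\V\big]\ \le\ \frac{2^{(N-1)(k-1)}}{N^k}\sum_{n}\exp\!\Big(k\,\mu_n(\V)+\tfrac12\,k^2\sigma_n^2(\V)\Big),
\]
and then asserts ``at-most exponential growth (in $k$) of the absolute moments of $A\mid\V$'' to conclude. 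That assertion is incorrect: the $\exp(\tfrac12 k^2\sigma_n^2)$ factor is exactly the super-exponential lognormal growth you exhibit, and Stirling gives $(\E[A^k\mid\V]/k!)^{1/k}\to\infty$ rather than $0$. Your lower bound $A\ge N^{-1}S(t_N)$ (Asian case; $A\ge S(t_N)$ for the max-Lookback) shows the $e^{ck^2}$ rate is genuine and not an artefact of the upper bound: for $|\rho|<1$ and $I_v(t_N)>0$ the conditional law inherits a lognormal right tail, and the lognormal is the textbook example of a distribution with all moments finite whose ChF is not entire.

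One refinement of your closing remark: analyticity in a symmetric strip $|\operatorname{Im}z|<\delta$ is still too optimistic for $\phi_{A\mid\V}$. Because $A>0$ carries (conditionally on $\V$, when $|\rho|<1$) a lognormal right tail, $\E[e^{sA}\mid\V]=\infty$ for \emph{every} $s>0$, so $\E[e^{izA}\mid\V]$ is finite only for $\operatorname{Im}z\ge 0$; one obtains analyticity in the open upper half-plane but no extension whatsoever below the real axis. This also puts \Cref{ass: VADecay}, which posits $\E[e^{y^*A}]<\infty$ for some $y^*>0$, in tension with the Heston dynamics.
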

\begin{proof}
    See \ref{ap: ErrorAnalysis}.
\end{proof}

From now on, using the notation of \Cref{prop: EntireConditionalChF}, we consider the following assumption on the tail behavior of the random vector $(\V,A)$ satisfied. Informally, we require that the density of the joint distribution of $(\V,A)$ has uniform (w.r.t. $\V$) exponential decay for $A$ going to $+\infty$.
\begin{assumption}
\label{ass: VADecay}
There exists a point $z^*\in \C$, $z^*=x^* - i y^*$, with $x^*,y^*\in \R$ and $y^*>0$, such that:
\begin{equation*}
    \int_{\Omega_{\V} \times \R^+} \e^{y^*a} \d F_{\V, A}(\v, a) < +\infty,
\end{equation*}
with $F_{\V, A}(\cdot,\cdot)$ the joint distribution of the random vector $\V$ and the random variable $A$.
\end{assumption}
Thanks to \Cref{ass: VADecay}, the ChF $\phi_A(z)$ is well defined for any $z\in \mathcal{S}_{y^*}\subset \C$, with the strip $\mathcal{S}_{y^*}:= \R +i \cdot [-y^*,y^*]$. Moreover, applying Fubini's Theorem, for any $z\in \mathcal{S}_{y^*}$, we have:
\begin{equation}
\label{eqn: ChFDecomposition}
    \phi_A(z)= \int_{\Omega_{\V}} \phi_{A|\V=\v} (z) \d F_{\V}(\v).
\end{equation}
Thus, we can show that the ChF $\phi_A(z)$ is analytic in the strip $\mathcal{S}_{y^*}$ (the details are given in \ref{ap: analyticPricing}).
\begin{prop}[ChF $\phi_A$ is analytic]
\label{prop: ChFAnalytic}
Let $\phi_A(z):=\E[\e^{izA}]$, $z\in \mathcal{S}_{y^*}$, with  $A\equiv A(S)$. Then, $\phi_A(z)$ is analytic in $\mathcal{S}_{y^*}$.
\end{prop}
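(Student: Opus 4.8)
The plan is to establish analyticity of $\phi_A$ on the open strip $\mathcal{S}_{y^*}$ by combining the decomposition~\eqref{eqn: ChFDecomposition} with \Cref{prop: EntireConditionalChF} and \Cref{ass: VADecay}, via a standard ``differentiation under the integral sign'' argument (equivalently, an application of Morera's theorem together with Fubini). First I would fix an arbitrary point $z_0$ in the \emph{open} strip, say with $|\mathrm{Im}(z_0)| < y^*$, and choose a small closed disc $\overline{D}(z_0,\delta)\subset \mathcal{S}_{y^*}$ so that every $z\in \overline{D}(z_0,\delta)$ still satisfies $|\mathrm{Im}(z)|\leq y^* - \eta$ for some $\eta>0$. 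On this disc, for each fixed $\v\in\Omega_{\V}$, the integrand $z\mapsto \phi_{A|\V=\v}(z)$ is holomorphic --- this is exactly \Cref{prop: EntireConditionalChF}, which gives that $\phi_{A|\V=\v}$ is entire.

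The key step is then to produce an integrable dominating function, uniform in $z\in\overline{D}(z_0,\delta)$, which legitimises interchanging the $\v$-integral in~\eqref{eqn: ChFDecomposition} with a contour integral (Morera) or with $\partial/\partial z$. For $z = x + iy$ with $|y|\le y^*-\eta$ we have the pointwise bound
\begin{equation*}
    |\phi_{A|\V=\v}(z)| = \big|\E[\e^{izA}\,|\,\V=\v]\big| \leq \E[\e^{-yA}\,|\,\V=\v] \leq \E[\e^{y^*|A|}\,|\,\V=\v] \leq \E[\e^{y^*A}\,|\,\V=\v] + 1,
\end{equation*}
using that $A\ge 0$ (it is an arithmetic average, or a max, of the positive process $S$), so $\e^{y^*|A|}=\e^{y^*A}$ and the ``$+1$'' is only needed if one prefers $\e^{y^*|A|}\le \e^{y^*A}+\e^{-y^*A}\le \e^{y^*A}+1$. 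Integrating against $\d F_{\V}$ and invoking \Cref{ass: VADecay} (together with Fubini to rewrite $\int_{\Omega_{\V}}\E[\e^{y^*A}|\V=\v]\,\d F_{\V}(\v) = \int_{\Omega_{\V}\times\R^+}\e^{y^*a}\,\d F_{\V,A}(\v,a) < +\infty$) yields a finite, $z$-independent majorant $g(\v):=\E[\e^{y^*A}|\V=\v]+1\in L^1(\d F_{\V})$. With this domination in hand, the classical holomorphy-under-the-integral theorem applies: $\phi_A(z)=\int_{\Omega_{\V}}\phi_{A|\V=\v}(z)\,\d F_{\V}(\v)$ is holomorphic on $D(z_0,\delta)$, and since $z_0$ in the open strip was arbitrary, $\phi_A$ is analytic on all of $\mathcal{S}_{y^*}$.

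The main obstacle is purely a measure-theoretic bookkeeping one rather than a conceptual one: one must be slightly careful that \Cref{ass: VADecay} controls $\e^{y^*A}$ and not $\e^{y^*|A|}$, which is harmless here only because $A(S)\ge 0$ for both the arithmetic-average and the lookback specifications in~\eqref{eqn: ALambdaDefinitions}; if $A$ could take negative values one would additionally need a left-tail decay assumption. A secondary point of care is that analyticity is claimed on the \emph{closed} strip in some formulations, whereas the dominated-convergence/Morera argument above only delivers it on the \emph{open} strip $\{|\mathrm{Im}(z)|<y^*\}$ --- on the boundary lines $\mathrm{Im}(z)=\pm y^*$ one gets continuity and boundedness of $\phi_A$ but not differentiability without a strictly larger integrability assumption; I would state the result on the open strip (consistent with ``analytic in $\mathcal{S}_{y^*}$'' being interpreted as analytic on its interior) and note that this interior already contains a neighbourhood of the real axis, which is all that is needed to feed \Cref{res: ChFInversionTheorem} and conclude analyticity of $F_A$, hence of $g$ on $\Omega_M$.
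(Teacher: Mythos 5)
Your argument is essentially the paper's own: the paper also starts from the decomposition \eqref{eqn: ChFDecomposition}, uses \Cref{prop: EntireConditionalChF} for holomorphy of each $\phi_{A|\V=\v}$, and concludes via Morera's theorem combined with Fubini and Cauchy's integral theorem, which is exactly the ``holomorphy under the integral sign'' route you describe. If anything, your version is slightly more complete, since you explicitly exhibit the $z$-uniform $L^1(\d F_{\V})$ majorant $\E[\e^{y^*A}\,|\,\V=\v]+1$ furnished by \Cref{ass: VADecay} (using $A\geq 0$), which is the integrability hypothesis the paper's one-line appeal to Fubini leaves implicit.
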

\begin{proof}
A proof is given in \ref{ap: analyticPricing}.
\end{proof}

Thanks to \Cref{prop: ChFAnalytic}, and consistently with the previous discussion, we conclude that the map $g$ in (\ref{eqn: gDefinition}), is analytic on the domain $\Omega_M$. Therefore, we can apply \Cref{thm: BoundsAnalyticFunctions}, which yields the following error estimate:
\begin{equation*}
    ||g\1_{\Omega_{M}} - g_M||_{L^\infty (\Omega_{M})} \leq \overline{C}\varrho^{1-M},
\end{equation*}
for certain $\varrho > 1$ and $\overline{C} > 0$. As a consequence, the following bound for the $L^2$-error $\epsilon_M$ holds:
\begin{equation}
\label{eqn: L2bound}
    \epsilon_M = \E\big[(g\1_{\Omega_{M}} - g_M)^2 (\xi)\big] \leq \overline{C}^2 \varrho^{2-2M}.
\end{equation}
Furthermore, the exponential convergence is also confirmed numerically, as reported in \Cref{fig: ChebError}. In \Cref{fig: ChebError}a we can appreciate the improvement in the approximation of $\g\1_{\Omega_M}$ by means of $g_M$, when $M$ is increased, whereas \Cref{fig: ChebError}b reports the exponential decay of $\epsilon_M$.

Using (\ref{eqn: L2bound}), the $L^2$-norm of $g-\g$, $\epsilon_{SC}$ in (\ref{eqn: L2error}), is bounded by:
\begin{equation*}
\label{eqn: SCbound}
    \epsilon_{SC} \equiv \epsilon_{SC}(\xib, M) = \E\big[(g(\xi) - \g(\xi))^2\big] \leq \epsilon_{-}(\xib) + \overline{C}^2 \varrho^{2-2M} + \epsilon_{+}(\xib),
\end{equation*}
which goes to zero when $\xib\in \R^+$ and $M\in\mathbb{N}$ tend to $+\infty$. Therefore, for any $\epsilon>0$ there exist $\xib^*\in\R^+$ and $M^*\in \mathbb{N}$ such that:
\begin{equation}
\label{eqn: SpecificError}
    \epsilon_{SC}(\xib^*, M^*) < \epsilon^2,
\end{equation}
and because of the exponential decay, we expect $\xib^*$ and $M^*$ do not need to be taken too big.

Eventually, we can benefit from the bound in (\ref{eqn: SpecificError}) to control the pricing error, $\epsilon_P$ in (\ref{eqn: PricingError}). By employing the well-known inequality $\max(a+ b, 0)\leq \max(a,0)+\max(b, 0)$ and the Cauchy-Schwarz inequality, we can write:
\begin{equation*}
    \begin{aligned}
    \Tilde{V}_\omega(t_0)=\E\Big[ \Big(\omega\big(\g(\xi)-K\big)\Big)^+\Big] &\leq \sqrt{\E\Big[ \big(\g(\xi)-g(\xi)\big)^2\Big]} + \E\Big[ \Big(\omega\big(g(\xi)-K\big)\Big)^+\Big]\\
    &\leq \sqrt{\epsilon_{SC}(\xib,M)} + {V}_\omega(t_0),
\end{aligned}
\end{equation*}
and using the same argument twice (exchanging the roles of $g$ and $\g$), we end up with the following bound for the option price error:
\begin{equation*}
    \epsilon_P \leq \sqrt{\epsilon_{SC}(\xib^*,M^*)}\leq{\epsilon},
\end{equation*}
with $\xib^*$ and $M^*$ as in (\ref{eqn: SpecificError}).

\begin{figure}[t]%
    \centering
    \subfloat[\centering]{{\includegraphics[width=6.8cm]{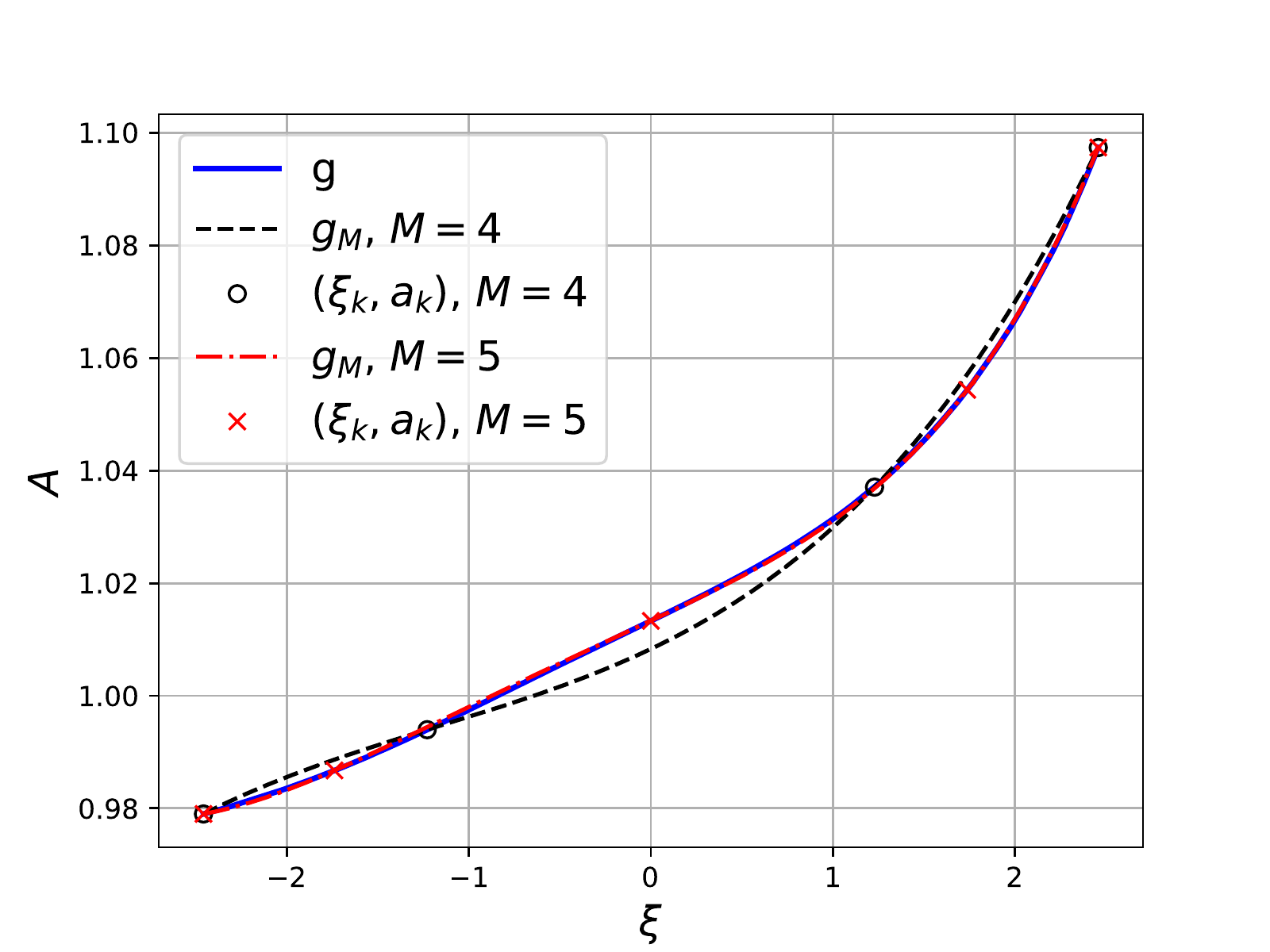} }}%
    ~
    \subfloat[\centering]{{\includegraphics[width=6.8cm]{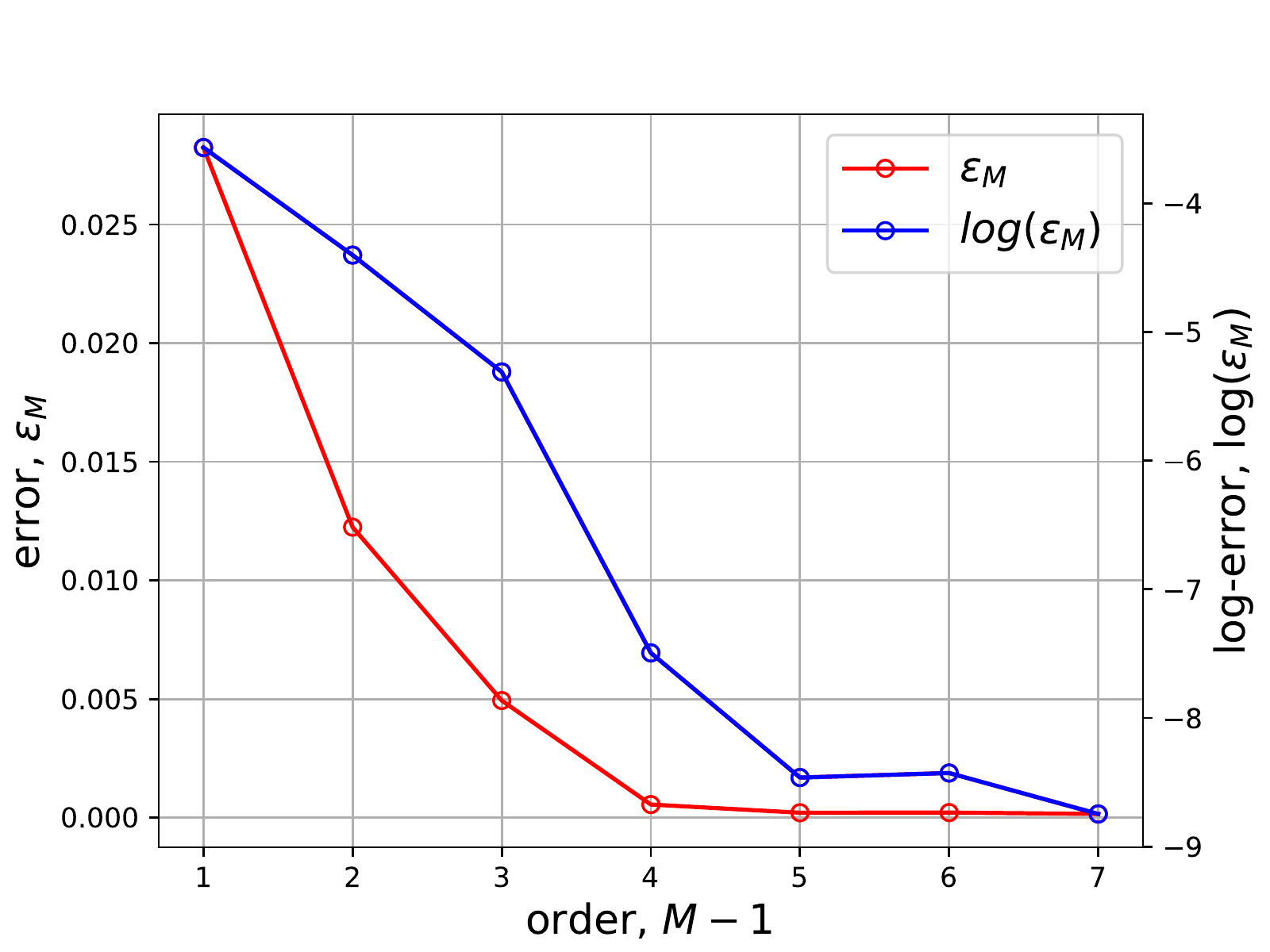} }}%
    \caption{\footnotesize Left: exact map $g\1_{\Omega_M}$ (blue) compared to the interpolation $g_M$ in the domain $\Omega_{M}$, for $M=4,5$. Right: $L^{2}$-error $\epsilon_M$ in (\ref{eqn: L2bound}) exponential decay in the order of the polynomial $g_M$.}%
    \label{fig: ChebError}%
\end{figure}

\subsection{Artificial neural network regression error}
As the final part of the error analysis, we investigate when ANNs are suitable approximating maps. In particular, we focus on ANNs with \texttt{ReLU}-architectures, namely ANNs whose activation units are all Rectified Linear Units defined as $\phi(x)=x \1_{x>0}(x)$. 

Consider the {Sobolev space} $\big(\mathcal{W}^{n,\infty}\big([0,1]^d\big), ||\cdot||_{n,d}^\infty\big)$, with $n,d\in\mathbb{N}\backslash\{0\}$, namely the space of functions $C^{n-1}\big([0,1]^d\big)$ whose derivatives up to the $(n-1)$-th order are all Lipschitz continuous, equipped with the norm $||\cdot||_{n,d}^\infty$ defined as:
\begin{equation}
    ||f||_{n,d}^\infty = \max_{|\mathbf{n}|\leq n} \esssup_{\mathbf{x}\in[0,1]^d} |D^{\mathbf{n}}f(\mathbf{x})|,
\end{equation}
with $\mathbf{n}:=(n_1,\dots,n_d)\in\mathbb{N}^d$, $|\mathbf{n}|=\sum_{i=1}^d n_i$, and $D^{\mathbf{n}}$ the weak derivative operator.
Furthermore, we define the unit ball $B_{n,d}:=\big\{f\in\mathcal{W}^{n,\infty}\big([0,1]^d\big): ||f||_{n,d}^\infty\leq 1\big\}$. Then, the following approximation result holds:
\begin{thm}[Convergence for \texttt{ReLU} ANN]
\label{thm: ConvergenceANN}
For any choice of $d,n\in\mathbb{N}\backslash\{0\}$ and $\epsilon\in(0,1)$, there exists an architecture $H(\x|\cdot)$ based on \texttt{\texttt{ReLU}} (Rectified Linear Unit) activation functions $\phi$, i.e. $\phi(x)=x \1_{x>0}(x)$, such that:
\begin{enumerate}
    \item $H(\x|\cdot)$ is able to approximate any function $f\in B_{n,d}$ with an error smaller than $\epsilon$, i.e., there exists a matrix of weights $\W$ such that $||f(\cdot)-H(\cdot|\W)||_{\textit{n},\textit{d}}^\infty< \epsilon$;
    \item $H$ has at most $c(\ln{1/\epsilon} + 1)$ layers and at most $c\epsilon^{-d/n}(\ln{1/\epsilon}+1)$ weights and neurons, with $c=c(d,n)$ an appropriate constant depending on $d$ and $n$.
\end{enumerate}
\end{thm}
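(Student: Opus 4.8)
The plan is to prove this along the lines of Yarotsky's approximation theorem for deep \texttt{ReLU} networks, building the approximant from three reusable modules: (i) an approximate squaring map, (ii) an approximate multiplication map obtained from it by polarization, and (iii) a piecewise-linear partition of unity on $[0,1]^d$ against which $f$ is replaced, locally on each grid cell, by a Taylor polynomial of degree $n-1$. The complexity and error estimates in the two claims then come out by balancing the grid resolution against the depth of the squaring module, both chosen as explicit functions of $\epsilon$.

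First I would construct the squaring module. The tent map $\tau(x)=2\min(x,1-x)$ on $[0,1]$ is exactly a one-hidden-layer \texttt{ReLU} network, and its $s$-fold composition $\tau^{\circ s}$ is a sawtooth with $2^{s-1}$ teeth; the classical identity $x-x^2=\sum_{s\ge 1}4^{-s}\tau^{\circ s}(x)$ on $[0,1]$ shows that the truncation $f_m:=x-\sum_{s=1}^{m}4^{-s}\tau^{\circ s}$ approximates $x\mapsto x^2$ with error $O(4^{-m})$ in $L^\infty$ and $O(2^{-m})$ in $\mathcal W^{1,\infty}$, using $O(m)$ layers, neurons and weights. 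Polarization, $xy=\tfrac12\big((x+y)^2-x^2-y^2\big)$, turns this into a module $\widetilde{\times}_m$ computing a product on a fixed box with the same orders of error and the same $O(m)$ cost; composing $O(\ln n)$ such modules yields approximate products of up to $d+n$ factors, with constants depending only on $d$ and $n$. Carrying the Lipschitz constants through these compositions is what will later let me control the derivative part of the norm, not just the sup part.

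Next I would fix an integer $N$, the uniform grid $\{\mathbf m/N:\mathbf m\in\{0,\dots,N\}^d\}$, and the partition of unity $\phi_{\mathbf m}(\mathbf x)=\prod_{i=1}^{d}\psi(Nx_i-m_i)$, where $\psi$ is the standard hat function (itself a \texttt{ReLU} network); one has $\sum_{\mathbf m}\phi_{\mathbf m}\equiv 1$, $\operatorname{supp}\phi_{\mathbf m}$ sits in a $2/N$-cube around $\mathbf m/N$, and at every point only $2^d$ of the $\phi_{\mathbf m}$ are nonzero. On each support I replace $f$ by its order-$(n-1)$ Taylor polynomial $P_{\mathbf m}$ at $\mathbf m/N$; since $f\in B_{n,d}$ its order-$n$ weak derivatives are bounded a.e.\ by $1$, so Taylor's theorem gives $\|f-P_{\mathbf m}\|_{\mathcal W^{n,\infty}(\operatorname{supp}\phi_{\mathbf m})}=O(N^{-1})$ and, for the sup-norm part alone, $O(N^{-n})$. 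The target function is then $\widetilde f:=\sum_{\mathbf m}\phi_{\mathbf m}P_{\mathbf m}$, a sum of $O(N^d)$ terms, each a product of at most $d+n$ affine or hat factors; realizing every product by $\widetilde{\times}_m$ produces the network $H(\cdot\,|\W)$. Its depth is $O(m)$ and its weight and neuron counts are $O(N^d\cdot c(d,n)\cdot m)$. The total error splits as $\|f-\widetilde f\|+\|\widetilde f-H(\cdot|\W)\|=O(N^{-n})+O(N^d 4^{-m})$; taking $N\sim\epsilon^{-1/n}$ and $m\sim\ln(1/\epsilon)$ (the extra $N^d$ factor only inflates $m$ by another $O(\ln(1/\epsilon))$) forces both to $O(\epsilon)$, giving depth $O(\ln(1/\epsilon)+1)$ and $O(\epsilon^{-d/n}(\ln(1/\epsilon)+1))$ weights and neurons, with constants depending only on $d$ and $n$.

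The hard part is controlling the \emph{derivative} component of $\|\cdot\|_{n,d}^\infty$ rather than just the sup norm: Yarotsky's original argument yields only $L^\infty$ accuracy, and every \texttt{ReLU} network is piecewise affine, so one has to organize the estimates so that the approximant is genuinely $C^1$-close — tracking that $\sum_{s}4^{-s}(\tau^{\circ s})'$ is uniformly summable, that polarization and the $O(N^d)$ products preserve these bounds, and that the bounded overlap of the $\operatorname{supp}\phi_{\mathbf m}$ keeps the assembled gradient under control. In the literature this refined bookkeeping is what upgrades Yarotsky-type estimates to norms involving derivatives (as in the work of G\"uhring, Kutyniok and Petersen), and it is the step I would expect to require the most care; everything else is the explicit accounting of layers, neurons and weights described above.
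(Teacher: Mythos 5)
Your first two paragraphs reproduce, correctly and with the right complexity accounting, the construction in the reference the paper cites for this theorem (the sawtooth/squaring module, polarization to get approximate multiplication, a piecewise-linear partition of unity with local Taylor polynomials of degree $n-1$, and the balance $N\sim\epsilon^{-1/n}$, $m\sim\ln(1/\epsilon)$). Since the paper gives no proof of its own beyond that citation, at the level of sup-norm error your sketch \emph{is} the intended argument.

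The genuine gap is your final paragraph, and it is not a matter of refined bookkeeping: claim~1 cannot hold with the error measured in $\|\cdot\|_{n,d}^\infty$ once $n\ge 2$. Every \texttt{ReLU} network $H$ is continuous and piecewise affine, so its weak derivatives of order $\ge 2$ vanish almost everywhere (or fail to be essentially bounded across the breakpoints); taking $f(\mathbf x)=x_1^2/2\in B_{n,d}$ gives $\esssup_{[0,1]^d}|D^{\mathbf n}(f-H)|\ge 1$ for $\mathbf n=(2,0,\dots,0)$ and every choice of weights, hence $\|f-H\|_{n,d}^\infty\ge 1>\epsilon$. The G\"uhring--Kutyniok--Petersen results you invoke only upgrade Yarotsky-type bounds to Sobolev norms of order $s\le 1$, and at the cost of a degraded exponent $\epsilon^{-d/(n-s)}$, so they deliver neither the stated norm nor the stated weight count. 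The correct reading of the theorem -- and the one under which your construction closes -- is the one in the cited source: $f$ ranges over the Sobolev ball $B_{n,d}$, but the approximation error $\|f-H(\cdot|\mathbf W)\|$ is the supremum norm on $[0,1]^d$, not $\|\cdot\|_{n,d}^\infty$. With that correction your first two paragraphs already constitute the proof, and the programme in your last paragraph should be abandoned rather than carried out.
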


\begin{proof}
A proof is available in \citep*{ErrorBoundANN}.
\end{proof}

Essentially, \Cref{thm: ConvergenceANN} states that there always exists a \texttt{ReLU}-architecture (with bounded number of layers and activation units) suitable to approximate at any desired precision functions with a certain level of regularity (determined by $(\mathcal{W}^{n,\infty}([0,1]^d), ||\cdot||_{n,d}^\infty)$).

\begin{rem}[Input scaling]
\label{rem: InputScaling}
We emphasize that although \Cref{thm: ConvergenceANN} applies to (a subclass of \emph{sufficiently} regular) functions with domain the $d$-dimensional hypercube $[0,1]^d$, this is not restrictive. Indeed, as long as the regularity conditions are fulfilled, \Cref{thm: ConvergenceANN} holds for any function defined on a $d$-dimensional hyperrectangle since it is always possible to linearly map its domain into the $d$-dimensional hypercube.
\end{rem}
Furthermore, we observe that all the results of convergence for ANN rely on the assumption that the training is performed successfully, and so the final error in the optimization process is negligible.
Under this assumption, \Cref{thm: ConvergenceANN} provides a robust theoretical justification for using \texttt{ReLU}-based ANNs as regressors. The goodness of the result can also be investigated empirically, as shown in the next section (see, for instance, \Cref{fig: AccuracyANN}).

\section{Numerical experiments}  
\label{sec: NumericalExperiments}

In this part of the paper, we detail some numerical experiments. We focus on applying the methodology given in \Cref{ssec: SpecialPricing} for the numerical pricing of fixed-strike discrete arithmetic Asian and Lookback options. We address the general case of discrete arithmetic Asian options described in \Cref{ssec: GeneralPricing}. For each pricing experiment errors and timing results are given. The ground truth benchmarks are computed via MC using the \emph{almost exact simulation} of the Heston models, detailed in \Cref{res: AESimulation}.

All the computations are implemented and run on a MacBook Air (M1, 2020) machine, with chip Apple M1 and 16 GB of RAM. The code is written in \texttt{Python}, and \texttt{torch} is the library used for the design and training of the ANN, as in \citep*{FastSamplingBridge}.

\subsection{A benchmark from the literature}

 \begin{table}[b!]
\vspace{-0.3cm}
\begin{center}
\caption{Training set for benchmark replication.}
\vspace{-0.2cm}
\label{tab: TrainingSetBenchmark}
\resizebox{0.58\textwidth}{!}{
\begin{tabular}{lllllll}
\hline\hline
\multicolumn{1}{c|}{$\p$} & \multicolumn{1}{c}{$N_{\texttt{val}}$} & \multicolumn{1}{c}{range} & \multicolumn{1}{c}{met.} & \multicolumn{1}{c}{} & \multicolumn{2}{c}{MC} \\\cline{1-4}\cline{6-7}
\multicolumn{1}{c|}{$r$} & \multicolumn{1}{c}{100} & \multicolumn{1}{c}{[0.04, 0.06]} & \multicolumn{1}{c}{LHS} & \multicolumn{1}{c}{} & \multicolumn{1}{c|}{$N_{\texttt{paths}}$} & \multicolumn{1}{c}{$10^6$} \\
\multicolumn{1}{c|}{$\kappa$} & \multicolumn{1}{c}{100} & \multicolumn{1}{c}{[2.90, 3.10]} & \multicolumn{1}{c}{LHS} & \multicolumn{1}{c}{} & \multicolumn{1}{c|}{$\Delta t$} & \multicolumn{1}{c}{$\frac{1}{800}$} \\
\multicolumn{1}{c|}{$\gamma$} & \multicolumn{1}{c}{100} & \multicolumn{1}{c}{[0.08, 0.12]} & \multicolumn{1}{c}{LHS}\\
\multicolumn{1}{c|}{$\rho$} & \multicolumn{1}{c}{100} & \multicolumn{1}{c}{[-0.11, -0.09]} & \multicolumn{1}{c}{LHS} & \multicolumn{1}{c}{}  & \multicolumn{2}{c}{SC}\\\cline{6-7}
\multicolumn{1}{c|}{$\Bar{v}$} & \multicolumn{1}{c}{100} & \multicolumn{1}{c}{[0.03, 0.05]} & \multicolumn{1}{c}{LHS} & \multicolumn{1}{c}{}  & \multicolumn{1}{c|}{$M$} & \multicolumn{1}{c}{$21$} \\
\multicolumn{1}{c|}{$v_0$} & \multicolumn{1}{c}{100} &\multicolumn{1}{c}{[0.03, 0.05]}   & \multicolumn{1}{c}{LHS} & \multicolumn{1}{c}{} & \multicolumn{1}{c|}{$\Bar{\xi}$} & \multicolumn{1}{c}{$2.46$} \\
\multicolumn{1}{c|}{$T$} & \multicolumn{1}{c}{25} &\multicolumn{1}{c}{[0.25, 0.28]} & \multicolumn{1}{c}{EQ-SP} & \multicolumn{1}{c}{} & \multicolumn{1}{c|}{$F_{\Bar{\xi}}(\Bar{\xi})$} & \multicolumn{1}{c}{$0.993$} \\\cline{1-4}
\multicolumn{1}{c|}{$N_{\texttt{pairs}}$}&\multicolumn{3}{c}{$2500 \quad(100 \times 25)$} &\multicolumn{1}{c}{}
\end{tabular}}
\end{center}
\vspace{-0.3cm}
\end{table}

\begin{table}[t!]
\begin{center}
\caption{\footnotesize Results benchmark (BM) replication (see, Table 5 from \cite{kirkby2020efficient}).}
\label{tab: BenchmarkReplio}
\vspace{-0.2cm}
\resizebox{0.75\textwidth}{!}{\begin{tabular}{lllllllll}
\hline\hline
\multicolumn{1}{c}{} & \multicolumn{7}{c}{Set BM} \\\cline{1-8}
\multicolumn{1}{c}{} & \multicolumn{1}{c}{$r$} & \multicolumn{1}{c}{$\kappa$} & \multicolumn{1}{c}{$\gamma$} & \multicolumn{1}{c}{$\rho$} & \multicolumn{1}{c}{$\Bar{v}$} & \multicolumn{1}{c}{$v_0$} & \multicolumn{1}{c}{$T$}\\
\multicolumn{1}{c}{} & \multicolumn{1}{c}{\multirow{1}{*}{0.05}} & \multicolumn{1}{c}{\multirow{1}{*}{3.0}} & \multicolumn{1}{c}{\multirow{1}{*}{0.1}} & \multicolumn{1}{c}{\multirow{1}{*}{-0.1}} & \multicolumn{1}{c}{\multirow{1}{*}{0.04}} & \multicolumn{1}{c}{\multirow{1}{*}{0.04}} & \multicolumn{1}{c}{\multirow{1}{*}{0.25}}\\\hline\hline
\multicolumn{1}{c}{}&\multicolumn{3}{c}{BM}&\multicolumn{3}{c}{SC}&\multicolumn{1}{c}{SA}\\\cline{1-8}
\multicolumn{1}{c|}{$K_2$} & \multicolumn{1}{c}{V} & \multicolumn{2}{c|}{95CI} & \multicolumn{1}{c}{V} & \multicolumn{2}{c|}{95CI} & \multicolumn{1}{c}{V}\\\cline{1-8}
\multicolumn{1}{c|}{90} & \multicolumn{1}{c}{10.5439} & \multicolumn{2}{c|}{[10.5329, 10.5550]} & \multicolumn{1}{c}{10.5459} & \multicolumn{2}{c|}{[10.5349, 10.5570]} & \multicolumn{1}{c}{10.5486}\\
\multicolumn{1}{c|}{95} & \multicolumn{1}{c}{6.0168} & \multicolumn{2}{c|}{[6.0069, 6.0267]} & \multicolumn{1}{c}{6.0173} & \multicolumn{2}{c|}{[6.0074, 6.0271]} & \multicolumn{1}{c}{6.0196}\\
\multicolumn{1}{c|}{100} & \multicolumn{1}{c}{2.6026} & \multicolumn{2}{c|}{[2.5953, 2.6098]} & \multicolumn{1}{c}{2.5992} & \multicolumn{2}{c|}{[2.5920, 2.6064]} & \multicolumn{1}{c}{2.5996}\\
\multicolumn{1}{c|}{105} & \multicolumn{1}{c}{0.7902} & \multicolumn{2}{c|}{[0.7862, 0.7943]} & \multicolumn{1}{c}{0.7867} & \multicolumn{2}{c|}{[0.7827, 0.7907]} & \multicolumn{1}{c}{0.7865}\\
\multicolumn{1}{c|}{110} & \multicolumn{1}{c}{0.1622} & \multicolumn{2}{c|}{[0.1604, 0.1639]} & \multicolumn{1}{c}{0.1612} & \multicolumn{2}{c|}{[0.1595, 0.1630]} & \multicolumn{1}{c}{0.1615}
\end{tabular}}
\end{center}
\end{table}

\textcolor{black}{To assess the quality of the proposed methodology, we compare the method against the benchmarks available in Table 5 from \cite{kirkby2020efficient}. In the experiment, we consider prices of 5 discrete Asian call options with $n=201$ equally spaced monitoring dates from time $t_0=0$ to $T=0.25$. The underlying initial value is $S(t_0)=100$, while the other Heston parameters (Set BM) as well as the target strikes are given in \Cref{tab: BenchmarkReplio}. To produce those results, a toy model has been trained based on the ranges provided in \Cref{tab: TrainingSetBenchmark}. The ANN employed consists of 2 hidden layers each one with 20 hidden units. The results are computed with $N_{\texttt{paths}}=10^6$ Monte Carlo paths (consistent with the benchmark \cite{kirkby2020efficient}) and are presented in \Cref{tab: BenchmarkReplio}. For both the benchmark and the SC technique are reported the absolute value (V) and the $95\%$ confidence interval (95CI) of the option price. For the SA technique, instead, only the absolute value is reported since no sampling is involved and so there is no information on the variance of the estimate. All the results are within the BM $95\%$ confidence interval, and hence confirm the high accuracy of the proposed method.}

\subsection{Experiments' specifications}
Among the three examples of applications presented, two of them rely on the technique given in \Cref{ssec: SpecialPricing}, while the third is based on the theory in \Cref{ssec: GeneralPricing}. The first experiment is the pricing of \emph{fixed-strike discrete arithmetic Asian options} (FxA) with an underlying stock price process following the Heston dynamics. The second example, instead, is connected to the ``interest rate world'', and is employed for the pricing of \emph{fixed-strike discrete Lookback swaptions} (FxL). We assume the underlying swap rate is driven by a \emph{displaced} Heston model with drift-less dynamics, typically used for interest rates. The last one is an application to the pricing of \emph{fixed- and floating-strikes discrete arithmetic Asian options} (FxFlA) on a stock price driven by the Heston dynamics.
In the first (FxA) and last experiment (FxFlA), $A(S)$ in (\ref{eqn: ALambdaDefinitions}) is specified as:
\begin{align}
\label{eqn: SpecsAsian}
    A(S)=\frac15 \sum_{n=1}^5 S(t_n), \qquad t_n:=T - (5 - n)\tau_\texttt{A}, \quad n=1,\dots,5,
\end{align}
with $\tau_\texttt{A}=\frac{1}{12}$ as monitoring time lag, and $T>4\tau_\texttt{A}$ as option maturity. Differently, in the second experiment (FxL) $A(S)$ is given by:
\begin{align}
\label{eqn: SpecsLook}
    A(S)=\min_{n} S(t_n), \qquad t_n:=T - (30 - n)\tau_\texttt{L}, \quad n=1,\dots,30,
\end{align}
with $\tau_\texttt{L}=\frac{1}{120}$ as monitoring time lag, and $T>29\tau_\texttt{L}$ as option maturity. Observe that, assuming the unit is 1 year, with 12 identical months and 360 days, the choices of $\tau_\texttt{A}$ and $\tau_\texttt{L}$ correspond respectively to 1 month and 3 days of time lag in the monitoring dates.

\subsection{Artificial neural network development}
In this section, we provide the details about the generation of the training set, for each experiment, and the consequent training of the ANN used in the pricing model.

\subsubsection{Training set generation}
The training sets are generated through MC simulations, using the {almost exact} sampling in \Cref{res: AESimulation}.
In the first two applications (FxA and FxL) the two training sets are defined as in (\ref{eqn: TrainingSet}), and particularly they read:
\begin{align*}
    \mathbf{T}_{\texttt{FxA}}&=\Big\{\Big(\{r,\kappa,\gamma,\rho,\bar{v},v_0,T\}_i,\{a_1,\dots,a_{21}\}_i\Big): i\in\{1,\dots,N_{\texttt{pairs}}^{\texttt{FxA}}\}\Big\},\\
    \mathbf{T}_{\texttt{FxL}}&=\Big\{\Big(\{\kappa,\gamma,\rho,\bar{v},v_0,T\}_i,\{a_1,\dots,a_{21}\}_i\Big): i\in\{1,\dots,N_{\texttt{pairs}}^{\texttt{FxL}}\}\Big\}.
\end{align*}
The Heston parameters, i.e. $\p\backslash\{T\}$\footnote{$\p\backslash\{T\}=\{r,\kappa,\gamma,\rho,\bar{v},v_0\}$ and $\p\backslash\{T\}=\{\kappa,\gamma,\rho,\bar{v},v_0\}$ for FxA and FxL, respectively.}, are sampled using \emph{Latin Hypercube Sampling} (LHS), to ensure the best filling of the parameter space \cite{SevenLeague,FastSamplingBridge}. For each set $\p\backslash\{T\}$, $N_{\texttt{paths}}$ paths are generated, with a time step of $\Dt$ and a time horizon up to $T_{\max}$. The underlying process $S$ is monitored at each time $T$ for which there are enough past observations to compute $A(S)$, i.e.:
\begin{equation*}
    \begin{aligned}
        T&\geq 4\tau_{\texttt{A}} + \Dt, \quad\text{for \texttt{FxA}},\\
        T&\geq 29\tau_{\texttt{L}} + \Dt, \quad\text{for \texttt{FxL}}.
    \end{aligned}
\end{equation*}
Consequently, the product between the number of Heston parameters' set and the number of available maturities determines the magnitude of the two training sets (i.e., $N_{\texttt{pairs}}^{\texttt{FxA}}$ and $N_{\texttt{pairs}}^{\texttt{FxL}}$). 

For each $\p$, the CVs $\a$ corresponding to $A(S)$ are computed as:
\begin{equation*}
    a_k:=F_A^{-1}\big(F_{\xi}(\xi_k)\big)\approx Q_A\big(F_{\xi}(\xi_k)\big),\qquad k\in\{1,\dots,21\},
\end{equation*}
where $Q_A$ is the empirical quantile function of $A(S)$, used as a numerical proxy of $F_A^{-1}$, and $\xi_k$ are the CPs computed as Chebyshev nodes:
\begin{equation*}
    \xi_k:=-\xib\cos{\frac{k - 1}{20}\pi},\qquad k\in\{1,\dots,21\},
\end{equation*}
with $\xib:=F_\xi^{-1}(0.993)\approx 2.46$. We note that the definition of $\xib$ avoids any CV $a_k$ to be ``deeply'' in the tails of $A(S)$, which are more sensitive to numerical instability in a MC simulation.

The information about the generation of the two training sets is reported in \Cref{tab: TrainingSetRangesUnconditional}. Observe that $\T_{\texttt{FxA}}$ is richer in elements than $\T_{\texttt{FxL}}$ because of computational constraints. Indeed, the higher number of monitoring dates of $A(S)$ in FxL makes the generation time of $\T_{\texttt{FxL}}$ more than twice the one of $\T_{\texttt{FxA}}$ (given the same number of pairs). 

\begin{table}[t]
\vspace{-0.3cm}
\begin{center}
\caption{Training sets $\T_{\texttt{FxA}}$ and $\T_{\texttt{FxL}}$ generation details.}
\label{tab: TrainingSetRangesUnconditional}
\vspace{-0.2cm}
\resizebox{0.92\textwidth}{!}{
\begin{tabular}{llllllllll}
\hline\hline
\multicolumn{1}{c}{} & \multicolumn{3}{c}{FxA} &  \multicolumn{3}{c}{FxL}\\
\multicolumn{1}{c|}{$\p$} & \multicolumn{1}{c}{$N_{\texttt{val}}$} & \multicolumn{1}{c}{range} & \multicolumn{1}{c|}{met.} & \multicolumn{1}{c}{$N_{\texttt{val}}$} & \multicolumn{1}{c}{range} & \multicolumn{1}{c}{met.} & \multicolumn{1}{c}{} & \multicolumn{2}{c}{MC} \\\cline{1-7}\cline{9-10}
\multicolumn{1}{c|}{$r$} & \multicolumn{1}{c}{700} & \multicolumn{1}{c}{[0.00, 0.05]} & \multicolumn{1}{c|}{LHS} & \multicolumn{4}{c}{} & \multicolumn{1}{c|}{$N_{\texttt{paths}}$} & \multicolumn{1}{c}{$10^6$} \\
\multicolumn{1}{c|}{$\kappa$} & \multicolumn{1}{c}{700} & \multicolumn{1}{c}{[0.20, 1.10]} & \multicolumn{1}{c|}{LHS} & \multicolumn{1}{c}{300} &
\multicolumn{1}{c}{[0.80, 1.60]}   & \multicolumn{1}{c}{LHS} & \multicolumn{1}{c}{} & \multicolumn{1}{c|}{$\Dt$} & \multicolumn{1}{c}{$\frac{1}{120}$} \\
\multicolumn{1}{c|}{$\gamma$} & \multicolumn{1}{c}{700} & \multicolumn{1}{c}{[0.80, 1.10]} & \multicolumn{1}{c|}{LHS} & \multicolumn{1}{c}{300} &
\multicolumn{1}{c}{[0.40, 1.00]} & \multicolumn{1}{c}{LHS} & \multicolumn{1}{c}{} & \multicolumn{1}{c}{} & \multicolumn{1}{c}{} \\
\multicolumn{1}{c|}{$\rho$} & \multicolumn{1}{c}{700} & \multicolumn{1}{c}{[-0.95, -0.20]} & \multicolumn{1}{c|}{LHS} & \multicolumn{1}{c}{300} &
\multicolumn{1}{c}{[-0.80, -0.30]} & \multicolumn{1}{c}{LHS} & \multicolumn{1}{c}{} & \multicolumn{2}{c}{SC}\\\cline{9-10}
\multicolumn{1}{c|}{$\Bar{v}$} & \multicolumn{1}{c}{700} & \multicolumn{1}{c}{[0.02, 0.15]} & \multicolumn{1}{c|}{LHS} & \multicolumn{1}{c}{300} & \multicolumn{1}{c}{[0.10, 0.20]} & \multicolumn{1}{c}{LHS} & \multicolumn{1}{c}{} & \multicolumn{1}{c|}{$M$} & \multicolumn{1}{c}{$21$} \\
\multicolumn{1}{c|}{$v_0$} & \multicolumn{1}{c}{700} & \multicolumn{1}{c}{[0.02, 0.15]} & \multicolumn{1}{c|}{LHS} & \multicolumn{1}{c}{300} &
\multicolumn{1}{c}{[0.10, 0.20]}   & \multicolumn{1}{c}{LHS} & \multicolumn{1}{c}{} & \multicolumn{1}{c|}{$\xib$} & \multicolumn{1}{c}{$2.46$} \\
\multicolumn{1}{c|}{$T$} & \multicolumn{1}{c}{160} & \multicolumn{1}{c}{[0.34, 1.67]} & \multicolumn{1}{c|}{EQ-SP} & \multicolumn{1}{c}{170} &
\multicolumn{1}{c}{[0.26, 1.67]} & \multicolumn{1}{c}{EQ-SP} & \multicolumn{1}{c}{} & \multicolumn{1}{c|}{$F_{\xi}(\xib)$} & \multicolumn{1}{c}{$0.993$}\\\cline{1-7}
\multicolumn{1}{c|}{$N_{\texttt{pairs}}$}&\multicolumn{3}{c|}{$112000 \quad(700 \times 160)$}& \multicolumn{3}{c}{$51000 \quad(300 \times 170)$}
\end{tabular}}
\end{center}
\vspace{-0.3cm}
\end{table}

Since in the general procedure (see \Cref{ssec: GeneralPricing}) ANNs are used to learn the \emph{conditional distribution} $A(S)|S(T)$ (not just $A(S)$!), the third experiment requires a training set which contains also information about the conditioning value, $S(T)$. We define $\T_{\texttt{FxFlA}}$ as:
\begin{align*}
    \mathbf{T}_{\texttt{FxFlA}}&=\Big\{\Big(\{r,\kappa,\gamma,\rho,\bar{v},v_0,T, S^q,p^q\}_i,\{a_1,\dots,a_{14}\}_i\Big): i\in\{1,\dots,N_{\texttt{pairs}}\}\Big\},
\end{align*}
where $p^q$ is the probability corresponding to the quantile $S^q$, given as in (\ref{eqn: ReferenceQuantiles}), i.e.:
\begin{equation*}
    S^q=S_{\min}+\frac{q-1}{14}(S_{\max}-S_{\min}), \qquad q\in\{1,\dots,15\},
\end{equation*}
with the $S_{\min}=F_{S(T)}^{-1}(p_{\min})$ and $S_{\max}=F_{S(T)}^{-1}(p_{\max})$. Heuristic arguments drove the choice of adding in the input set $\p$ the probability $p^q:=F_{S(T)}(S^q)$, i.e. the probability \emph{implied} by the final value $p^q$. Indeed, the ANN training process results more accurate when both $S^q$ and $p^q$ are included in $\p$.

\begin{table}[t]
\vspace{-0.3cm}
\begin{center}
\caption{Training set $\T_{\texttt{FxFlA}}$ generation details.}
\vspace{-0.2cm}
\label{tab: TrainingSetRangesConditional}
\resizebox{0.65\textwidth}{!}{
\begin{tabular}{lllllllll}
\hline\hline
\multicolumn{1}{c|}{$\p$} & \multicolumn{1}{c}{$N_{\texttt{val}}$} & \multicolumn{1}{c}{range} & \multicolumn{1}{c}{met.} & \multicolumn{1}{c}{} & \multicolumn{2}{c}{MC} \\\cline{1-4}\cline{6-7}
\multicolumn{1}{c|}{$r$} & \multicolumn{1}{c}{180} & \multicolumn{1}{c}{[0.00, 0.05]} & \multicolumn{1}{c}{LHS} & \multicolumn{1}{c}{} & \multicolumn{1}{c|}{$N_{\texttt{tot}}$} & \multicolumn{1}{c}{$3\cdot10^6$} \\
\multicolumn{1}{c|}{$\kappa$} & \multicolumn{1}{c}{180} & \multicolumn{1}{c}{[0.20, 1.10]} & \multicolumn{1}{c}{LHS} & \multicolumn{1}{c}{} & \multicolumn{1}{c|}{$N_{\texttt{paths}}$} & \multicolumn{1}{c}{$2 \cdot 10^5$} \\
\multicolumn{1}{c|}{$\gamma$} & \multicolumn{1}{c}{180} & \multicolumn{1}{c}{[0.80, 1.10]} & \multicolumn{1}{c}{LHS} & \multicolumn{1}{c}{} & \multicolumn{1}{c|}{$\Dt$} & \multicolumn{1}{c}{$\frac{1}{120}$} \\
\multicolumn{1}{c|}{$\rho$} & \multicolumn{1}{c}{180} & \multicolumn{1}{c}{[-0.92, -0.28]} & \multicolumn{1}{c}{LHS} & \multicolumn{1}{c}{} & \multicolumn{1}{c}{} & \multicolumn{1}{c}{}\\
\multicolumn{1}{c|}{$\Bar{v}$} & \multicolumn{1}{c}{180} & \multicolumn{1}{c}{[0.03, 0.10]} & \multicolumn{1}{c}{LHS} & \multicolumn{1}{c}{} & \multicolumn{2}{c}{SC}\\\cline{6-7}
\multicolumn{1}{c|}{$v_0$} & \multicolumn{1}{c}{180} &
\multicolumn{1}{c}{[0.03, 0.10]}   & \multicolumn{1}{c}{LHS} & \multicolumn{1}{c}{} & \multicolumn{1}{c|}{$Q$} & \multicolumn{1}{c}{$15$} \\
\multicolumn{1}{c|}{$T$} & \multicolumn{1}{c}{160} &
\multicolumn{1}{c}{[0.34, 1.67]} & \multicolumn{1}{c}{EQ-SP} & \multicolumn{1}{c}{} & \multicolumn{1}{c|}{$M$} & \multicolumn{1}{c}{$14$} \\
\multicolumn{1}{c|}{$S^q$} & \multicolumn{1}{c}{15} & \multicolumn{1}{c}{[0.35, 2.01]} & \multicolumn{1}{c}{EQ-SP} & \multicolumn{1}{c}{} & \multicolumn{1}{c|}{$\xib$} & \multicolumn{1}{c}{$2.46$} \\
\multicolumn{1}{c|}{$p^q$} & \multicolumn{1}{c}{15} & \multicolumn{1}{c}{[0.05, 0.85]} & \multicolumn{1}{c}{IMPL} & \multicolumn{1}{c}{} & \multicolumn{1}{c|}{$F_{\xi}(\xib)$} & \multicolumn{1}{c}{$0.993$}\\\cline{1-4}
\multicolumn{1}{c|}{$N_{\texttt{pairs}}$}&\multicolumn{3}{c}{$432000 \quad(180 \times 160 \times 15)$} &\multicolumn{1}{c}{}
\end{tabular}}
\end{center}
\vspace{-0.3cm}
\end{table}

Similarly as before, the sets of Heston parameters are sampled using LHS. For each set, $N_{\texttt{tot}}$ paths are generated, with a time step of $\Dt$ and a time horizon up to $T_{\max}$.
The underlying process $S$ is monitored at each time $T$ for which there are enough past observations to compute $A(S)$, i.e.:
\begin{equation*}
        T\geq 4\tau_{\texttt{A}} + \Dt.
\end{equation*}
For any maturity $T$ and any realization $S^q$, the inverse CDF of the conditional random variable $A(S)|S(T)=S^q$ is approximated with the empirical quantile function, $Q_{A|S^q}$. The quantile function $Q_{A|S^q}$ is built on the $N_{\texttt{paths}}$ ``closest'' paths to $S^q$, i.e. those $N_{\texttt{paths}}$ paths whose final values $S(T)$ are the closest to $S^q$.

Eventually, for any input set $\p=\{r,\kappa,\gamma,\rho,\bar{v},v_0,T, S^q,p^q\}$, the CVs $\a$ corresponding to $A(S)|S(T)=S^q$ are computed as:
\begin{equation*}
    a_k:=F_{A|S^q}^{-1}\big(F_{\xi}(\xi_k)\big)\approx Q_{A|S^q}\big(F_{\xi}(\xi_k)\big),\qquad k\in\{1,\dots,14\},
\end{equation*}
with $\xi_k$ the Chebyshev nodes:
\begin{equation*}
    \xi_k:=-\xib\cos{\frac{k - 1}{13}\pi},\qquad k\in\{1,\dots,14\},
\end{equation*}
and $\xib:=F_\xi^{-1}(0.993)\approx 2.46$.

The information about the generation of the training set $\T_{\texttt{FxFlA}}$ are reported in \Cref{tab: TrainingSetRangesConditional}.

\subsubsection{Artificial neural network training}

\begin{figure}[b]
    \centering
    \makebox[\textwidth][c]{\includegraphics[width=.8\textwidth,height=.4\textwidth]{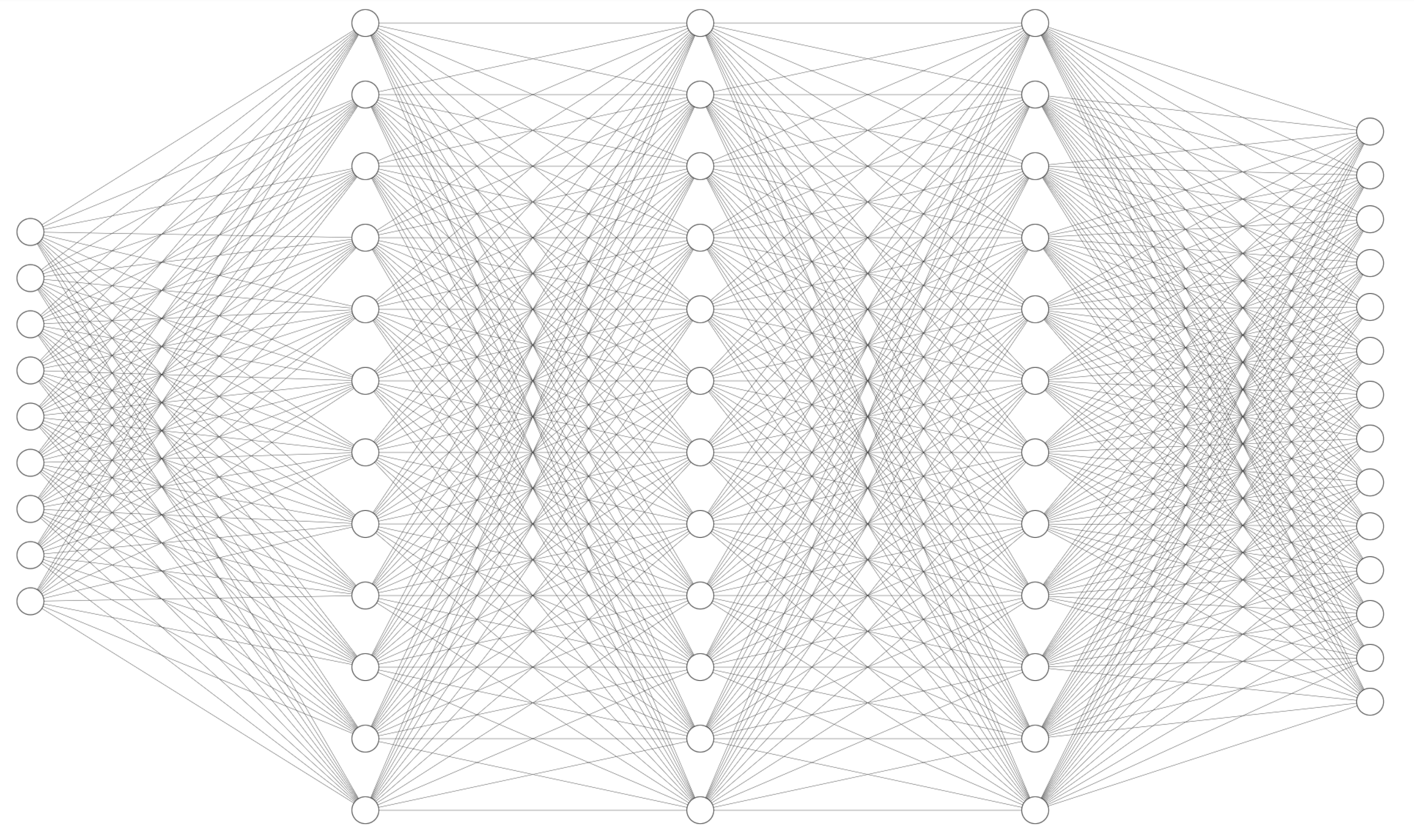}}
    \caption{\footnotesize Illustration of a dense ANN with (from the left) one input layer, three hidden layers, and one output layer. Each white node represents an activation unit.}
    \label{fig: ANN architecture}
\end{figure}

Each training set $\T_{(\cdot)}$ store a finite amount of pairs $(\p,\a)$, in which each $\p$ and each corresponding $\a$ are connected by the mapping $H$. The artificial neural network $\H$ is used to approximate and generalize $H$ for inputs $\p$ not in $\T_{(\cdot)}$. The architecture of $\H$ was initially chosen accordingly to \citep*{SevenLeague,FastSamplingBridge}, then suitably adjusted by heuristic arguments. 

$\H$ is a \emph{fully connected (or dense)} ANN with five \emph{layers} -- one \emph{input}, one \emph{output}, and three \emph{hidden} (HidL), as the one illustrated in \Cref{fig: ANN architecture}. Input and output layers have a number of \emph{units (neurons)} -- input size (InS) and output size (OutS) -- coherent with the targeted problem (FxA, FxL, or FxFlA). Each hidden layer has the same hidden size (HidS) of 200 neurons, selected as the optimal one among different settings. \texttt{ReLU}
(\texttt{Re}ctified \texttt{L}inear \texttt{U}nit) is the \emph{non-linear activation unit} (ActUn) for each neuron, and it is defined as $\phi(x):=\max(x,0)$ \citep*{ActivationFunctions}. The loss function (LossFunc) is the \emph{Mean Squared Error} (MSE) between the actual outputs, $\a$ (available in $\T$), and the ones predicted by the ANN, $\H(\p)$. 

\begin{table}[t]
\begin{center}
\caption{\footnotesize Artificial neural network and optimization details.}
\label{tab: ANNandOptimizationDetails}
\vspace{-0.2cm}
\resizebox{0.95\textwidth}{!}{
\begin{tabular}{lllllllllllll}
\hline\hline
\multicolumn{6}{c}{\multirow{1}{*}{ANN architecture}} & \multicolumn{1}{c}{} & \multicolumn{6}{c}{\multirow{1}{*}{ANN optimization}}\\\cline{1-6}\cline{8-13}
\multicolumn{1}{c|}{} & \multicolumn{1}{c}{FxA} & \multicolumn{1}{c}{FxL} & \multicolumn{1}{c|}{FxFlA} & \multicolumn{1}{c}{HidL} & \multicolumn{1}{c}{3} & \multicolumn{1}{c}{} & \multicolumn{1}{c}{E} & \multicolumn{1}{c|}{3000} & \multicolumn{1}{c}{InitLR} & \multicolumn{1}{c|}{$10^{-3}$} & \multicolumn{1}{c}{Opt} & \multicolumn{1}{c}{\texttt{Adam}} \\
\multicolumn{1}{c|}{InS} & \multicolumn{1}{c}{7} & \multicolumn{1}{c}{6} & \multicolumn{1}{c|}{9} & \multicolumn{1}{c}{HidS} & \multicolumn{1}{c}{200} & \multicolumn{1}{c}{} & \multicolumn{1}{c}{B} & \multicolumn{1}{c|}{$1024$} & \multicolumn{1}{c}{DecR} & \multicolumn{1}{c|}{0.1} & \multicolumn{1}{c}{LossFunc} & \multicolumn{1}{c}{MSE} \\
\multicolumn{1}{c|}{OutS} & \multicolumn{1}{c}{21} & \multicolumn{1}{c}{21} & \multicolumn{1}{c|}{14} & \multicolumn{1}{c}{ActUn} & \multicolumn{1}{c}{\texttt{ReLU}} & \multicolumn{1}{c}{} & \multicolumn{1}{c}{} & \multicolumn{1}{c|}{} & \multicolumn{1}{c}{DecS} & \multicolumn{1}{c|}{1000} & \multicolumn{1}{c}{} & \multicolumn{1}{c}{} \\
\end{tabular}}
\end{center}
\vspace{-0.3cm}
\end{table}

\begin{figure}[b]%
\vspace{-.3cm}
    \centering
    \subfloat[\centering]{{\includegraphics[width=6.8cm]{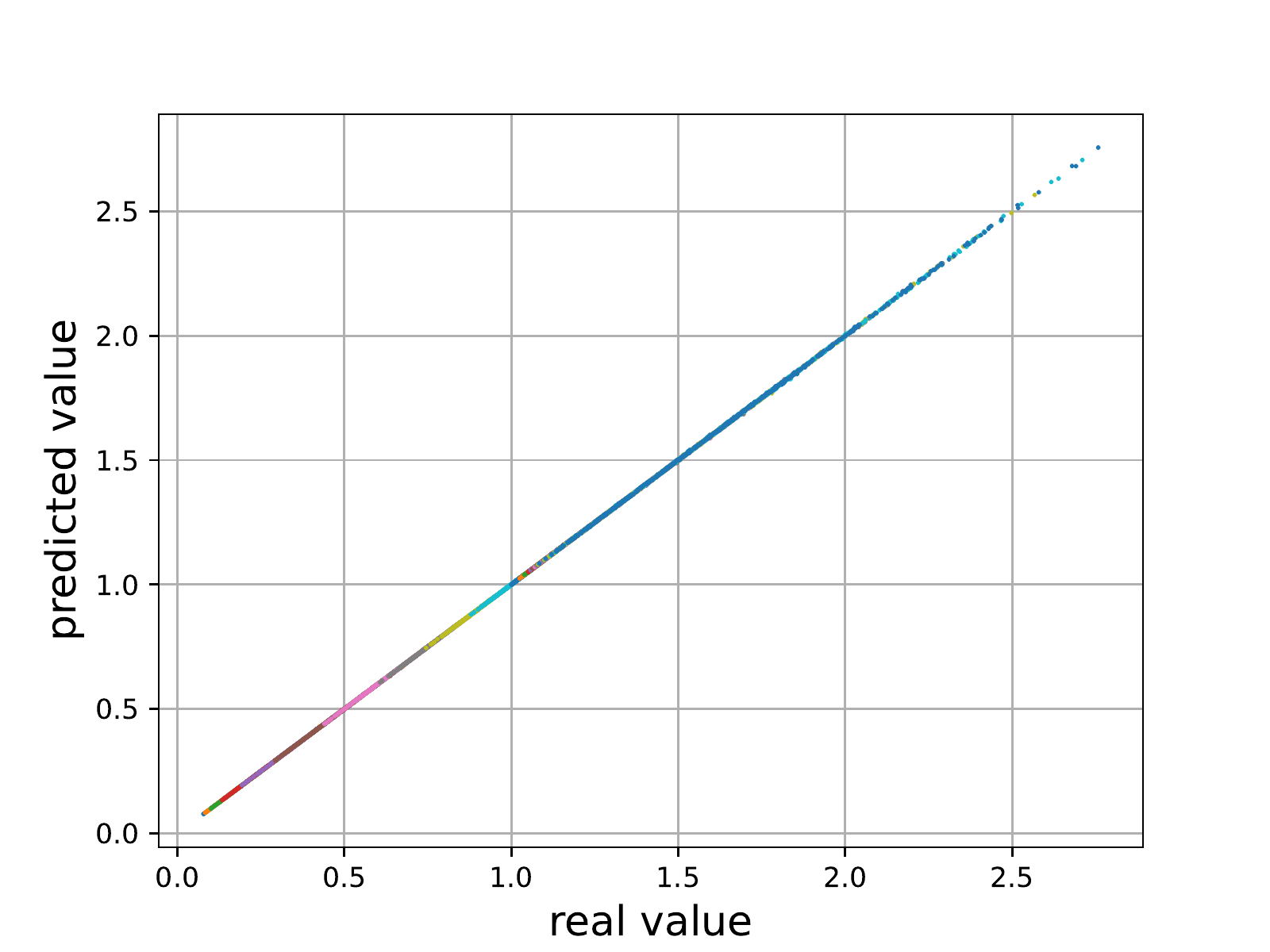} }}%
    ~
    \subfloat[\centering]{{\includegraphics[width=6.8cm]{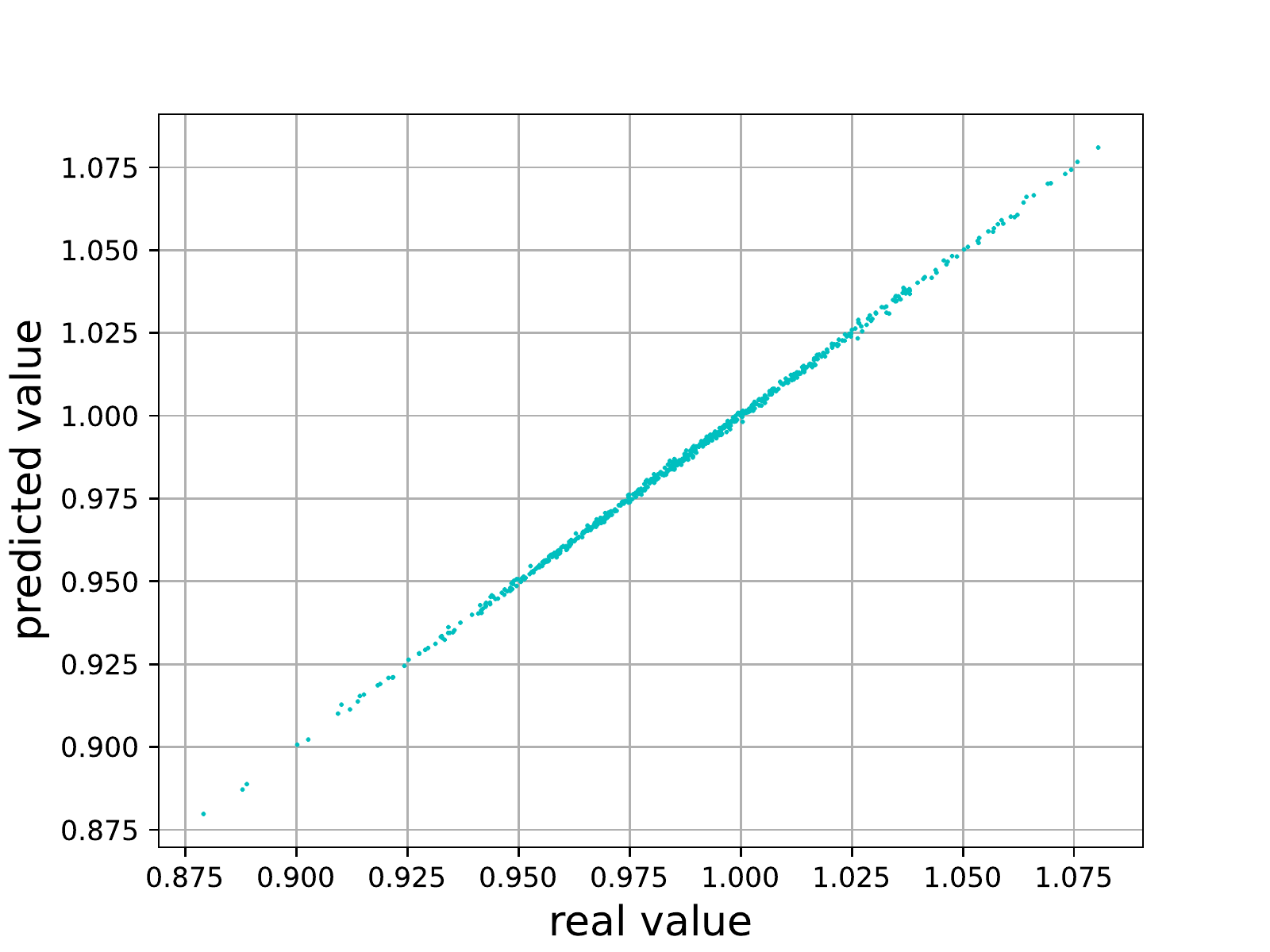} }}%
    \vspace{-0.1cm}
    \caption{\footnotesize First experiment: FxA. Left: scatter plot of the real CVs ($a_k$, $k=1,\dots,21$, with different colors) against the predicted ones. Right: zoom on the ``worst'' case, namely $a_{10}$.}%
    \label{fig: AccuracyANN}%
    \vspace{-0.3cm}
\end{figure}

The optimization process is composed of 3000 epochs (E). During each epoch, the major fraction (70\%) of the $\T$ (the actual \emph{training set}) is ``back-propagated'' through the ANN in batches of size 1024 (B). The stochastic gradient-based optimizer (Opt) \texttt{Adam} \citep*{Adam} is employed in the optimization. Particularly, the optimizer updates the ANN weights based on the gradient computed on each random batch (during each epoch). The initial learning rate (InitLR) is $10^{-3}$, with a decaying rate (DecR) of 0.1 and a decaying step (DecS) of 1000 epochs. The details are reported in \Cref{tab: ANNandOptimizationDetails}.

Furthermore, during the optimization routine, the 20\% of $\T$ is used to \emph{validate} the result (namely, to avoid the overfitting of the training set). Eventually, the remaining 10\% of $\T$ is used for \emph{testing} the quality of the ANN. \Cref{fig: AccuracyANN} provides a visual insight into the high accuracy the ANN reaches at the end of the training process. \Cref{fig: AccuracyANN}a shows the scatter plot of the real CVs $a_k$, $k=1,\dots,21$, against the ones predicted using the ANN, for the experiment FxA; in \Cref{fig: AccuracyANN}b a zoom is applied to the ``worst'' case, namely the CV $a_{10}$, for which anyway is reached the extremely high $R^2$ score of 0.9994.

\subsection{Sampling and pricing}

Given the trained model from the previous section, we can now focus on the actual sampling and/or pricing of options. In particular, for the first two experiments, we consider the following payoffs:
\begin{align}
\label{eqn: FxAPayoff}
    \text{FxA:}&\quad \max\Big(\omega\big( A(S)-K_2\big), 0\Big), \quad \frac{K_2}{S(t_0)}\in[0.8,1.2],\\
    \label{eqn: FxLPayoff}
    \text{FxL:}&\quad \max\Big(\omega\big( A(S)-K_2\big), 0\Big), \quad \frac{K_2}{S(t_0)}\in[0.8,1.2],
\end{align}
whereas for the third, FxFlA, we have:
\begin{equation}
\label{eqn: FxFlAPayoff}
    \max\Big(\omega\big( A(S)- K_1S(T)-K_2\big), 0\Big), \quad\bigg(\frac{K_1}{S(t_0)},\frac{K_2}{S(t_0)}\bigg)\in[0.4,0.6]\times[0.4,0.6],
\end{equation}
with $A(S)$ defined as in (\ref{eqn: SpecsAsian}) for FxA and FxFlA, and as in (\ref{eqn: SpecsLook}) for FxL.

All the results in the following sections are compared to a MC benchmark obtained using the almost exact simulation described in \Cref{ap: ExactSimulationHeston}.

\subsubsection{Numerical results for FxA}
\label{sssec: FxA}

The procedure described in \Cref{ssec: SpecialPricing} is employed to solve the problem of pricing fixed-strike discrete Asian options with payoffs as in (\ref{eqn: FxAPayoff}), with underlying stock price initial value $S(t_0)=1$. In this experiment, the ANN is trained on Heston model parameters' ranges, which include the examples proposed in \cite{SimulationHestonAndersen} representing some real applications. Furthermore, we note the following aspect.
\begin{rem}[Scaled underlying process and (positive) homogeneity of $A$]
\label{rem: ScaledProcessAndHomogeneity}
The unit initial value is not restrictive. Indeed, the underlying stock price dynamics in \Cref{eqn: StockDynamics,eqn: VolatilityDynamics} are independent of $S(t_0)$, with the initial value only accounting as a multiplicative constant (this can be easily proved by means of It\^o's lemma). Moreover, since $A(S)$ is (positive) homogeneous in $S$ also $A(S)$ can be easily ``scaled'' according to the desired initial value. Particularly, given the constant $c>0$, $c A(S)\overset{\d}{=} A(c S)$.
\end{rem}

\begin{table}[t]
\begin{center}
\caption{\footnotesize Tested Heston parameter sets.}
\label{tab: HestonSets}
\vspace{-0.2cm}
\resizebox{0.48\textwidth}{!}{\begin{tabular}{llllllll}
\hline\hline
\multicolumn{1}{c|}{Set} &\multicolumn{1}{c}{$r$} & \multicolumn{1}{c}{$\kappa$} & \multicolumn{1}{c}{$\gamma$} & \multicolumn{1}{c}{$\rho$} & \multicolumn{1}{c}{$\Bar{v}$} & \multicolumn{1}{c}{$v_0$} & \multicolumn{1}{c}{$T$}\\
\multicolumn{1}{c|}{I} &
\multicolumn{1}{c}{\multirow{1}{*}{0.04}} & \multicolumn{1}{c}{\multirow{1}{*}{0.5}} & \multicolumn{1}{c}{\multirow{1}{*}{1.0}} & \multicolumn{1}{c}{\multirow{1}{*}{-0.8}} & \multicolumn{1}{c}{\multirow{1}{*}{0.08}} & \multicolumn{1}{c}{\multirow{1}{*}{0.05}} & \multicolumn{1}{c}{\multirow{1}{*}{1.0}}\\\hline
\multicolumn{1}{c|}{Set} &\multicolumn{1}{c}{$r$} & \multicolumn{1}{c}{$\kappa$} & \multicolumn{1}{c}{$\gamma$} & \multicolumn{1}{c}{$\rho$} & \multicolumn{1}{c}{$\Bar{v}$} & \multicolumn{1}{c}{$v_0$} & \multicolumn{1}{c}{$T$}\\
\multicolumn{1}{c|}{II} &
\multicolumn{1}{c}{\multirow{1}{*}{0.02}} & \multicolumn{1}{c}{\multirow{1}{*}{1.0}} & \multicolumn{1}{c}{\multirow{1}{*}{0.9}} & \multicolumn{1}{c}{\multirow{1}{*}{-0.6}} & \multicolumn{1}{c}{\multirow{1}{*}{0.10}} & \multicolumn{1}{c}{\multirow{1}{*}{0.13}} & \multicolumn{1}{c}{\multirow{1}{*}{1.5}}\\\hline
\multicolumn{1}{c|}{Set} &\multicolumn{1}{c}{$r$} & \multicolumn{1}{c}{$\kappa$} & \multicolumn{1}{c}{$\gamma$} & \multicolumn{1}{c}{$\rho$} & \multicolumn{1}{c}{$\Bar{v}$} & \multicolumn{1}{c}{$v_0$} & \multicolumn{1}{c}{$T$}\\
\multicolumn{1}{c|}{III} &\multicolumn{1}{c}{\multirow{1}{*}{0.01}} & \multicolumn{1}{c}{\multirow{1}{*}{0.46}} & \multicolumn{1}{c}{\multirow{1}{*}{0.99}} & \multicolumn{1}{c}{\multirow{1}{*}{-0.79}} & \multicolumn{1}{c}{\multirow{1}{*}{0.09}} & \multicolumn{1}{c}{\multirow{1}{*}{0.11}} & \multicolumn{1}{c}{\multirow{1}{*}{1.0}}\\
\end{tabular}}
\end{center}
\end{table}

\begin{figure}[b]
\vspace{-0.5cm}
    \centering
    \subfloat[\centering]{{\includegraphics[height=4.3cm,width=4.68cm]{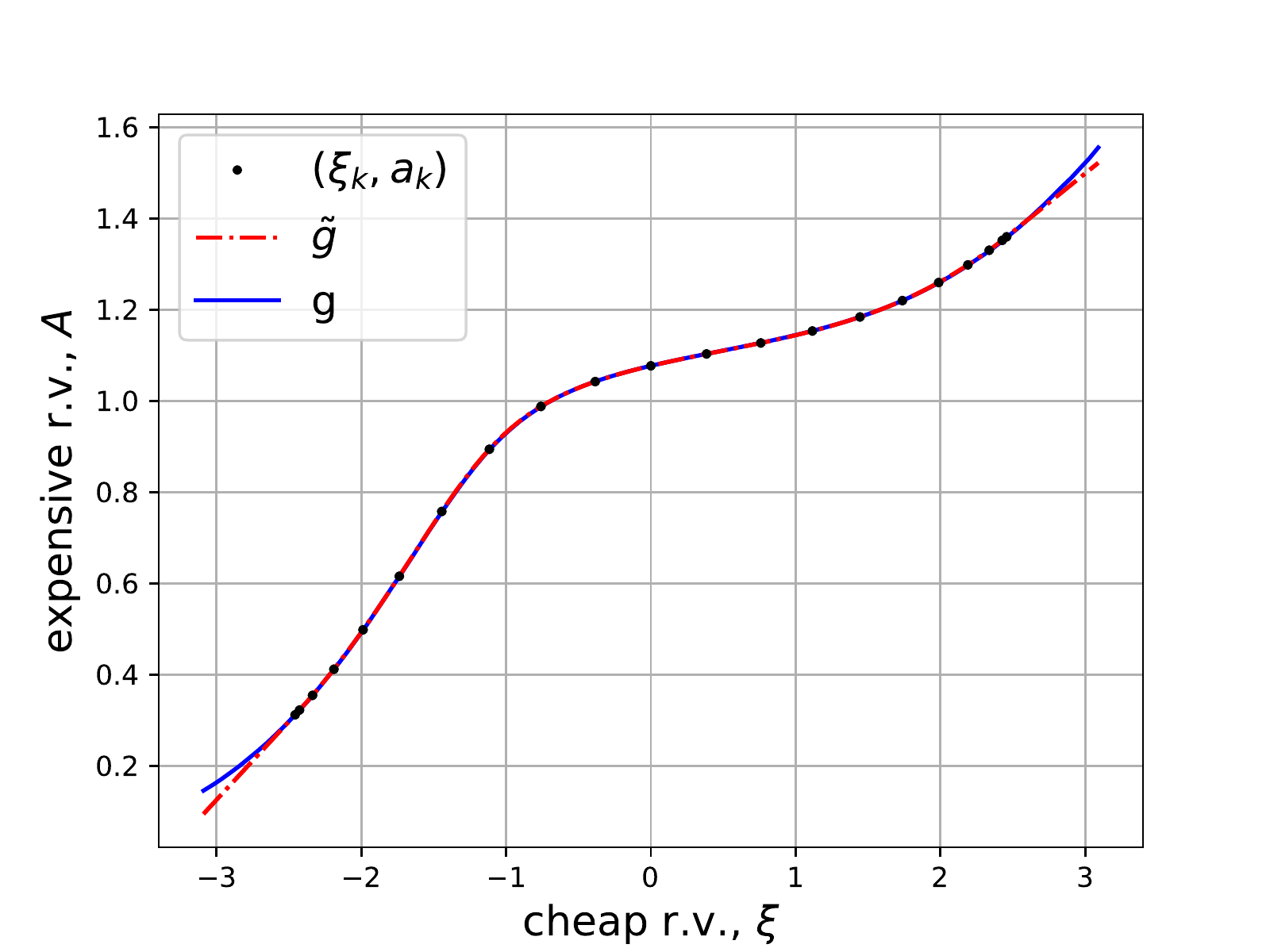} }}%
    \subfloat[\centering]{{\includegraphics[height=4.3cm,width=4.68cm]{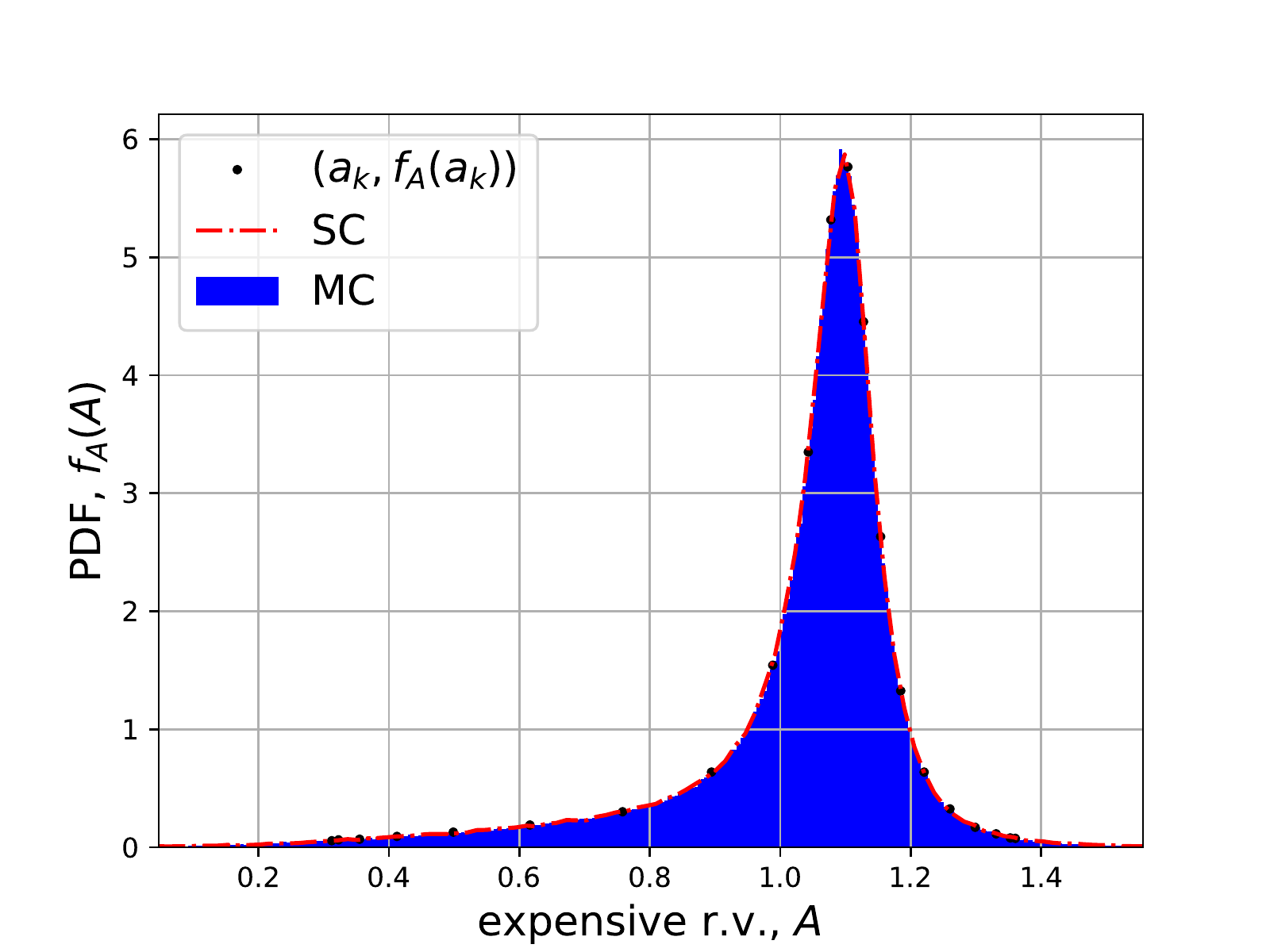} }}%
    \subfloat[\centering]{{\includegraphics[height=4.3cm,width=4.68cm]{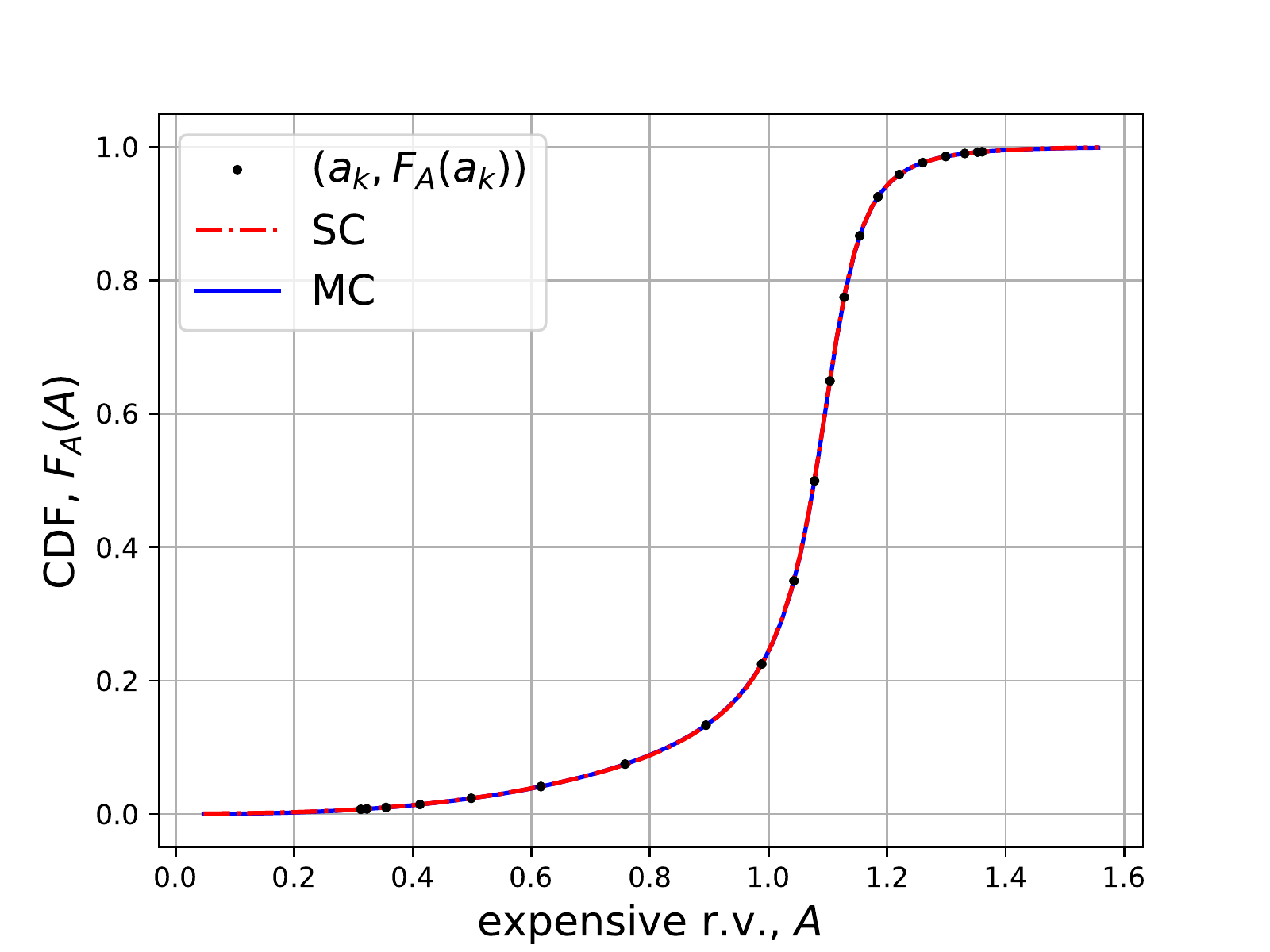} }}%
    \vspace{-0.1cm}
    \caption{\footnotesize Left: comparison between maps $g$ from MC and $\g$ from SC (with linear extrapolation). Center: comparison between MC histogram of $A(S)$ and the numerical PDF from SC. Right: comparison between MC numerical CDF of $A(S)$ and the numerical CDF from SC.}%
    \label{fig: AsianDistribution}%
    \vspace{-0.3cm}
\end{figure}

The methodology is tested on different randomly chosen sets of Heston parameters. We report the details for two specific sets, Set I and Set II, available in \Cref{tab: HestonSets}. For Set I, in \Cref{fig: AsianDistribution}, we compare the population from $A(S)$ obtained employing SC with the MC benchmark (both with $N_{\texttt{paths}}=10^5$ paths each). \Cref{fig: AsianDistribution}a shows the highly accurate approximation of the exact map $g=F_A^{-1}\circ F_{\xi}$ by means of the piecewise polynomial approximation $\g$. As a consequence, both the PDF (see \Cref{fig: AsianDistribution}b) and the CDF (see \Cref{fig: AsianDistribution}c) perfectly match. Moreover, the methodology is employed to value fixed-strike arithmetic Asian options (calls and puts) for two sets of parameters (Set I and Set II) and 50 different strikes $K_2$. The resulting prices are reported in Figures \ref{fig: AsianCallPut}a and \ref{fig: AsianCallPut}c. \textcolor{black}{Figures \ref{fig: AsianCallPut}b and \ref{fig: AsianCallPut}d display the standard errors for MC and SC on the left y-axis, and the pricing error $\epsilon_P$ for SC and SA (in units of the corresponding MC standard error) on the right y-axis. The pricing error tends to be more significant the more out of money the option is, due to the smaller SE.}

\textcolor{black}{The timing results are reported (in milliseconds) in \Cref{tab: TimingResults}, for different choices of $N_{\texttt{paths}}$. Both SC and SA times only refer to the online pricing computational time, namely the time required for the pricing procedure, excluding the training sets generation and the ANNs training, which are performed -- only once -- offline.}
The semi-analytic formula requires a constant evaluation time, as well as the SC technique (if $N_{\texttt{paths}}$ is fixed), whereas the MC simulation is dependent on the parameter $T$ (since we decided to keep the same MC step in every simulation). Therefore, the methodology becomes more convenient the longer the maturity of the option $T$. The option pricing computational time is reduced by tens of times when using SC to generate the population from $A$, while it is reduced by hundreds of times if the semi-analytic (SA) formula is employed.

\begin{figure}[t]
    \centering
    \subfloat[\centering]{{\includegraphics[width=6.8cm]{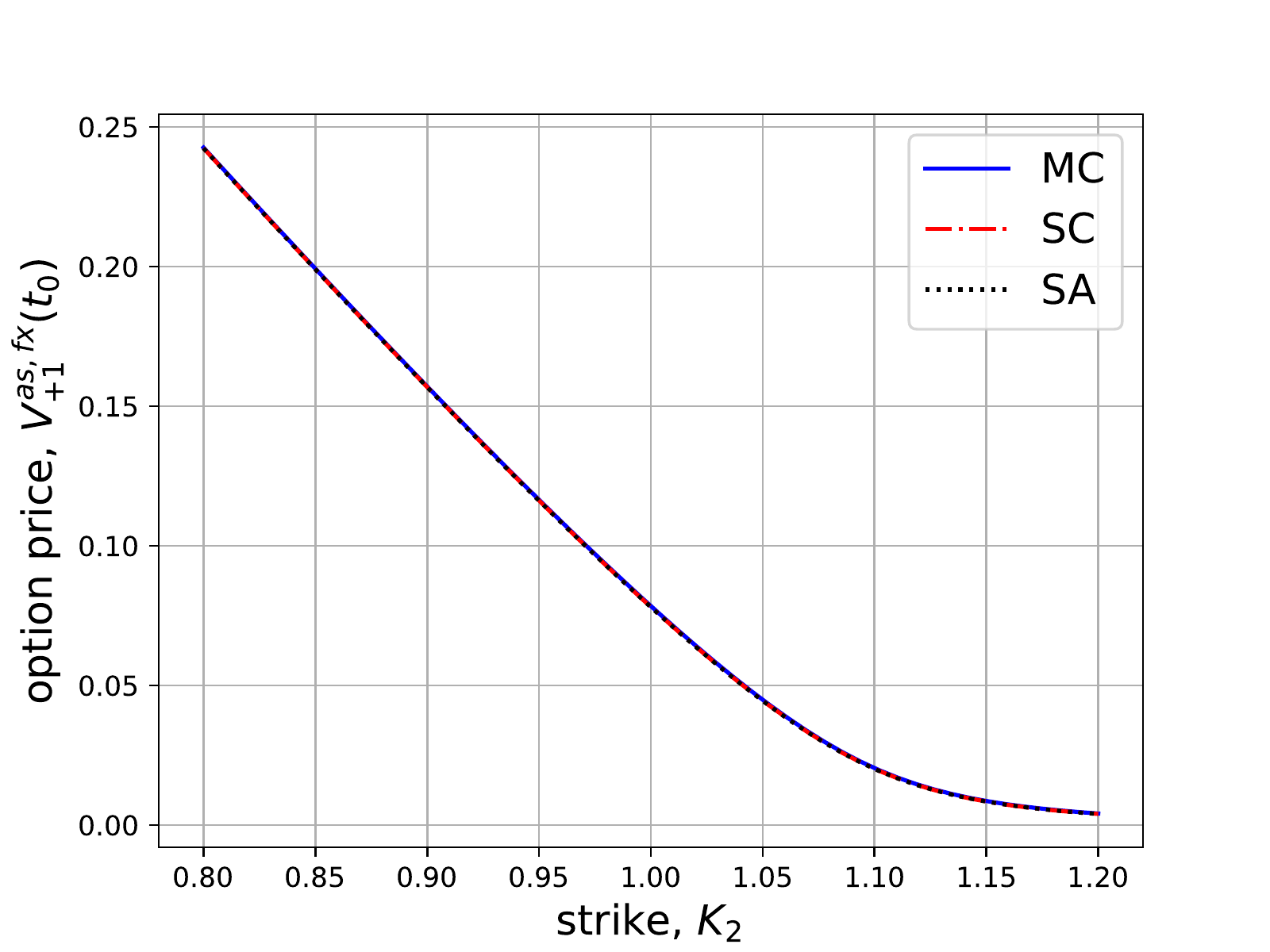} }}%
    ~
    \subfloat[\centering]{{\includegraphics[width=6.8cm]{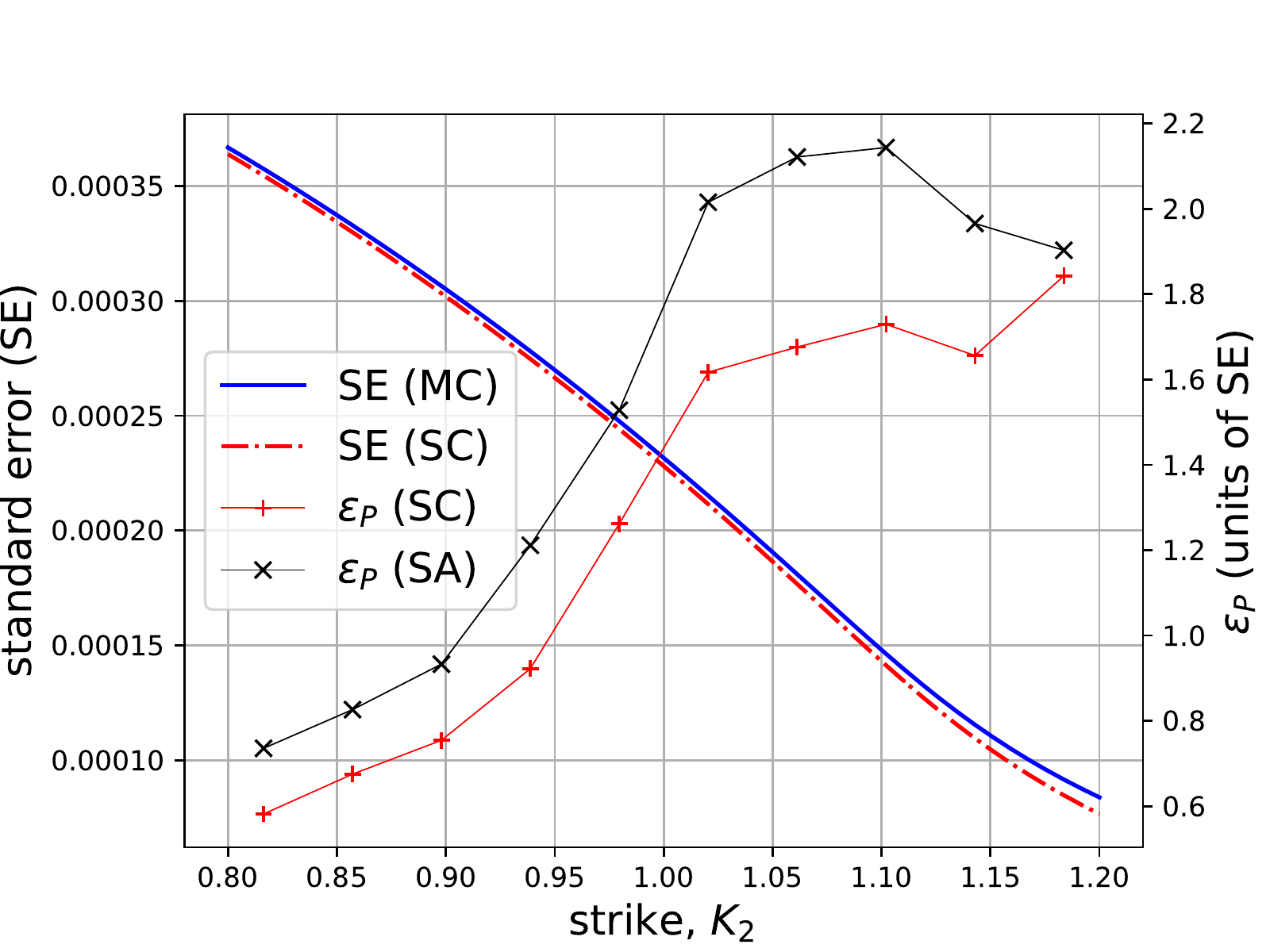} }}\\
    \subfloat[\centering]{{\includegraphics[width=6.8cm]{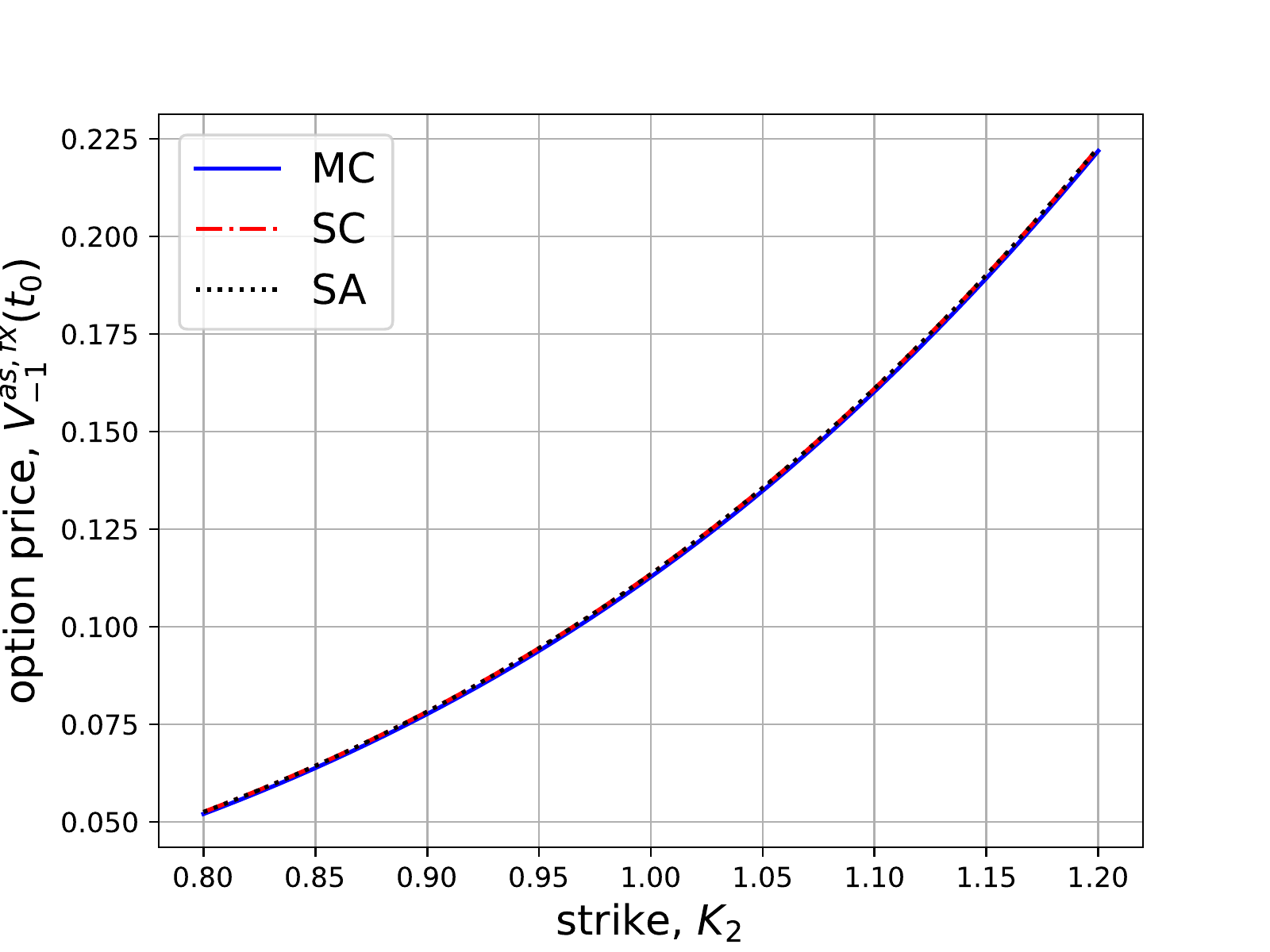} }}%
    ~
    \subfloat[\centering]{{\includegraphics[width=6.8cm]{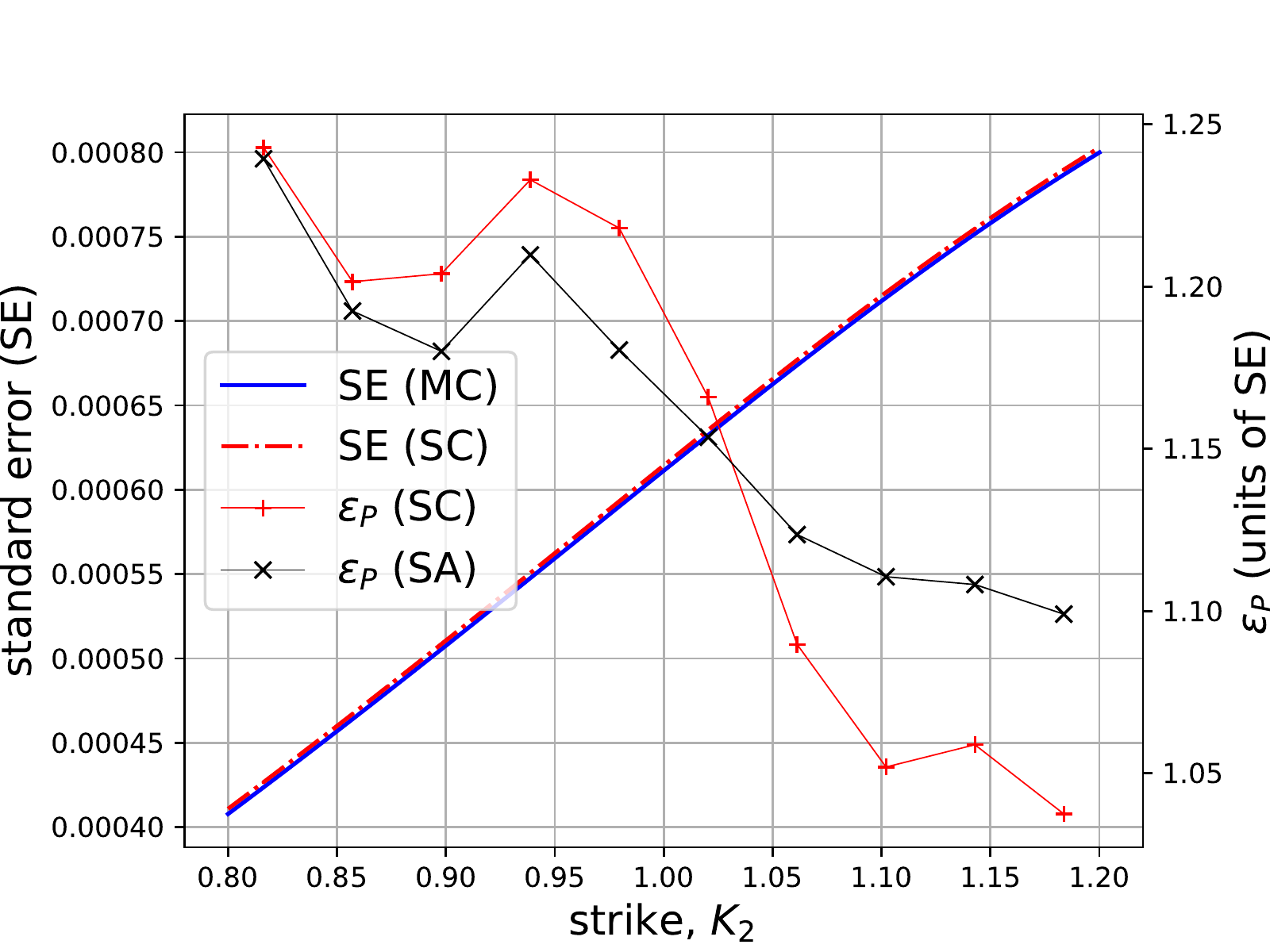} }}%
    \caption{\footnotesize Left: fixed-strike discrete Asian option prices for 50 different strikes under the Heston model dynamics. Right: MC and SC standard errors (left y-axis) and pricing errors in units of the corresponding MC standard errors SE, obtained with $N_{\texttt{paths}}=10^5$ for both MC and SC (right y-axis). Up: call with Set I. Down: put with Set II.}%
    \label{fig: AsianCallPut}%
\end{figure}

\begin{table}[b]
\begin{center}
\caption{\footnotesize Timing results for option pricing.}
\label{tab: TimingResults}
\vspace{-0.2cm}
\resizebox{0.95\textwidth}{!}{\begin{tabular}{llllllllllll}
\hline\hline
\multicolumn{1}{c}{} & \multicolumn{3}{c}{Fx Asian call (Set I)} & \multicolumn{1}{c}{} & \multicolumn{3}{c}{Fx Asian put (Set II)}& \multicolumn{1}{c}{} & \multicolumn{3}{c}{Fx-Fl Asian call (Set III)}\\\cline{1-12}
\multicolumn{1}{c}{} &\multicolumn{1}{c}{$N_{\texttt{paths}}$} & \multicolumn{1}{c}{time} & \multicolumn{1}{c}{speed-up} & \multicolumn{1}{c}{} & \multicolumn{1}{c}{$N_{\texttt{paths}}$} & \multicolumn{1}{c}{time} & \multicolumn{1}{c}{speed-up} & \multicolumn{1}{c}{} & \multicolumn{1}{c}{$N_{\texttt{paths}}$} & \multicolumn{1}{c}{time} & \multicolumn{1}{c}{speed-up}\\\cline{2-4}\cline{6-8}\cline{10-12}
\multicolumn{1}{c|}{MC} & \multicolumn{1}{c}{$10^5$} & \multicolumn{1}{c}{1951 ms} & \multicolumn{1}{c|}{-} & \multicolumn{1}{c|}{MC} & \multicolumn{1}{c}{$10^5$} & \multicolumn{1}{c}{3134 ms} & \multicolumn{1}{c|}{-} & \multicolumn{1}{c|}{MC}  & \multicolumn{1}{c}{$10^5$} & \multicolumn{1}{c}{1996 ms} & \multicolumn{1}{c}{-}\\
\multicolumn{1}{c|}{SC} & \multicolumn{1}{c}{$10^5$} & \multicolumn{1}{c}{24 ms} & \multicolumn{1}{c|}{81} & \multicolumn{1}{c|}{SC} & \multicolumn{1}{c}{$10^5$} & \multicolumn{1}{c}{29 ms} & \multicolumn{1}{c|}{107} & \multicolumn{1}{c|}{SC}  & \multicolumn{1}{c}{$10^5$} & \multicolumn{1}{c}{69 ms} & \multicolumn{1}{c}{29}\\
\multicolumn{1}{c|}{SA} & \multicolumn{1}{c}{-} & \multicolumn{1}{c}{15 ms} & \multicolumn{1}{c|}{130} & \multicolumn{1}{c|}{SA} & \multicolumn{1}{c}{-} & \multicolumn{1}{c}{16 ms} & \multicolumn{1}{c|}{196} & \multicolumn{1}{c|}{-} & \multicolumn{1}{c}{-} & \multicolumn{1}{c}{-} & \multicolumn{1}{c}{-}\\\\\cline{2-4}\cline{6-8}\cline{10-12}
\multicolumn{1}{c|}{MC} & \multicolumn{1}{c}{$2\times 10^5$} & \multicolumn{1}{c}{3905 ms} & \multicolumn{1}{c|}{-} & \multicolumn{1}{c|}{MC} & \multicolumn{1}{c}{$2\times 10^5$} & \multicolumn{1}{c}{6386 ms} & \multicolumn{1}{c|}{-} & \multicolumn{1}{c|}{MC}  & \multicolumn{1}{c}{$5\times 10^4$} & \multicolumn{1}{c}{953 ms} & \multicolumn{1}{c}{-}\\
\multicolumn{1}{c|}{SC} & \multicolumn{1}{c}{$2\times 10^5$} & \multicolumn{1}{c}{58 ms} & \multicolumn{1}{c|}{68} & \multicolumn{1}{c|}{SC} & \multicolumn{1}{c}{$2\times 10^5$} & \multicolumn{1}{c}{55 ms} & \multicolumn{1}{c|}{116} & \multicolumn{1}{c|}{SC}  & \multicolumn{1}{c}{$5\times 10^4$} & \multicolumn{1}{c}{43 ms} & \multicolumn{1}{c}{22}\\
\multicolumn{1}{c|}{SA} & \multicolumn{1}{c}{-} & \multicolumn{1}{c}{15 ms} & \multicolumn{1}{c|}{263} & \multicolumn{1}{c|}{SA} & \multicolumn{1}{c}{-} & \multicolumn{1}{c}{15 ms} & \multicolumn{1}{c|}{440} & \multicolumn{1}{c|}{-} & \multicolumn{1}{c}{-} & \multicolumn{1}{c}{-} & \multicolumn{1}{c}{-}\\
\end{tabular}}
\end{center}
\vspace{-0.5cm}
\end{table}

Eventually, the error distribution of 10000 different option prices (one call and one put with 100 values for $K_2$ each for 50 randomly chosen Heston parameters' sets and maturities) is given in \Cref{fig: ErrorDistribution}a. The SA prices (assuming a linear extrapolation) are compared with MC benchmarks. The outcome is satisfactory and shows the robustness of the methodology proposed. \textcolor{black}{The error is smaller than three times the MC standard error in more than 90\% of the cases when $N_{\texttt{paths}}=10^5$ (red histogram), and in more than 80\% of the cases when $N_{\texttt{paths}}=2\times 10^5$ (blue histogram).}

\begin{figure}[t]
\vspace{-0.5cm}
    \centering
    \subfloat[\centering]{{\includegraphics[height=4.3cm,width=4.68cm]{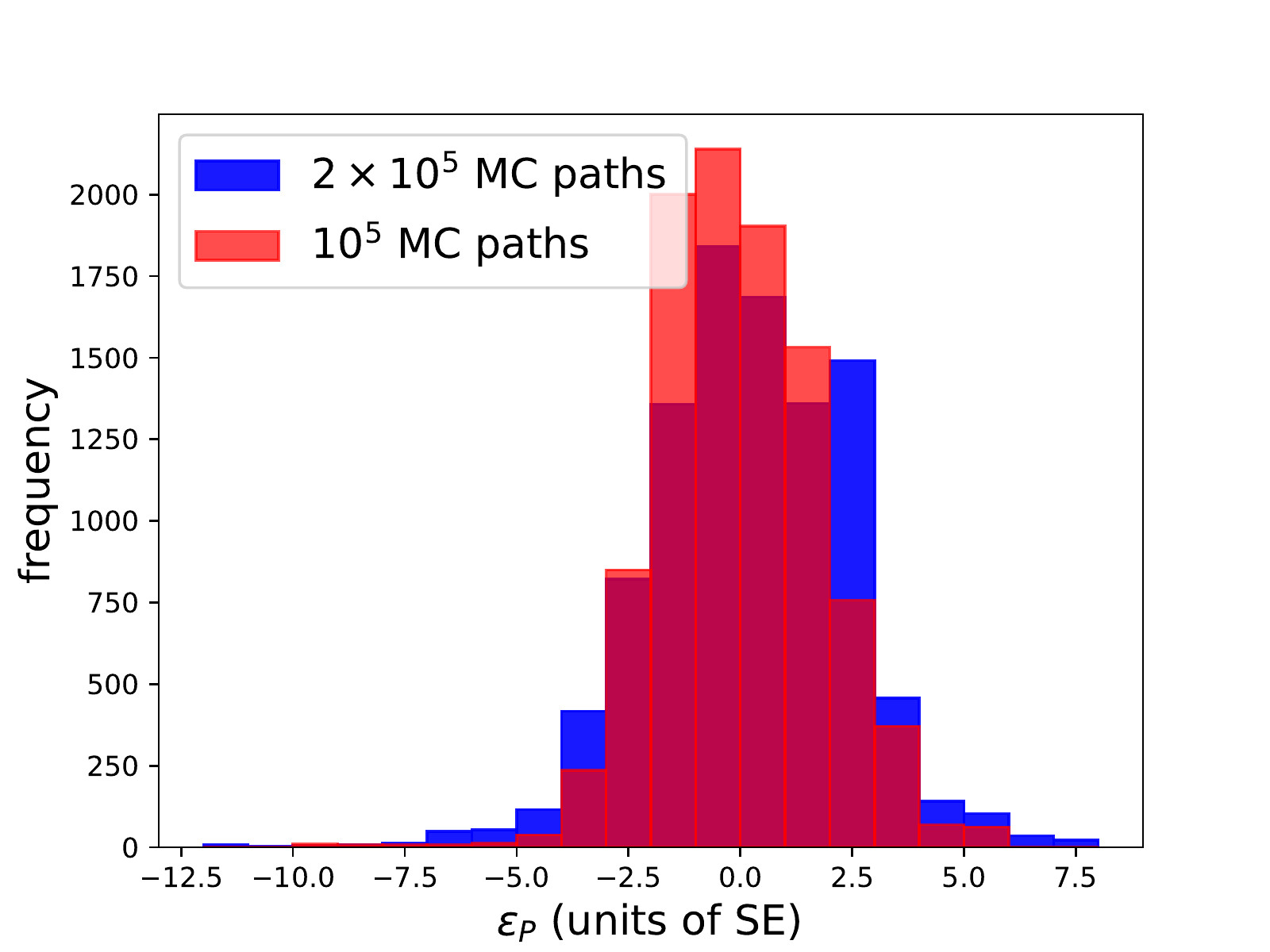} }}%
    \subfloat[\centering]{{\includegraphics[height=4.3cm,width=4.68cm]{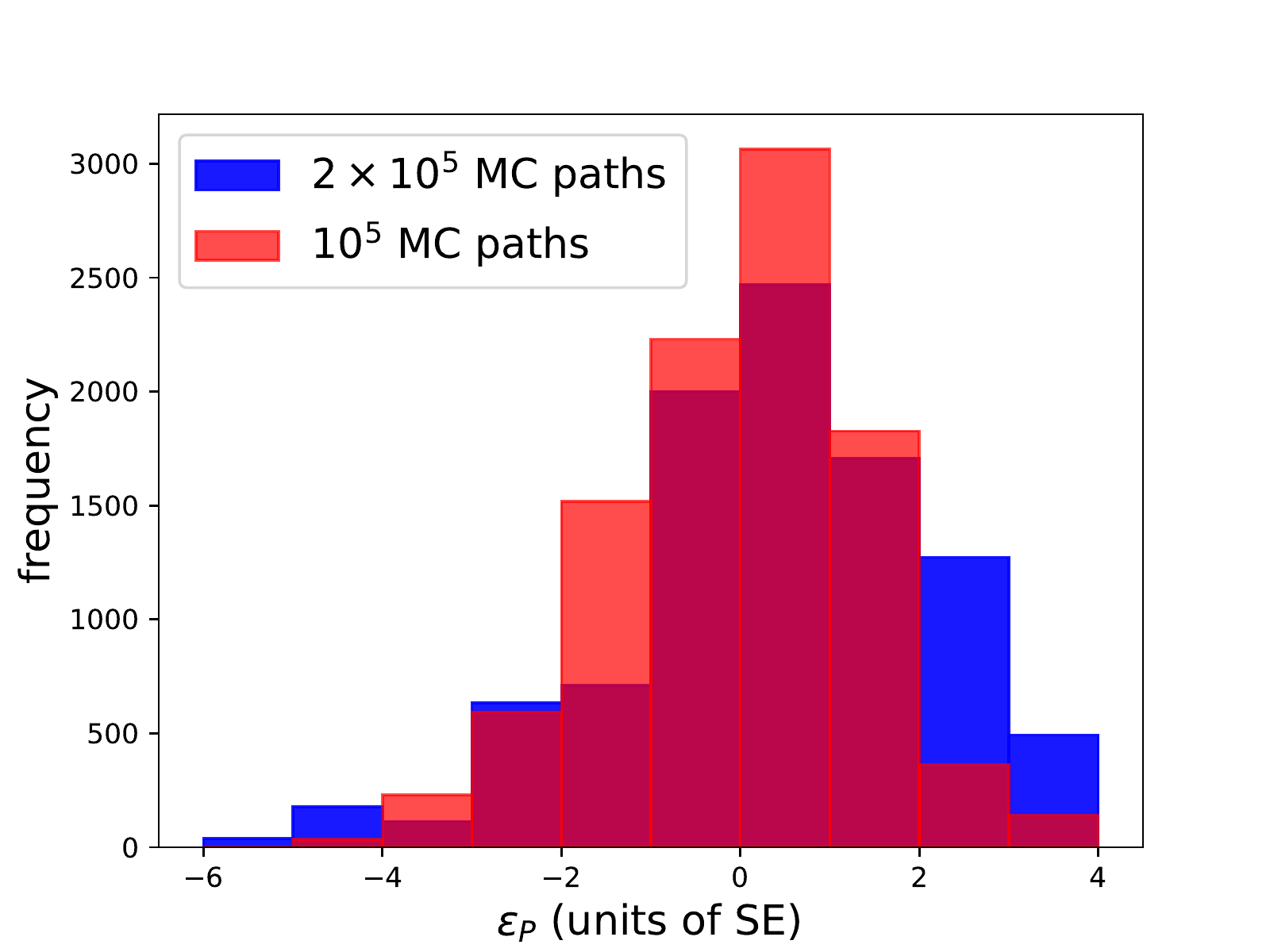} }}%
    \subfloat[\centering]{{\includegraphics[height=4.3cm,width=4.68cm]{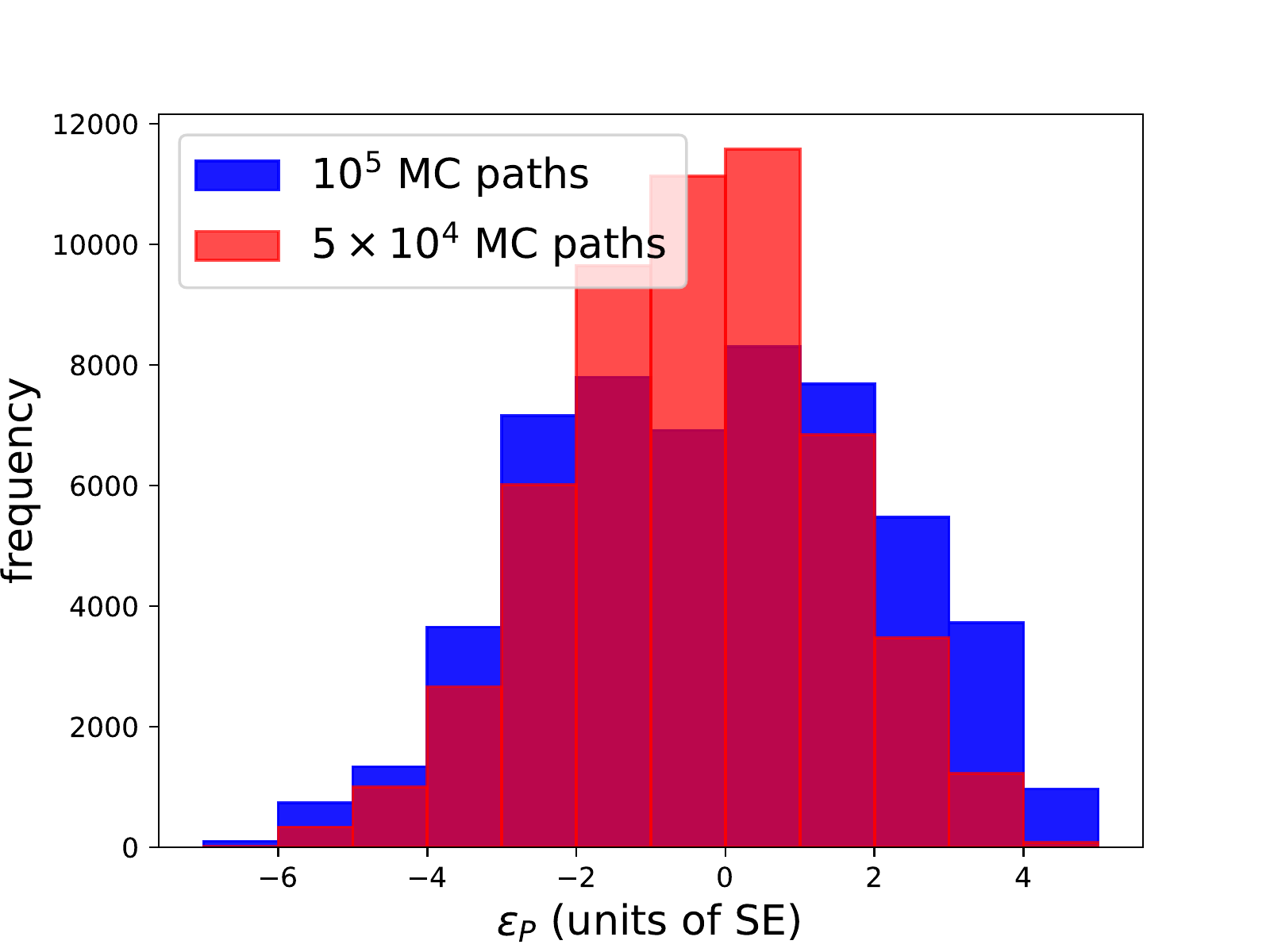} }}%
    \vspace{-0.1cm}
    \caption{\footnotesize  Pricing error $\epsilon_P$ distribution. The error is expressed in units of the corresponding Monte Carlo benchmark standard error (SE), and reported for two different numbers of MC paths. Left: FxA (10000 values). Center: FxL (10000 values). Right: FxFlA (54000 values).}%
    \label{fig: ErrorDistribution}%
\end{figure}

\subsubsection{Numerical results for FxL}

\label{sssec: LookbackSwaption}

In this section, we use the procedure to efficiently value the \emph{pipeline risk} typically embedded in mortgages. The pipeline risk (in mortgages) is one of the risks a financial institution is exposed to any time a client buys a mortgage. Indeed, when a client decides to buy a mortgage there is a \emph{grace period} (from one to three months in The Netherlands), during which (s)he is allowed to pick the most convenient rate, namely the minimum. 

Observe now that a suitable Lookback option on a payer swap, namely a Lookback payer swaption, perfectly replicates the optionality offered to the client. In other words, the ``cost'' of the pipeline risk is assessed by evaluating a proper Lookback swaption. In particular, we price fixed-strike discrete Lookback swaptions with a monitoring period of 3 months and 3-day frequency (see \ref{eqn: SpecsLook}).

We assume the underlying swap rate $S(t)$, $0\leq t\leq T$, is driven by the dynamics given in \Cref{eqn: StockDynamics,eqn: VolatilityDynamics} with $S(t_0)=0.05$ and parallel shifted of $\theta=0.03$. By introducing a shift, we handle also the possible situation of \emph{negative rates}, which otherwise would require a different model specification. 
\begin{rem}[Parallel shift of $S(t)$ and $A(S)$]
A parallel shift $\theta$ does not affect the training set generation. Indeed, since $A(S)=\min_n S(t_n)$, it holds $A(S-\theta)\overset{\d}{=}A(S)-\theta$. Then, it is enough to sample from $A(S)$ (built from the paths of $S(t)$ without shift) and perform the shift afterward, to get the desired distribution.
\end{rem}

The timing results from the application of the procedure are comparable to the ones in \Cref{sssec: FxA} (see \Cref{tab: TimingResults}). Furthermore, in \Cref{fig: ErrorDistribution}b, we report the pricing error distribution obtained by pricing call and put options for 50 randomly chosen Heston parameters' sets and 100 values for $K_2$. \textcolor{black}{In this experiment, we observe that over 95\% of the errors are within three MC SE when $N_{\texttt{paths}}=10^5$, and the percentage is about 90\% when $N_{\texttt{paths}}=2\times 10^5$.}

\subsubsection{Numerical results for FxFlA}
\label{ssec: GeneralExperiment}

\begin{figure}[t]%
\vspace{-1.cm}
    \centering
    \subfloat[\centering]{{\includegraphics[width=6.8cm]{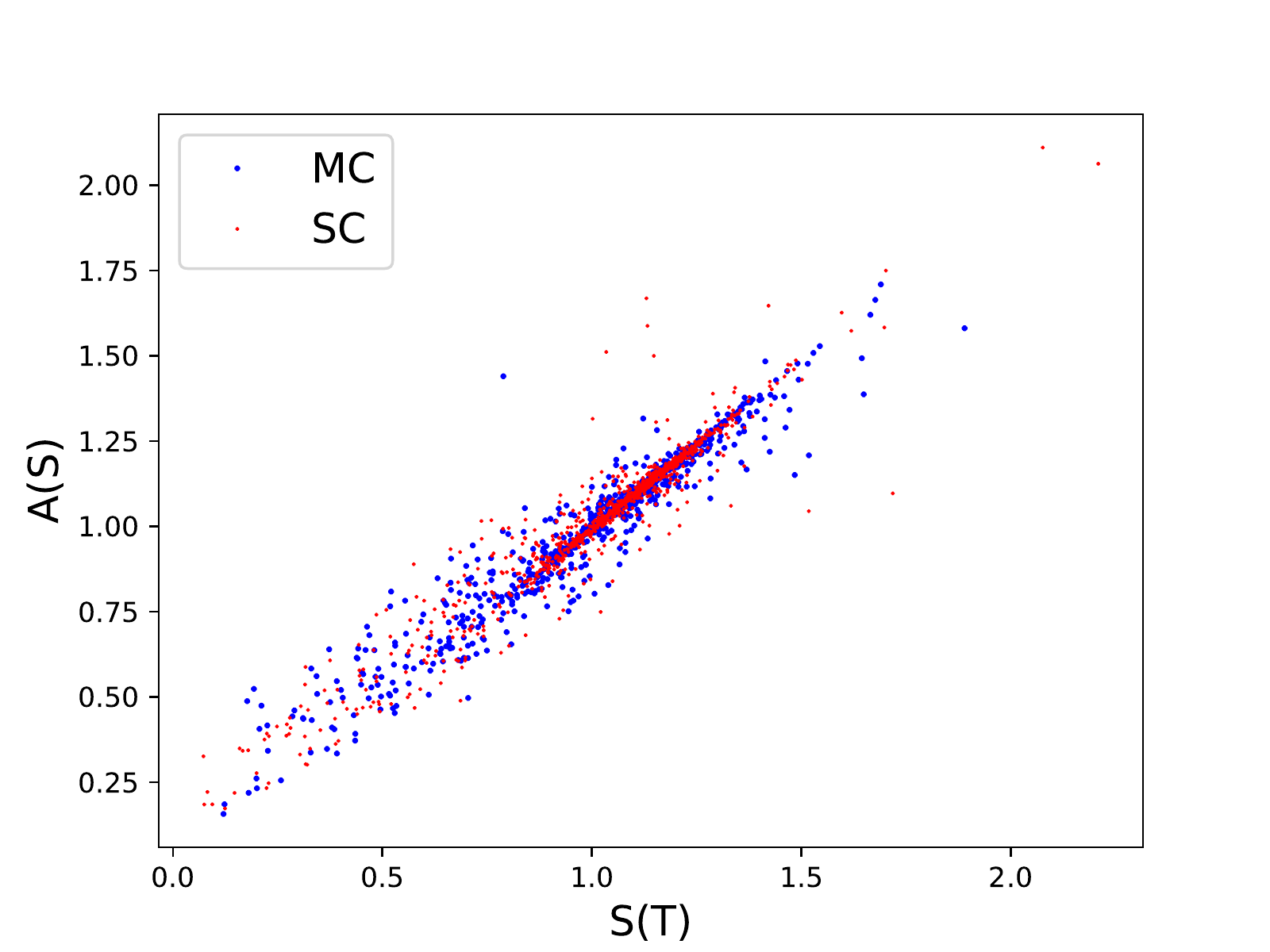} }}%
    ~
    \subfloat[\centering]{{\includegraphics[width=6.8cm]{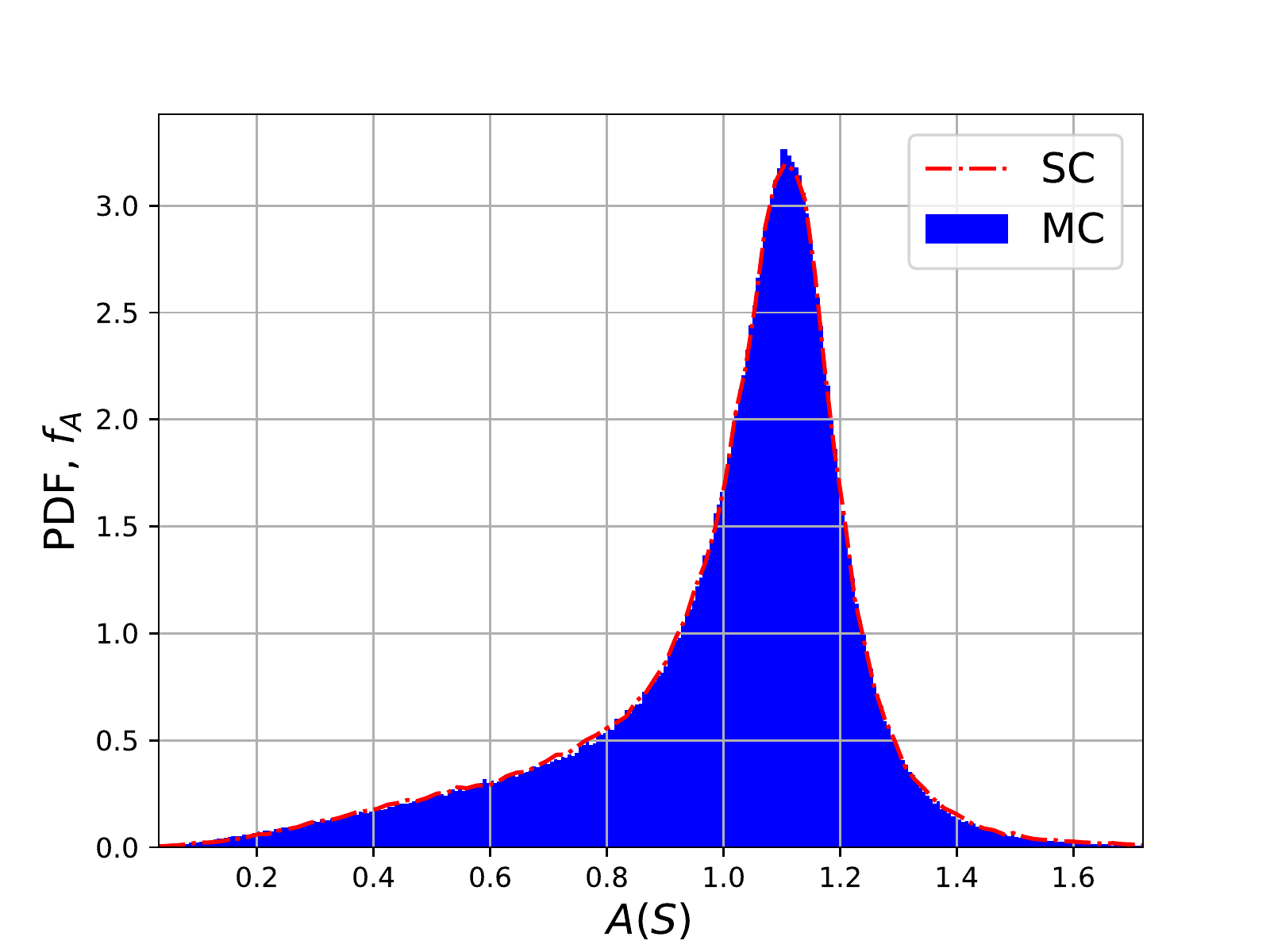} }}%
    \caption{\footnotesize Left: joint distribution of $(S(T),A(S))$, for the Heston set of parameters in Set III (see \Cref{tab: HestonSets}). Right: marginal distribution of $A(S)$, for the same set of Heston parameters.}%
    \vspace{-0.1cm}
    \label{fig: ConditionalDistribution}%
\vspace{-0.3cm}
\end{figure}

\begin{figure}[b!]
    \centering
    \subfloat[\centering]{{\includegraphics[width=6.8cm]{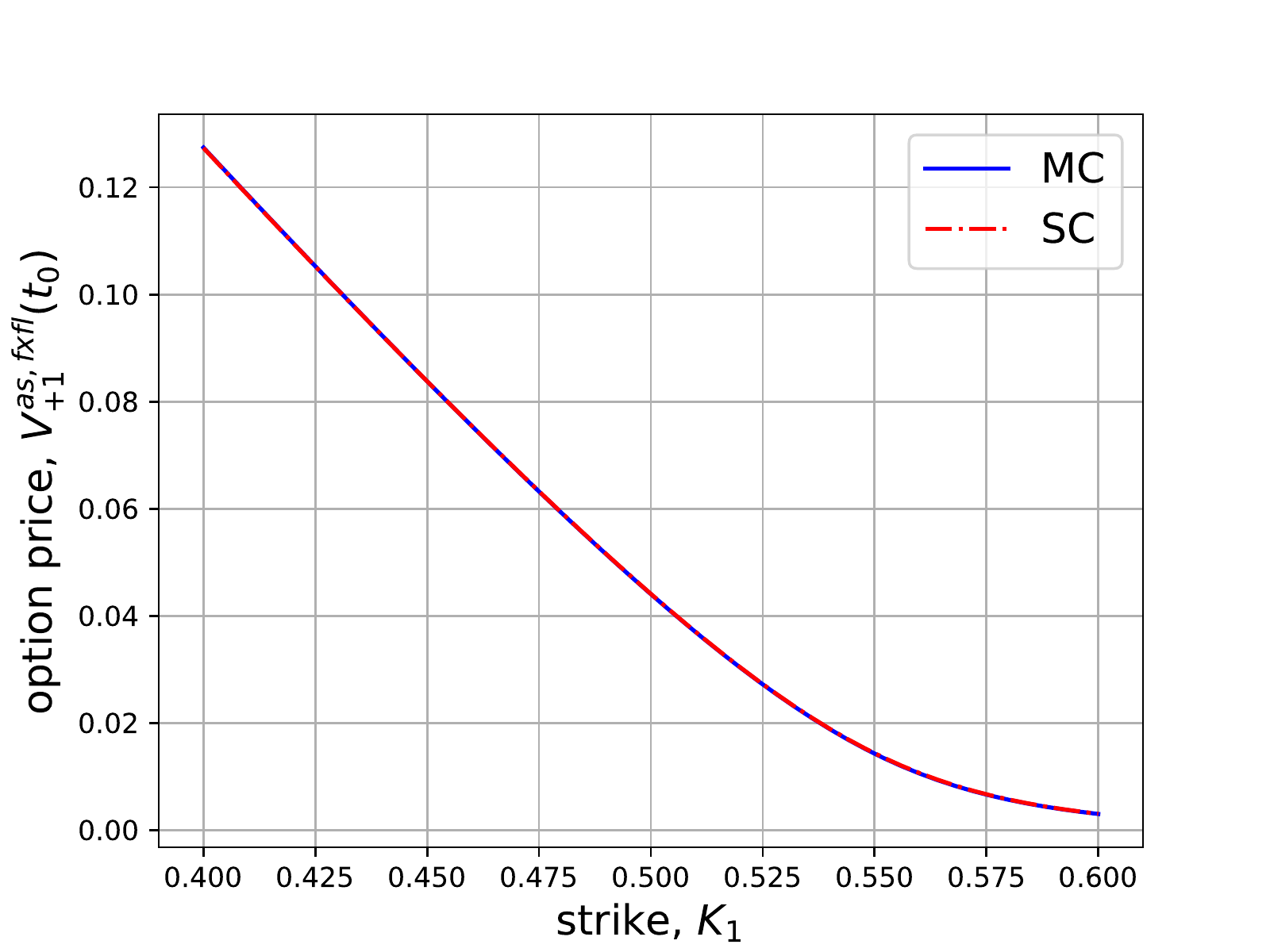} }}%
    ~
    \subfloat[\centering]{{\includegraphics[width=6.8cm]{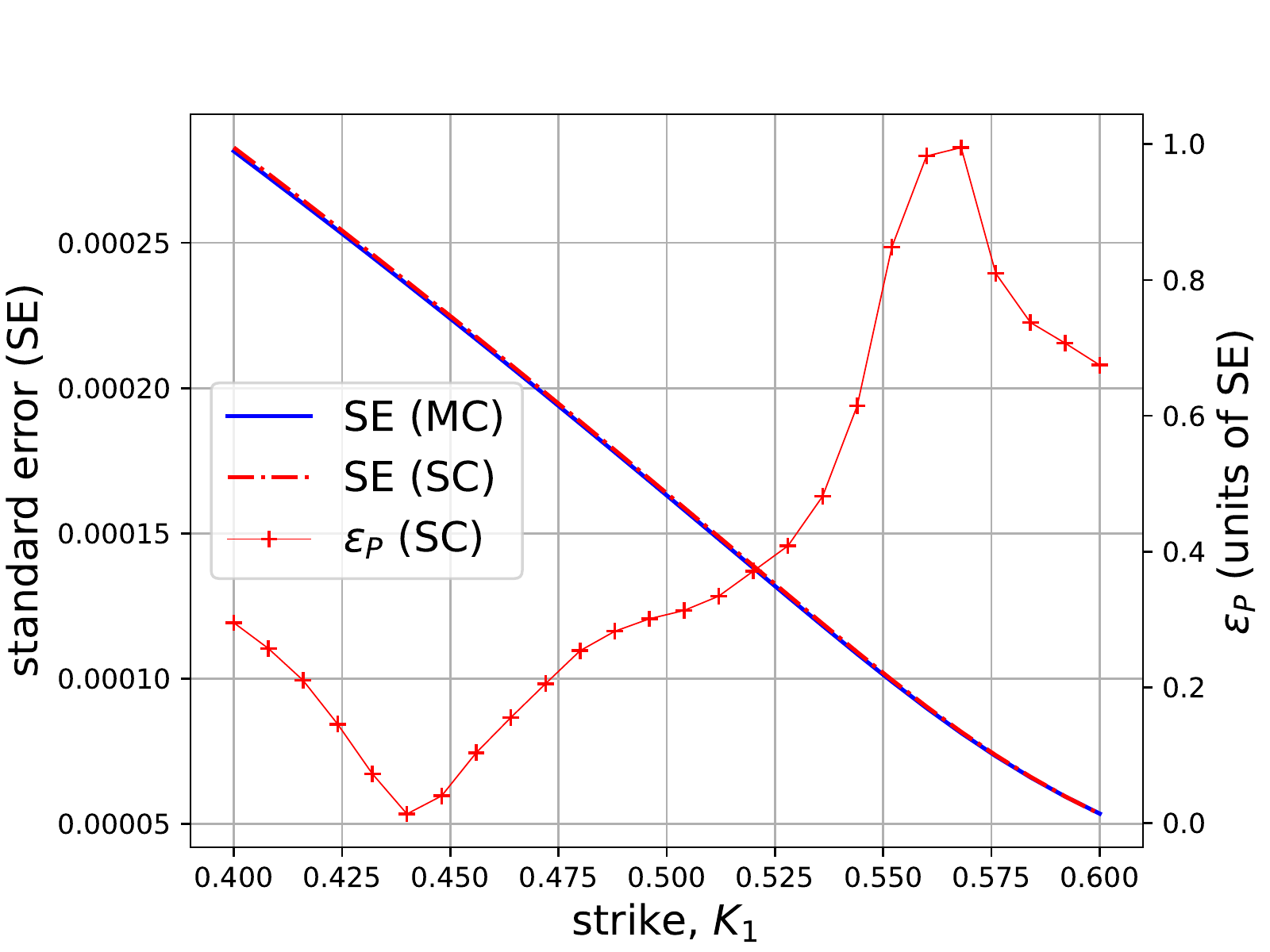} }}%
    \caption{\footnotesize Left: fixed-float-strike discrete Asian call option prices for 30 different $K_1$, $K_2=0.5$, and Heston parameter Set III. Right: MC and SC standard errors (left y-axis) and pricing errors in units of the corresponding MC standard errors SE, obtained with $N_{\texttt{paths}}=10^5$ for both MC and SC (right y-axis).}%
    \label{fig: CondAsian}%
\end{figure}

The third and last experiment consists in the conditional sampling of $A(S)|S(T)$. The samples are then used, together with $S(T)$, for pricing of fixed- and floating-strikes discrete Asian options.

The procedure is tested on 30 randomly chosen Heston parameter sets. Both the MC benchmark and the SC procedure are based on populations with $N_{\texttt{paths}}=10^5$ paths.
In the SC pricing, the process $S(T)$ is sampled using the COS method \cite{TheBook} combined with SC (COS-SC) to avoid a huge number of CDF numerical inversions \cite{StochasticCollocation}, and so increase efficiency. Then, we apply the grid-based algorithm of \Cref{ssec: GeneralPricing}. We evaluate the ANN at a reduced number of reference quantiles, and we compute the CVs corresponding to each sample of $S(T)$ by means of linear interpolation. The CVs identify the map $\g$, which is employed for the conditional sampling. \Cref{fig: ConditionalDistribution}a shows the cloud of points (for parameters' Set III in \Cref{tab: HestonSets}) of the bivariate distribution $(S(T), A(S))$ generated using the procedure against the MC benchmark, while \Cref{fig: ConditionalDistribution}b only focuses on the marginal distribution of $A(S)$. We can appreciate a good matching between the two distributions.

For each set, we price $30\times 30$ call and put options with equally-spaced strikes $K_1$ and $K_2$ in the ranges of (\ref{eqn: FxFlAPayoff}). The results for the particular case of call options with Set III in \Cref{tab: HestonSets} and $K_2=0.5$ \textcolor{black}{are illustrated in \Cref{fig: CondAsian}. The SC option prices and the corresponding MC benchmarks are plotted on the left. On the right, the standard errors for MC and SC are reported (left y-axis), and the absolute pricing error $\epsilon_P$ is shown in units of the corresponding MC standard error (right y-axis).} The timing results are reported in \Cref{tab: TimingResults} for $N_{\texttt{paths}}=10^5$ and $N_{\texttt{paths}}=5\times 10^4$ and they keep into account the computational time for the pricing of all the 900 different call options (according to each combination of $K_1$ and $K_2$). 
\textcolor{black}{\Cref{fig: ErrorDistribution}c displays the pricing error $\epsilon_P$ distribution for the 30 randomly chosen Heston parameter sets (for each set 900 call and 900 put options are priced for every combination of $K_1$ and $K_2$, so the overall number of data is 54000). About 92\% of the errors are within three MC SE when $N_{\texttt{paths}}=5\times 10^4$. The percentage is around 80\% when $N_{\texttt{paths}}=10^5$ are used.}

It might look surprising that the performance of the general procedure is better than the special one, but actually, it is not. Indeed, an important aspect needs to be accounted for. The high correlation between $S(T)$ and $A(S)$ makes the task of the ANN easier, in the sense that the distribution of $A(S)|S(T)$ typically has a low variance around $S(T)$. In other words, the ANN has to ``learn'' only a small correction to get $A(S)|S(T)$ from $S(T)$ ($S(T)$ is an input for the ANN!), whereas the ANN in the special procedure learns the unconditional distribution of $A(S)$ with no information on the final value $S(T)$, and so only on the Heston parameters. The result is that a small lack in accuracy due to a not-perfect training process, or most likely to a not-perfect training set, is less significant in the conditional case rather than in the unconditional.

\section{Conclusion}  
\label{sec: Conclusions}

In this work, we presented a robust, data-driven procedure for the pricing of fixed- and floating-strike discrete Asian and Lookback options, under the stochastic volatility model of Heston. The usage of Stochastic Collocation techniques combined with deep artificial neural networks allows the methodology to reach a high level of accuracy while reducing the computational time by tens of times when compared to Monte Carlo benchmarks. Furthermore, we provide a semi-analytic pricing formula for European-type options with a payoff given by piecewise polynomial mapping of a standard normal random variable. Such a result allows to even increase the speed-up up to hundreds of times, without deterioration on the accuracy.
An analysis of the error provides theoretical justification for the proposed scheme, and the problem of sampling from both unconditional and conditional distributions is further investigated from a numerical perspective. Finally, the numerical results provide clear evidence of the quality of the method.

\section*{Acknowledgement}
We would like to thank the two anonymous reviewers whose insightful comments and constructive feedback greatly contributed to the enhancement of the quality of this article. Additionally, we extend our appreciation to Rabobank (the Netherlands) for funding this project.

\bibliography{Paper.bib}

\begin{thebibliography}{10}

\bibitem{SimulationHestonAndersen}
L.~B. Andersen.
\newblock Efficient {S}imulation of the {H}eston {S}tochastic {V}olatility {M}odel.
\newblock {\em Journal of Computational Finance}, 11, 2007.

\bibitem{andersen2007moment}
L.~B. Andersen and V.~V. Piterbarg.
\newblock Moment explosions in stochastic volatility models.
\newblock {\em Finance and Stochastics}, 11(1):29--50, 2007.

\bibitem{benhamou2002fast}
E.~Benhamou et~al.
\newblock Fast {F}ourier {T}ransform for {D}iscrete {A}sian {O}ptions.
\newblock {\em Journal of Computational Finance}, 6(1):49--68, 2002.

\bibitem{EntireChF}
S.~V. Berezin.
\newblock On {A}nalytic {C}haracteristic {F}unctions and {P}rocesses {G}overned by {SDE}s.
\newblock {\em St. Petersburg Polytechnical University Journal: Physics and Mathematics}, 2(2):144--149, 2016.

\bibitem{ExactSimulationHeston}
M.~Broadie and {\"O}.~Kaya.
\newblock Exact {S}imulation of {S}tochastic {V}olatility and other {A}ffine {J}ump {D}iffusion {P}rocesses.
\newblock {\em Operation Research}, 54(2):217--231, 2006.

\bibitem{conze1991path}
A.~Conze.
\newblock Path {D}ependent {O}ptions: The {C}ase of {L}ookback {O}ptions.
\newblock {\em The Journal of Finance}, 46(5):1893--1907, 1991.

\bibitem{corsaro2019general}
S.~Corsaro, I.~Kyriakou, D.~Marazzina, and Z.~Marino.
\newblock A general framework for pricing asian options under stochastic volatility on parallel architectures.
\newblock {\em European Journal of Operational Research}, 272(3):1082--1095, 2019.

\bibitem{DEVREESE2010780}
J.~Devreese, D.~Lemmens, and J.~Tempere.
\newblock Path {I}ntegral {A}pproach to {A}sian {O}ptions in the {B}lack-{S}choles {M}odel.
\newblock {\em Physica A: Statistical Mechanics and its Applications}, 389(4):780--788, 2010.

\bibitem{ChebyshevPOP2018}
M.~Ga{\ss}, K.~Glau, M.~Mahlstedt, and M.~Mair.
\newblock Chebyshev {I}nterpolation for {P}arametric {O}ption {P}ricing.
\newblock {\em Finance and Stochastics}, 22:701--731, 2018.

\bibitem{goldman1979path}
M.~B. Goldman, H.~B. Sosin, and M.~A. Gatto.
\newblock Path {D}ependent {O}ptions: ``{B}uy at the low, sell at the high''.
\newblock {\em The Journal of Finance}, 34(5):1111--1127, 1979.

\bibitem{TruncatedMoments}
L.~A. Grzelak and C.~W. Oosterlee.
\newblock From {A}rbitrage to {A}rbitrage-free {I}mplied {V}olatilities.
\newblock {\em Journal of Computational Finance}, 20(3):1--19, 2016.

\bibitem{StochasticCollocation}
L.~A. Grzelak, J.~A.~S. Witteveen, M.~Su\'arez-Taboada, and C.~W. Oosterlee.
\newblock The {S}tochastic {C}ollocation {M}onte {C}arlo {S}ampler: {H}ighly {E}fficient {S}ampling from {E}xpensive {D}istributions.
\newblock {\em Quantitative Finance}, 19(2):339--356, 2019.

\bibitem{henderson2002equivalence}
V.~Henderson and R.~Wojakowski.
\newblock On the equivalence of floating-and fixed-strike asian options.
\newblock {\em Journal of Applied Probability}, 39(2):391--394, 2002.

\bibitem{heston1993closed}
S.~L. Heston.
\newblock A {C}losed-form {S}olution for {O}ptions with {S}tochastic {V}olatility with {A}pplications to {B}ond and {C}urrency {O}ptions.
\newblock {\em The Review of Financial Studies}, 6(2):327--343, 1993.

\bibitem{heynen1995lookback}
R.~C. Heynen and H.~M. Kat.
\newblock Lookback {O}ptions with {D}iscrete and {P}artial {M}onitoring of the {U}nderlying {P}rice.
\newblock {\em Applied Mathematical Finance}, 2(4):273--284, 1995.

\bibitem{KEMNA1990113}
A.~Kemna and A.~Vorst.
\newblock A {P}ricing {M}ethod for {O}ptions based on {A}verage {A}sset {V}alues.
\newblock {\em Journal of Banking \& Finance}, 14(1):113--129, 1990.

\bibitem{KendallAdvancedStatistics}
M.~G. Kendall.
\newblock {\em The {A}dvanced {T}heory of {S}tatistics}.
\newblock Wiley, 1945.

\bibitem{Adam}
D.~P. Kingma and J.~L. Ba.
\newblock Adam: {A} {M}ethod for {S}tochastic {O}ptimization.
\newblock {\em Published as a conference paper at the 3rd International Conference for Learning Representations, San Diego}, 2015.

\bibitem{kirkby2020efficient}
J.~L. Kirkby and D.~Nguyen.
\newblock Efficient asian option pricing under regime switching jump diffusions and stochastic volatility models.
\newblock {\em Annals of Finance}, 16(3):307--351, 2020.

\bibitem{leitao2018swift}
A.~Leitao, L.~Ortiz-Gracia, and E.~I. Wagner.
\newblock Swift valuation of discretely monitored arithmetic asian options.
\newblock {\em Journal of computational science}, 28:120--139, 2018.

\bibitem{leitao2021ctmc}
A.~Leitao~Rodriguez, J.~Lars~Kirkby, and L.~Ortiz-Gracia.
\newblock The {CTMC}--heston model: calibration and exotic option pricing with {SWIFT}.
\newblock {\em Journal of Computational Finance}, 24(4), 2021.

\bibitem{SevenLeague}
S.~Liu, L.~A. Grzelak, and C.~W. Oosterlee.
\newblock The {S}even-{L}eague {S}cheme: Deep {L}earning for {L}arge {T}ime {S}tep {M}onte {C}arlo {S}imulations of {S}tochastic {D}ifferential {E}quations.
\newblock {\em Risks}, 10(3):47, 2022.

\bibitem{ActivationFunctions}
C.~Nwankpa, W.~Ijomah, A.~Gachagan, and S.~Marshall.
\newblock Activation {F}unctions: {C}omparison of {T}rends in {P}ractice and {R}esearch for {D}eep {L}earning.
\newblock {\em arXiv e-prints}, 2018.

\bibitem{TheBook}
C.~W. Oosterlee and L.~A. Grzelak.
\newblock {\em Mathematical Modeling and Computation in Finance}.
\newblock World Scientific Publishing Europe Ltd., 57 Shelton Street, Covent Garden, London WC2H 9HE, 2019.

\bibitem{FastSamplingBridge}
L.~Perotti and L.~A. Grzelak.
\newblock Fast {S}ampling from {T}ime-integrated {B}ridges using {D}eep {L}earning.
\newblock {\em Journal of Computational Mathematics and Data Science}, 5, 2022.

\bibitem{trefethen2019approximation}
L.~N. Trefethen.
\newblock {\em Approximation Theory and Approximation Practice, Extended Edition}.
\newblock Society for Industrial and Applied Mathematics, 2019.

\bibitem{vecer2002unified}
J.~Vecer.
\newblock Unified {P}ricing of {A}sian {O}ptions.
\newblock {\em Risk}, 15(6):113--116, 2002.

\bibitem{OptionPricingPDE}
P.~Wilmott, J.~Dewynne, and S.~Howison.
\newblock {\em Option Pricing: Mathematical Models and Computation}.
\newblock Oxford Financial Press: Oxford, 1993.

\bibitem{ErrorBoundsCheb}
S.~Xiang, X.~Chen, and H.~Wang.
\newblock Error {B}ounds for {A}pproximation in {C}hebyshev {P}oints.
\newblock {\em Numerische Mathematik}, 116:463--491, 2010.

\bibitem{ErrorBoundANN}
D.~Yarotsky.
\newblock Error {B}ounds for {A}pproximations with {D}eep {R}e{L}{U} {N}etworks.
\newblock {\em Neural Networks}, 94:103--114, 2017.

\bibitem{AsianOptionChfLevy}
B.~Zhang and C.~W. Oosterlee.
\newblock Efficient {P}ricing of {E}uropean-style {A}sian {O}ptions under {E}xponential {L}\'evy {P}rocesses based on {F}ourier {C}osine {E}xpansions.
\newblock {\em SIAM Journal of Financial Mathematics}, 4:399--426, 2013.

\end{thebibliography}

\appendix
\section{Proofs and lemmas}
\subsection{Underlying process measure for floating-strike options}
\label{ap: StockMeasureForfloatstrike}
\begin{proof}[Proof of Proposition \ref{prop: floatAsianStockMeasure}]
Under the risk-neutral measure $\Q$ the value at time $t_0 \geq 0$ of a floating-strike Asian Option, with maturity $T>t_0$, underlying process $S(t)$, and future monitoring dates $t_n$, $n\in\{1,\dots,N\}$, is given by:
\begin{eqnarray*}
V^{\fl}_\omega(t_0)=\E_{t_0}^{\Q}\left[\frac{M(t_0)}{M(T)} \max\Big(\omega\big({A}(S)-K_1{S}(T)\big), 0\Big)\right].
\end{eqnarray*}
We define a Radon-Nikodym derivative to change the measure from the risk-neutral measure $\Q$ to the stock measure $\Q^S$, namely the measure associated with the num\'eraire $S$:
\[\frac{\d\Q^{S}}{\d\Q}=\frac{ S(T)}{ S(t_0)}\frac{M(t_0)}{M(T)},\]
which yields the following present value, expressed as an expectation under the measure $\Q^S$:
\begin{align*}
V^{\fl}_\omega(t_0)&=\E_{t_0}^{S}\left[\frac{M(t_0)}{M(T)} \max\Big(\omega\big({A}(S)-K_1{S}(T)\big), 0\Big)\frac{ S(t_0)}{ S(T)}\frac{M(T)}{M(t_0)}\right]\\
&=S(t_0)\E_{t_0}^{S}\bigg[ \max\bigg(\omega\bigg(\frac{A(S)}{S(T)}-K_1\bigg), 0\bigg)\bigg].
\end{align*}
\end{proof}

\begin{prop}[The Heston model under the underlying process measure]
\label{prop: HestonUnderStockMeasure}
Using the same notation as in (\ref{eqn: StockDynamics}) and (\ref{eqn: VolatilityDynamics}), under the stock $S(t)$ measure, $\Q^S$, the Heston framework yields the following dynamics for the process $S(t)$:
\begin{align*}
\d S(t)&=\left(r+v(t)\right)S(t)\dt + \sqrt{v(t)}{S}(t)\dW_x^S(t), & S(t_0)&=S_0,\\
\label{eqn: VolatilityDynamicsStockMeasure}
\d v(t)&=\kappa^*(\bar{v}^*-v(t))\dt + \gamma\sqrt{v(t)}\dW_v^S(t), & v(t_0)&=v_0,
\end{align*}
with $\kappa^*=\kappa-\gamma\rho$, $\bar{v}^*=\kappa\bar{v}/\kappa^*$, and $W_x^S(t)$ and $W_v^S(t)$ are BMs under the underlying process measure $\Q^S$ such that $\d W_s^S(t)\d W_v^S(t)=\rho \dt$.
\end{prop}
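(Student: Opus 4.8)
The plan is to write the change-of-measure density explicitly, recognise it as a Dol\'eans--Dade exponential of a stochastic integral against $W_x$, and then apply Girsanov's theorem to read off the drift corrections for the two correlated Brownian motions. Since $M(t)=M(t_0)\e^{r(t-t_0)}$, the density process associated with the num\'eraire $S(t)$ is
\[
  Z(t):=\left.\frac{\d\Q^S}{\d\Q}\right|_{\F_t}=\frac{S(t)}{S(t_0)}\frac{M(t_0)}{M(t)}=\e^{-r(t-t_0)}\frac{S(t)}{S(t_0)},
\]
i.e.\ the discounted stock price, normalised so that $Z(t_0)=1$. Applying It\^o's lemma to \eqref{eqn: StockDynamics} gives $\d Z(t)=Z(t)\sqrt{v(t)}\,\d W_x(t)$, so $Z$ is a non-negative local martingale under $\Q$; under the standing parameter assumptions, and in particular with $\kappa^*:=\kappa-\gamma\rho\geq 0$ (automatic when $\rho\leq 0$), it is a true martingale, consistently with the moment-explosion caveat mentioned in \Cref{sec: GeneralFramework}. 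Hence $Z$ is a legitimate density process and $\Q^S$ is a well-defined equivalent measure.

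Next I would carry out the Girsanov step. Write $W_v(t)=\rho W_x(t)+\sqrt{1-\rho^2}\,W_\perp(t)$, where $W_\perp$ is a $\Q$-Brownian motion independent of $W_x$. Because the density $Z$ is driven only by $W_x$, Girsanov's theorem gives that
\[
  W_x^S(t):=W_x(t)-\int_{t_0}^{t}\sqrt{v(s)}\,\d s,\qquad W_\perp^S(t):=W_\perp(t)
\]
are independent $\Q^S$-Brownian motions; consequently $W_v^S:=\rho W_x^S+\sqrt{1-\rho^2}\,W_\perp^S$ is a $\Q^S$-Brownian motion with $\d W_x^S(t)\,\d W_v^S(t)=\rho\,\dt$, and, crucially,
\[
  \d W_x(t)=\d W_x^S(t)+\sqrt{v(t)}\,\dt,\qquad \d W_v(t)=\d W_v^S(t)+\rho\sqrt{v(t)}\,\dt.
\]
The single delicate point of the whole argument is precisely this: the drift correction picked up by the \emph{variance} driver carries the extra factor $\rho$, inherited from the $W_x$-component of $W_v$; everything else is routine bookkeeping. (A purely technical alternative, avoiding the explicit decomposition, is to invoke the multidimensional Girsanov theorem with market-price-of-risk vector $(\sqrt{v},0)$ and covariance structure $\binom{1\ \ \rho}{\rho\ \ 1}$, which yields the same shifts.)

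Finally I would substitute these relations into \eqref{eqn: StockDynamics}--\eqref{eqn: VolatilityDynamics}. The stock equation becomes $\d S(t)=\big(r+v(t)\big)S(t)\,\dt+\sqrt{v(t)}\,S(t)\,\d W_x^S(t)$, while the variance equation becomes
\[
  \d v(t)=\big[\kappa\bar v-(\kappa-\gamma\rho)v(t)\big]\,\dt+\gamma\sqrt{v(t)}\,\d W_v^S(t).
\]
Pulling $\kappa^*=\kappa-\gamma\rho$ out of the drift and writing $\bar v^*=\kappa\bar v/\kappa^*$ rewrites the bracket as $\kappa^*\big(\bar v^*-v(t)\big)$, which is exactly the asserted dynamics; this also makes transparent why the statement implicitly requires $\kappa^*\neq 0$ (indeed $\kappa^*>0$) for $\Q^S$ to be well defined.
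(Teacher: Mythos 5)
Your proof is correct and follows essentially the same route as the paper's: a change of num\'eraire to the stock measure, the Girsanov shift $\d W_x(t)=\d W_x^S(t)+\sqrt{v(t)}\,\dt$, the Cholesky decomposition of $W_v$ to propagate the shift into the variance equation, and the identification $\kappa^*=\kappa-\gamma\rho$, $\bar v^*=\kappa\bar v/\kappa^*$. The only (immaterial) difference is that you obtain the drift shift by writing the density process as a stochastic exponential and invoking Girsanov directly, whereas the paper derives it by imposing the martingale property of $M(t)/S(t)$ under $\Q^S$; your added remarks on the true-martingale condition $\kappa-\gamma\rho\geq 0$ and on $\kappa^*\neq 0$ are correct refinements the paper leaves implicit.
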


\begin{proof}
Under the stock measure $\Q^S$, implied by the stock $S(t)$ as num\'eraire, all the assets discounted with $S$ must be martingales. Particularly, this entails that $M(t)/S(t)$ must be a martingale, where $M(t)$ is the money-savings account defined as $\d M(t)=rM(t)\dt$. 

From (\ref{eqn: StockDynamics}) and (\ref{eqn: VolatilityDynamics}), using Cholesky decomposition, the Heston model can be expressed in terms of independent Brownian motions, $\widetilde{W}_x(t)$ and $\widetilde{W}_v(t)$, through the following system of SDEs:
\begin{align*}
    	\d S(t)&=rS(t)\dt+\sqrt{v(t)}S(t)\d\widetilde{W}_x(t),\\
\d
v(t)&=\kappa\left(\bar{v}-v(t)\right)\dt+\gamma\sqrt{v(t)}\left[\rho\d\widetilde{W}_x(t)+\sqrt{1-\rho^2}\d\widetilde{W}_v(t)\right].
\end{align*}
After application of It\^o's Lemma, we find:
\begin{align*}
\d \frac{M(t)}{S(t)}&=\frac{1}{S(t)}rM(t)\dt -\frac{M(t)}{S^2(t)}\left(rS(t)\dt +\sqrt{v(t)}S(t)\d\widetilde{W}_x(t)\right)+\frac{M(t)}{S^3(t)}v(t)S^2(t) \dt,
\end{align*}
which implies the following measure transformation:
\[\d\widetilde{W}_x(t)=\d\widetilde{W}_x^S(t)+\sqrt{v(t)}\dt.\]
Thus, under the stock measure $\Q^S$, the dynamics of $S(t)$ reads:
\begin{align*}
\d S(t)&=rS(t)\dt + \sqrt{v(t)}{S}(t)\left(\d\widetilde{W}_x^S(t)+\sqrt{v(t)}\dt\right)\\
&=\big(r+v(t)\big)S(t)\dt + \sqrt{v(t)}{S}(t)\d\widetilde{W}_x^S(t),
\end{align*}
while for the dynamics of $v(t)$ we find:
\begin{align*}
\d
v(t)&=\kappa\left(\bar{v}-v(t)\right)\dt+\gamma\sqrt{v(t)}\left[\rho\left(\d\widetilde{W}_x^S(t)+\sqrt{v(t)}\dt\right)+\sqrt{1-\rho^2}\d\widetilde{W}_v(t)\right]\\
&=\left[\kappa\left(\bar{v}-v(t)\right)+\gamma \rho v(t)\right]\dt+\gamma\sqrt{v(t)}\left[\rho\d\widetilde{W}_x^S(t)+\sqrt{1-\rho^2}\d\widetilde{W}_v(t)\right].
\end{align*}
Setting $\kappa^*:=\kappa-\gamma\rho$, $\bar{v}^*:=\kappa\bar{v}/\kappa^*$, ${W}_x^S(t):=\widetilde{W}_x^S(t)$, and $W_v^S(t):=\rho\widetilde{W}_x^S(t)+\sqrt{1-\rho^2}\widetilde{W}_v(t)$ the proof is complete.
\end{proof}

\subsection{Semi-analytic pricing formula}
\label{ap: analyticPricing}
\begin{result}[Moments of truncated standard normal distribution]
\label{res: MomentsTruncatedStandardNormal}
Let $\xi\sim \N(0,1)$ and $a,b\in [-\infty,+\infty]$, $a < b$. Then, the recursive expression for:
\begin{equation*}
    m_i(a,b):=\E[\xi^i|a \leq \xi \leq b],
\end{equation*}
the $i$-th moment of the truncated standard normal distribution $\xi|a \leq \xi \leq b$, reads:
\begin{equation*}
    m_i(a,b)=(i-1)m_{i-2}(a,b)-\frac{b^{i-1}f_{\xi}(b)-a^{i-1}f_{\xi}(a)}{F_{\xi}(b)-F_{\xi}(a)}, \qquad i\in\mathbb{N}\backslash \{0\},
\end{equation*}
where $m_{-1}(a,b):=0$, $m_0(a,b):=1$, and $f_{\xi}$ and $F_{\xi}$ are the PDF and the CDF of $\xi$, respectively.
\end{result}
\begin{result}[Expectation of polynomial of truncated normal distribution]
\label{res: IntegralPolyNormal}
Let $p(x)=\sum_{i=0}^{M-1} \alpha_i x^i$ be a $(M-1)$-degree polynomial and let $\xi\sim\N(0,1)$, with $f_\xi$, $F_\xi$ its PDF and CDF, respectively. Then, for any $a,b\in[-\infty,+\infty]$ with $a < b$, the following holds:
\begin{align}
    \int_a^b p(x)f_\xi(x)\d x = \sum_{i=0}^{M-1} \alpha_i m_i(a,b) \left(F_\xi(b) - F_\xi(a)\right),
\end{align}
with $m_i(a,b)$ as defined in \Cref{res: MomentsTruncatedStandardNormal}.
\end{result}
\begin{proof}
The proof immediately follows thanks to the following equalities:
\begin{equation}
    \begin{aligned}
    \int_a^b p(x)f_\xi(x)\d x &= \sum_{i=0}^{M-1}\alpha_i \int_a^b x^i f_\xi(x) \d x = \sum_{i=0}^{M-1}\alpha_i \E[\xi^i\1_{[a, b]}(\xi)]\\
    &=\sum_{i=0}^{M-1}\alpha_i \E[\xi^i|a \leq \xi \leq b] \cdot \P[{a\leq \xi\leq b}].
\end{aligned}
\end{equation}
\end{proof}
\begin{proof}[Proof of Proposition \ref{prop: semianalyticPrice}]
The approximation $\g$ is strictly increasing in the domain of interest. Then, setting $c_K= \g^{-1}(K)$, we have:
\begin{equation}
\begin{aligned}
    \frac{\Tilde{V}_\omega(t_0)}{C} &= \int_{-\infty}^{+\infty} \max\Big(\omega\big(\g(x)-K\big),0\Big) f_{\xi}(x) \d x\\
    &=\int_{\omega c_K}^{+\infty}  \omega\big( \g(\omega y) - K\big) f_{\xi}(\omega y) \d y\\
    &=\omega \left(\int_{\omega c_K}^{+\infty} \g(\omega y)f_{\xi}(y) \d y - K\P[\xi>\omega c_K]\right),
\end{aligned}    
\end{equation}
where the first equality holds by definition of expectation, the second one relies on a suitable change of variable ($y=-x$) and the last one holds thanks to the even symmetry of $f_{\xi}$.
We define the integral $I_\omega(c_K)$ as:
\begin{equation*}
    I_\omega(c_K):=\int_{\omega c_K}^{+\infty} \g(\omega x)f_{\xi}(x) \d x,
\end{equation*}
and using the definition of $\g$ as a piecewise polynomial, we get:
\begin{equation}
\label{eqn: DefinitionIntegral}
    I_\omega(c_K) = \int_{\omega c_K}^{-\xib \vee c_K} g_{-\omega}(\omega x)f_\xi(x)\d x + \int_{-\xib \vee c_K}^{\xib \vee c_K} g_{M}(\omega x)f_\xi(x)\d x + \int_{\xib \vee c_K}^{+\infty} g_{\omega}(\omega x)f_\xi(x)\d x.
\end{equation}
The thesis follows by applying \Cref{res: IntegralPolyNormal} at each term in (\ref{eqn: DefinitionIntegral}) and exploiting the definition of $F_\xi$.
\end{proof}

\subsection{Almost Exact Simulation from the Heston Model}
\label{ap: ExactSimulationHeston}

In a MC framework, the most common scheme employed in the industry is the Euler-Maruyama discretization of the system of SDEs which describes the underlying process dynamics. For the stochastic volatility model of Heston, such a scheme can be improved, allowing for an \emph{exact} simulation of the variance process $v(t)$ (see (\ref{eqn: VolatilityDynamics})), as shown in \citep*{ExactSimulationHeston}. This results in increased accuracy, and avoids numerical issues due to the theoretical non-negativity of the process $v(t)$, leading to the so-called \emph{almost exact simulation} of the Heston model \cite{TheBook}.

\begin{result}[Almost exact simulation from the Heston Model]
\label{res: AESimulation}
Given $X(t):=\log S(t)$, its dynamics\footnote{Under the risk-neutral measure, $\Q$. However, the same scheme applies under the underlying process measure $\Q^S$, with only a minor difference, i.e., $k_1:=\left(\rho\kappa/{\gamma}+1/2\right)\Delta t-{\rho}/{\gamma}$.} between the consequent times $t_i$ and $t_{i+1}$ is discretized with the following scheme:
\begin{equation}
\label{eqn: ExactSimulationHeston}
    \begin{aligned}
    x_{i+1}&\approx x_{i}+k_0+k_1v_{i}+k_2v_{i+1}+\sqrt{k_3v_{i}}\xi,\\
    v_{i+1}&= \bar c \chi^2(\delta,\bar \kappa v_{i})
    \end{aligned}
\end{equation}
with the quantities:
\begin{align*}
    \Delta t&:=t_{i+1}-t_1,
	&\delta&:=\frac{4\kappa\bar{v}}{\gamma^2},
	&\bar c&:=\frac{\bar{v}}{\delta}(1-\e^{-\kappa\Delta t}),
	&\bar \kappa&:=\bar c ^{-1}\e^{-\kappa \Delta t},
\end{align*}
the noncentral chi-squared
random variable $\chi^2(\delta,\eta)$ with \index{degrees of freedom parameter}$\delta$ degrees of freedom and non-centrality parameter $\eta$, and $\xi\sim\N(0,1)$. The remaining constants are defined as:
\begin{align*}
	k_0&:=\bigg(r-\frac{\rho}{\gamma}\kappa\bar{v}\bigg)\Delta t, & k_1&:=\left(\frac{\rho\kappa}{\gamma}-\frac12\right)\Delta t-\frac{\rho}{\gamma},
	& k_2&:=\frac{\rho}{\gamma}, & k_3&:=(1-\rho^2)\Delta t.
\end{align*}
\end{result}

\begin{proof}[Derivation]
Given $X(t)=\log S(t)$, by applying It\^o's Lemma and Cholesky decomposition on the dynamics in (\ref{eqn: StockDynamics}) and (\ref{eqn: VolatilityDynamics}) , we get:
\begin{align}
    \label{eqn: LogStockDynamicsIndepGeneric}
	\d X(t)&=\left(r-\frac12v(t)\right)\dt+\sqrt{v(t)}\left[\rho\d\widetilde{W}_v(t)+\sqrt{1-\rho^2}\d\widetilde{W}_x(t)\right],\\
    \label{eqn: VolatilityDynamicsIndepGeneric}
    \d v(t)&=\kappa\left(\bar{v}-v(t)\right)\dt+\gamma\sqrt{v(t)}\d\widetilde{W}_v(t),
\end{align}
where $\widetilde{W}_x(t)$ and $\widetilde{W}_v(t)$ are independent BMs.

By integrating (\ref{eqn: LogStockDynamicsIndepGeneric}) and (\ref{eqn: VolatilityDynamicsIndepGeneric}) in a
the time interval $[t_{i},t_{i+1}]$, the following
discretization scheme is obtained:
\begin{align}
\label{eqn: AEDiscr1}
	x_{i+1}&=x_{i}+\int_{t_{i}}^{t_{i+1}}\bigg(r-\frac12v(t)\bigg)\dt+\rho\; {\int_{t_{i}}^{t_{i+1}}\sqrt{v(t)}\d\widetilde{W}_v(t)}+\sqrt{1-\rho^2}\int_{t_{i}}^{t_{i+1}}\sqrt{v(t)}\d\widetilde{W}_x(t),\\
	\label{eqn: AEDiscr2}
	v_{i+1}&=v_{i}+\kappa\int_{t_{i}}^{t_{i+1}}\left(\hat{v}-v(t)\right)\dt+\gamma\; {\int_{t_{i}}^{t_{i+1}}\sqrt{v(t)}\d\widetilde{W}_v(t)},
\end{align}
where $x_i:=X(t_i)$, $x_{i+1}:=X(t_{i+1})$, $v_i:=v(t_i)$, $v_{i+1}:=v(t_{i+1})$.

Given $v_i$, the variance $v_{i+1}$ is distributed as a suitable scaled noncentral chi-squared distribution \cite{TheBook}. Therefore, we substitute ${\int_{t_{i}}^{t_{i+1}}\sqrt{v(t)}\d\widetilde{W}^\lambda_v(t)}$ in (\ref{eqn: AEDiscr1}) using (\ref{eqn: AEDiscr2}), ending up with:
\begin{equation*}
\begin{aligned}
    x_{i+1}&=x_{i}+\int_{t_{i}}^{t_{i+1}}\left(r-\frac12v(t)\right)\dt+\frac{\rho}{\gamma}\left(v_{i+1}-v_{i}-\kappa\int_{t_{i}}^{t_{i+1}}\left(\bar{v}-v(t)\right)\dt\right)\nonumber\\
	&+\sqrt{1-\rho^2}\int_{t_{i}}^{t_{i+1}}\sqrt{v(t)}\d\widetilde{W}_x(t).
\end{aligned}
\end{equation*}
We approximate the integrals in the expression above employing the left integration boundary values of the integrand, as in the Euler-Maruyama discretization scheme.
The scheme (\ref{eqn: ExactSimulationHeston}) follows collecting the terms and employing the property  $\widetilde{W}_x(t_{i+1})-\widetilde{W}_x(t_{i})\overset{\d}{=}\sqrt{\Delta
t}\xi$, with $\xi\sim\mathcal{N}(0,1)$ and $\Delta t:=t_{i+1} - t_i$.
\end{proof}

\subsection{SC error analysis for Chebyshev interpolation}
\label{ap: ErrorAnalysis}

The two following lemmas are useful to show that the conditional complex ChF $\phi_{A|\V}(z)=\E[e^{izA}|\V]$ is an analytic function of $z\in\C$. The first one provides the law of the conditional stock-price distribution, whereas the second one is meant to give algebraic bounds for the target function $A(S)$.

\begin{lem}[Conditional distribution under Heston]
\label{lem: ConditionalLognormalStock}
Let $S(t)$ be the solution at time $t$ of \Cref{eqn: StockDynamics} and $I_v(t):=\int_{t_0}^{t}v(\tau)\d\tau$, with $v$ driven by the dynamics in \Cref{eqn: VolatilityDynamics}. Then, the following equality in distribution holds:
\begin{equation*}
\label{eqn: ConditionalStockDistribution}
    S(t)\Big|I_v(t_0,t)\overset{\d}{=} \exp\Big(\mu(I_v(t_0,t))+\sigma(I_v(t_0,t))\xi\Big),
\end{equation*}
with $\xi\sim\N(0,1)$, $\mu$ and $\sigma$ defined as $\mu(y):=\log S(t_0) + r(t - t_0)-y/2$ and $\sigma(y):=\sqrt{y}$.
Furthermore, for any $k=0,1,\dots$, the following holds:
\begin{equation}
\label{eqn: MomentsLognormal}
    \E[S(t)^k|I_v(t)]=\exp{\left(k\mu(I_v(t))+\frac12 k^2 \sigma^2(I_v(t))\right)}
\end{equation}
In other words, the stock price given the time-integral of the variance process $I_v$ is log-normally distributed, with parameters dependent on the time-integral $I_v$, and its moments up to any order are given as in \Cref{eqn: MomentsLognormal}.
\end{lem}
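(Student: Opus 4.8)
The natural route is to reduce the statement to the Gaussianity of an It\^o integral with a (conditionally) deterministic integrand. First I would pass to the log-price $X(t):=\log S(t)$ and apply It\^o's lemma to \Cref{eqn: StockDynamics}, which gives $\d X(t)=\big(r-\tfrac12 v(t)\big)\dt+\sqrt{v(t)}\,\d W_x(t)$; integrating on $[t_0,t]$ yields
\begin{equation*}
    X(t)=\log S(t_0)+r(t-t_0)-\tfrac12 I_v(t_0,t)+\int_{t_0}^{t}\sqrt{v(\tau)}\,\d W_x(\tau).
\end{equation*}
Hence the only genuinely stochastic term beyond the path-measurable drift is the continuous local martingale $M(t):=\int_{t_0}^{t}\sqrt{v(\tau)}\,\d W_x(\tau)$, whose quadratic variation is exactly $\langle M\rangle_t=\int_{t_0}^{t}v(\tau)\,\d\tau=I_v(t_0,t)$. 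The target $\mu(y)=\log S(t_0)+r(t-t_0)-y/2$ already matches the deterministic part evaluated at $y=I_v(t_0,t)$, so everything hinges on identifying the conditional law of $M(t)$.

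Next I would condition on the $\sigma$-algebra $\mathcal G$ generated by the whole path of the variance process $v$ on $[t_0,t]$ (equivalently, by the path $\tau\mapsto I_v(t_0,\tau)$). Conditionally on $\mathcal G$ the integrand $\sqrt{v(\cdot)}$ is deterministic, so the (part of the) integral driven by the Brownian motion independent of $v$ is a Wiener integral, hence centred Gaussian with variance $\int_{t_0}^{t}v(\tau)\,\d\tau=I_v(t_0,t)$; said otherwise, by Dambis--Dubins--Schwarz one may write $M(t)=B_{\langle M\rangle_t}=B_{I_v(t_0,t)}$ with $B$ a Brownian motion built from the innovation orthogonal to $v$. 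Consequently $X(t)\mid\mathcal G$ is Gaussian with mean $\mu(I_v(t_0,t))$ and variance $\sigma^2(I_v(t_0,t))=I_v(t_0,t)$, i.e.\ $S(t)\mid\mathcal G$ is log-normal with exactly the stated parameters. Since this conditional law depends on $\mathcal G$ only through the single functional $I_v(t_0,t)$, the tower property then delivers the same conditional law given $I_v(t_0,t)$ alone, which is precisely the claimed equality in distribution with $\xi\sim\N(0,1)$.

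For \Cref{eqn: MomentsLognormal} I would just invoke the Gaussian moment generating function: if $Y\overset{\d}{=}\exp(\mu+\sigma\xi)$ with $\xi\sim\N(0,1)$, then $\E[Y^k]=e^{k\mu}\,\E[e^{k\sigma\xi}]=\exp\!\big(k\mu+\tfrac12 k^2\sigma^2\big)$; substituting $\mu=\mu(I_v(t))$ and $\sigma^2(I_v(t))=I_v(t)$ and taking conditional expectations gives the formula. The step I expect to demand the most care is the conditioning argument: one must justify rigorously that, given the variance path, the Brownian innovation entering $M(t)$ which is orthogonal to that path produces an honest Gaussian of the correct variance — this is where the correlation structure between $W_x$ and $W_v$ has to be handled explicitly (via the Cholesky split of $W_x$, with the $W_v$-driven component absorbed into the $\mathcal G$-measurable drift) — and that the resulting conditional law is measurable with respect to, and depends only on, $I_v(t_0,t)$, so that the reduction from conditioning on the full path to conditioning on the scalar $I_v(t_0,t)$ is legitimate. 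The remaining computations (It\^o's lemma, the Gaussian MGF) are routine.
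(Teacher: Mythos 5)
Your overall route is the same as the paper's: write $S(t)$ in exponential (integrated) form, argue that conditionally the stochastic integral $\int_{t_0}^{t}\sqrt{v(\tau)}\,\d W_x(\tau)$ is centred Gaussian with variance $I_v(t_0,t)$, conclude lognormality, and get the moments from the Gaussian moment generating function. Your two refinements (conditioning first on the whole variance path and then reducing to $I_v$ by the tower property because the conditional law depends on the path only through $I_v$, and the explicit MGF computation) are exactly the steps the paper leaves implicit, and they are correct whenever the conditional Gaussianity claim itself holds.

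The genuine problem is in the step where you claim to handle the correlation ``via the Cholesky split of $W_x$, with the $W_v$-driven component absorbed into the $\mathcal G$-measurable drift'' and still conclude that $X(t)\mid\mathcal G$ has mean $\mu(I_v(t_0,t))$ and variance $I_v(t_0,t)$. These two assertions are incompatible when $\rho\neq 0$: writing $W_x=\rho W_v+\sqrt{1-\rho^2}\,\widetilde W_x$ gives $\int_{t_0}^{t}\sqrt{v}\,\d W_x=\rho\int_{t_0}^{t}\sqrt{v}\,\d W_v+\sqrt{1-\rho^2}\int_{t_0}^{t}\sqrt{v}\,\d\widetilde W_x$, and by \Cref{eqn: VolatilityDynamics} the first term equals $\frac{1}{\gamma}\big(v(t)-v(t_0)-\kappa\bar v\,(t-t_0)+\kappa I_v(t_0,t)\big)$, which is $\mathcal G$-measurable but not a function of $I_v(t_0,t)$ alone. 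So, given $\mathcal G$, the log-price is Gaussian with variance $(1-\rho^2)I_v(t_0,t)$ (not $I_v$) and a mean shifted by the above $W_v$-term; consequently the conditional law does \emph{not} depend on $\mathcal G$ only through $I_v(t_0,t)$, and your tower-property reduction breaks down. (The Dambis--Dubins--Schwarz remark does not rescue this, since the DDS Brownian motion is not independent of the variance path.) Your argument closes as written only for $\rho=0$, i.e.\ when $W_x$ is independent of $v$, which is also the implicit assumption behind the paper's own terser proof (it simply treats the integrand of the $\d W_x$-integral as deterministic once one conditions). To match the paper you should either argue as it does -- conditioning on the variance information and treating the $W_x$-integral as a Wiener integral of a fixed path, i.e.\ effectively the uncorrelated case -- or, if you insist on carrying $\rho$ explicitly, state the conditional law you actually obtain (mean containing the $v(t)$-dependent shift, variance $(1-\rho^2)I_v$) and note that the lemma's stated parameters correspond to dropping the leverage term.
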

\begin{proof}
Writing (\ref{eqn: StockDynamics}) in integral form we get:
\begin{equation*}
    S(t)=S(t_0)\exp \left(r(t - t_0)-\frac{1}{2}\int_{t_0}^{t}v(\tau)\d\tau+\int_{t_0}^{t}\sqrt{v(\tau)}\dW_x(\tau)\right).
\end{equation*}
By considering the conditional distribution $S(t)|I_v(t)$ (instead of $S(t)$) the only source of randomness is given by the It\^o's integral (and it is due to the presence of the Brownian motion $W_x(t)$). The thesis follows since the It\^o's integral of a deterministic argument is normally distributed with zero mean and variance given by the time integral of the argument squared (in the same interval). Therefore, $S(t)|I_v(t)$ is log-normally distributed, with moments given as in (\ref{eqn: MomentsLognormal}).
\end{proof}

\begin{lem}[Algebraic bounds]
\label{lem: AlgebraicBounds}
Let us consider $\{s_1,\dots,s_N\}$, with $s_n>0$ for each $n=1,\dots, N$. Then, for any $k=1,2,\dots$, we have:
\begin{enumerate}
    \item $\big(\sum_n s_n\big)^k \leq 2^{(N-1)(k-1)}\sum_n s_n^k$.
    \item $\big(\min_n s_n\big)^k \leq s_{n^*}^k$ for any $n^*=1,\dots,N$.
\end{enumerate}
\end{lem}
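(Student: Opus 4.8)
The plan is to dispatch the second bound immediately and reduce the first one to a standard convexity estimate. Both statements are elementary consequences of the monotonicity and convexity of the map $x\mapsto x^k$ on $[0,\infty)$ for $k\ge 1$, so I do not anticipate a real obstacle; the only mildly delicate point is cosmetic, explained at the end.

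For item 2, by definition of the minimum we have $\min_n s_n\le s_{n^*}$ for every index $n^*\in\{1,\dots,N\}$, and both sides are strictly positive; since $x\mapsto x^k$ is increasing on $(0,\infty)$, raising to the $k$-th power preserves the inequality, giving $\big(\min_n s_n\big)^k\le s_{n^*}^k$.

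For item 1, the key observation is that for $k\ge 1$ the function $x\mapsto x^k$ is convex on $[0,\infty)$. Applying Jensen's inequality to the uniform average of $s_1,\dots,s_N$ yields
\[
\left(\frac{1}{N}\sum_{n=1}^N s_n\right)^{\!k}\le \frac{1}{N}\sum_{n=1}^N s_n^k ,
\]
equivalently $\big(\sum_n s_n\big)^k\le N^{k-1}\sum_n s_n^k$. It then remains only to pass from the sharp constant $N^{k-1}$ to the stated (weaker but more convenient) constant $2^{(N-1)(k-1)}$; this follows from the elementary inequality $N\le 2^{N-1}$, valid for all $N\ge 1$ (immediate by induction, the base case $N=1$ being $1\le 1$), raised to the nonnegative power $k-1$.

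Alternatively, one can avoid invoking Jensen altogether: first establish the two-term inequality $(a+b)^k\le 2^{k-1}(a^k+b^k)$ — which is just the midpoint convexity inequality for $x\mapsto x^k$ — and then induct on $N$, peeling off one summand at a time, using $2^{(N-2)(k-1)}\ge 1$ to absorb the residual factor. Either route is routine; the choice of the constant $2^{(N-1)(k-1)}$ here is dictated solely by how this lemma is later combined with the moment bounds for $A(S)$ (a sum, resp.\ a minimum, of the $N$ stock values) in the proof that $\phi_{A\mid\V}$ is entire, so no optimization of the constant is needed.
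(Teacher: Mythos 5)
Your proof is correct, and your primary route for item 1 differs from the paper's. The paper establishes the two-term inequality $(a+b)^k\le 2^{k-1}(a^k+b^k)$ and iterates it $N-1$ times, peeling off one summand per step and then uniformly bounding the resulting factors $2^{n(k-1)}$ by $2^{(N-1)(k-1)}$ --- i.e.\ exactly the ``alternative'' you sketch at the end. Your main argument instead goes through Jensen's inequality for the convex map $x\mapsto x^k$, which yields the sharper bound $\big(\sum_n s_n\big)^k\le N^{k-1}\sum_n s_n^k$, and then weakens the constant via $N\le 2^{N-1}$ raised to the power $k-1$. What your route buys is a one-line derivation with the optimal intermediate constant $N^{k-1}$ (attained when all $s_n$ coincide); what the paper's route buys is avoiding any appeal to Jensen, staying entirely at the level of the elementary two-point convexity estimate. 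As you correctly observe, the choice is immaterial downstream: in the proof that $\phi_{A\mid\V}$ is entire only the finiteness of the conditional moments and their at-most-exponential growth in $k$ matter, and a constant of the form $C^{k}$ is harmless in the limit \eqref{eqn: LimitMomentsChF}. Your treatment of item 2 (monotonicity of $x\mapsto x^k$ on $(0,\infty)$) matches the paper, which simply calls it obvious.
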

\begin{proof}
The second thesis is obvious. We prove here the first one.
We recall that in general, given $a,b>0$ and any $k = 1,2,\dots$, the following inequality holds: 
    \begin{equation}
    \label{eqn: IneqConvexPower}
        (a+b)^k\leq 2^{k-1}(a^k + b^k).
    \end{equation}
    Then, applying (\ref{eqn: IneqConvexPower}) $N-1$ times we get:
    \begin{equation*}
        \left(\sum_{n=1}^N s_n\right)^k \leq 2^{k-1}\left(s_1^k + \left(\sum_{n=2}^N s_n\right)^k\right)\leq \cdots\leq 2^{(N-1)(k-1)}s_N^k + \sum_{n=1}^{N-1}2^{n(k-1)}s_n^k, 
    \end{equation*}
    which can be further bounded by:
    \begin{equation*}
        \left(\sum_{n=1}^N s_n\right)^k \leq 2^{(N-1)(k-1)}s_N^k + \sum_{n=1}^{N-1}2^{(N-1)(k-1)}s_n^k = \sum_{n=1}^{N}2^{(N-1)(k-1)}s_n^k.
    \end{equation*}
\end{proof}

We have all the ingredients to prove \Cref{prop: EntireConditionalChF}.

\begin{proof}[Proof of \Cref{prop: EntireConditionalChF}]
To exploit the characterization for entire ChFs in \Cref{thm: CharacterizationEntireChF} we need to show the finiteness of each absolute moment as well as that \Cref{eqn: LimitMomentsChF} is satisfied. Both the conditions can be proved using \Cref{lem: ConditionalLognormalStock} and \Cref{lem: AlgebraicBounds}. For $k=0,1,\dots$, we consider the two cases:
\begin{enumerate}
    \item If $A=\frac{1}{N}\sum_n S(t_n)$, then thanks to \Cref{lem: AlgebraicBounds} we have:
    \begin{align*}
        \E[|A|^k|\V]= \frac{1}{N^k}\E\Big[\Big(\sum_n S(t_n)\Big)^k\Big|\V\Big] & \leq \frac{2^{(N-1)(k-1)}}{N^k}\E\Big[\sum_n S(t_n)^k\Big|\V\Big]\\
        & = \frac{2^{(N-1)(k-1)}}{N^k}\sum_n\E[S(t_n)^k|\V],
    \end{align*}
    whereas from \Cref{lem: ConditionalLognormalStock} follows:
    \begin{align}
        \E[|A|^k|\V]&\leq \frac{2^{(N-1)(k-1)}}{N^k}\sum_n\E[S(t_n)^k|\V]\\
        \label{eqn: BoundAsianMoments}
        &= \frac{2^{(N-1)(k-1)}}{N^k}\sum_n \exp\left(k\mu_n(\V)+\frac12 k^2 \sigma^2_n(\V)\right),
    \end{align}
    where $\mu_n(\V):=\mu(I_v(t_n))$ and $\sigma_n(\V):=\sigma(I_v(t_n))$, $n=1,\dots,N$.
    \item If $A=\min_n S(t_n)$, then we immediately have:
    \begin{align}
        \E[|A|^k|\V]&\leq \E[S(t_{n^*})^k|\V]\\
        \label{eqn: BoundLookbackMoments}
        &=  \exp\left(k\mu_{n^*}(\V)+\frac12 k^2 \sigma^2_{n^*}(\V)\right),
    \end{align}
    for an arbitrary $n^*=1,\dots,N$.
\end{enumerate}
The finiteness of the absolute moments up to any order follows directly from \Cref{eqn: BoundAsianMoments} and \Cref{eqn: BoundLookbackMoments} respectively, since $I_v(t_n)$ are finite (indeed, they are time-integrals on compact intervals of continuous paths).

Eventually, thanks to Jensen's inequality we have $|\E[A|\V]|^k\leq \E[|A|^k|\V]$. This, together with the at-most exponential growth (in $k$) of the absolute moments of $A|\V$, ensures that the limit in \Cref{eqn: LimitMomentsChF} holds. Then, by \Cref{thm: CharacterizationEntireChF}, $\phi_{A|\V}(z)$ is an entire function of the complex variable $z\in\C$.
\end{proof}

\begin{proof}[Proof of \Cref{prop: ChFAnalytic}]
\label{pr: ChFAnalytic}
    The goal here is to apply Morera's theorem. Hence, let $\gamma\in \mathcal{S}_{y^*}$ be any piecewise $C^1$ closed curve in the strip $\mathcal{S}_{y^*}$. Then:
\begin{align*}
    \int_\gamma \phi_A(z)\d z &\overset{(\ref{eqn: ChFDecomposition})}{=}\int_\gamma \int_{\Omega_\V} \phi_{A|\V=\v}(z)\d F_\V(\v) \d z\\
    &\overset{\text{Fubini}}{=}\int_{\Omega_\V} \int_\gamma \phi_{A|\V=\v}(z)\d z\d F_\V(\v)\overset{\text{Cauchy}}{=} 0,
\end{align*}
where in the first equality we exploited the representation of the unconditional ChF $\phi_A$ in terms of conditional ChFs $\phi_{A|\V}$, in the second equality we use Fubini's theorem to exchange the order of integration, and eventually in the last equation we employ the Cauchy's integral theorem on $\int_\gamma \phi_{A|\V=\v}(z)\d z$.
\end{proof}

\end{document}